\newcommand*{\myproofname}{My proof}
\newenvironment{myproof}[1][\myproofname]{\begin{proof}[#1]}{\end{proof}}
\newtheoremstyle{theoremc}
{4pt}
{4pt}
{\itshape}
{0pt}
{\bfseries}
{.}
{ }
{\thmname{#1}\thmnumber{ #2}\thmnote{ (#3)}}
\theoremstyle{theoremc}
\newtheorem{observation}{Observation}
\newtheorem{claim}{Claim}
\newtheorem{example}{Example}
\newtheorem{theorem}{Theorem}
\newtheorem*{theorem*}{Theorem}
\newtheorem{definition}{Definition}
\newtheorem{proposition}{Proposition}
\newtheorem{corollary}{Corollary}
\newtheorem{lemma}{Lemma}
\titlespacing{\paragraph}{%
	0pt}{
	0.2\baselineskip}{
	1em}
\DeclareMathAlphabet\mathcal{OMS}{cmsy}{m}{n}
\SetMathAlphabet\mathcal{bold}{OMS}{cmsy}{b}{n}
\newcommand{\pdnote}[1]{{\color{black}{#1}}}
\newcommand{\itnote}[1]{{\color{black}{#1}}}
\begin{document}

\title{Simple versus Optimal Contracts}

\author{Paul D\"utting\thanks{Department of Mathematics, London School of Economics, Houghton Street, London WC2A 2AE, UK. Email: {\tt p.d.duetting@lse.ac.uk}.} \and Tim Roughgarden\thanks{Department of Computer Science, Columbia University, 500 West 120 Street, New York, NY 10027. Email: {\tt tr@cs.columbia.edu}.} \and Inbal Talgam-Cohen\thanks{Computer Science Department, Technion -- Israel Institute of Technology, Haifa, Israel 3200003. Email: {\tt italgam@cs.technion.ac.il}.}}





\date{}

\maketitle

\begin{abstract}
We consider the classic principal-agent model of contract theory, in which a principal designs an outcome-dependent compensation scheme to incentivize an agent to take a costly and unobservable action. When all of the model parameters---including the full distribution over principal rewards resulting from each agent action---are known to the designer, an optimal contract can in principle be computed by linear programming. In addition to their demanding informational requirements, however, such optimal contracts are often complex and unintuitive, and do not resemble contracts used in practice. 

This paper examines contract theory through the theoretical computer science lens, with the goal of developing novel theory to explain and justify the prevalence of relatively simple contracts, such as linear (pure commission) contracts.  First, we consider the case where the principal knows only the first moment of each action's reward distribution, and we prove that linear contracts are guaranteed to be worst-case optimal, ranging over all reward distributions consistent with the given moments.  Second, we study linear contracts from a worst-case approximation perspective, and prove several tight parameterized approximation bounds.
\end{abstract}




\section{Introduction}
\label{sec:intro}
\paragraph{Classic contract theory.}
Many fundamental economic interactions can be phrased in terms of two
parties, a \emph{principal} and an \emph{agent}, where the agent
chooses an action and this imposes some (negative or positive)
externality on the principal. Naturally, the principal will want to
influence which action the agent chooses. This influence will often
take the form of a \emph{contract}, in which the principal compensates
the agent contingent on either the actions or their outcomes; with the
more challenging and realistic scenario being the one in which the
principal cannot directly observe the agent's chosen action. Instead,
the principal can only observe a stochastic outcome that results from
the agent's action.

For example, consider a salesperson (agent) working for a company (principal) producing a
range of products with different revenue margins. The salesperson
chooses the amount of effort to spend on promoting the various products.
The company may not be able to directly observe effort, but can
presumably track the number of orders the salesperson generates.
Assuming this number is correlated with the salesperson's actions (the
harder he works, the more sales of higher margin products he
generates),%
\footnote{For clarity we shall address the agent as male and the principal as female.} 
it may make sense for the company to base his pay on
sales---i.e., to put him on commission---to induce him to expend the
appropriate level of effort.
Another example is the interaction between a car owner (agent) and an insurance 
company (principal). The car owner's behavior influences the risks his car is exposed 
to. If realized risks are borne entirely by the insurance company, the owner might not 
bother to maintain the asset properly (e.g., might park in a bad neighborhood).  
This is an example of the well-known {\em moral hazard} problem.  
Typically, such bad behavior is difficult to contract against directly.  
But because this behavior imposes an externality on the insurance company, 
companies have an interest in designing contracts that guard against~it.

Both of these examples fall under the umbrella of the \emph{hidden action 
principal-agent model}, arguably the central model of \emph{contract theory}, 
which in turn is an important and well-developed area within microeconomics.%
\footnote{For example, the 2016 Nobel Prize in economics was awarded to 
Oliver Hart and Bengt Holmstr\"om for their contributions to contract 
theory. The prize announcement stresses the importance of contracts in practice, describing modern economies as ``held together 
by innumerable contracts'', and explains how the theoretical tools created by Hart and Holmstr\"om are invaluable to their understanding.}
Perhaps surprisingly, this area has received far less attention from the 
theoretical computer science %
community than auction and mechanism design, despite its strong ties to 
optimization, and potential applications ranging from crowdsourcing 
platforms~\citep{HoSV16}, through blockchain-based smart 
contracts~\citep{CongH18}, to incentivizing quality healthcare~\citep{BastaniBBGJ18}.

\paragraph{The model.}
Every instance of the model can be described by a
pair $(A_n,\Omega_m)$ of $n$ \emph{actions} and $m$ \emph{outcomes}. In the 
salesperson example, the actions are the levels of effort and the outcomes 
are the revenues from ordered products.
As in this example, we usually identify the (abstract) outcomes with the (numerical) 
rewards to the principal associated with them. 
The agent chooses an action $a_i\in A_n$, unobservable to the principal, which 
incurs a \emph{cost} $c_i\ge 0$ for the agent, and results
in a distribution $F_i$ with expectation $R_i$ over the outcomes in
$\Omega_m$.
The realized outcome $x_j\ge 0$ is awarded to the principal. 

The principal designs a contract that specifies a \emph{payment} $t(x_j)\ge 0$ to the 
agent for every outcome~$x_j$ (since the outcomes, unlike the actions, are 
observable to the principal). This induces an expected payment 
$T_i=\mathbb{E}_{x_j\sim F_i}[t(x_j)]$ for every action $a_i$. The agent then chooses 
the action that maximizes his expected \emph{utility} $T_i-c_i$ over all actions 
(``incentive compatibility''), or opts out of the contract if no action with non-negative 
expected utility exists (``individual rationality''). 

As the design goal, the principal wishes to maximize her expected 
\emph{payoff}: the expected outcome $R_i$ minus the agent's expected 
payment $T_i$, where $a_i$ is the action chosen by the agent. Therefore 
contract design immediately translates into the following \emph{optimization 
problem}: given $(A_n,\Omega_m)$, find 
a payment vector $t$ that maximizes $R_i-T_i$, where $a_i$ is incentive 
compatible and individually rational for the agent. 
We focus on the \emph{limited liability} case, where the contract's
payments $t$ are constrained to be non-negative (i.e., money only 
flows from the principal to the agent).%
\footnote{Without some such assumption there is a simple but 
unsatisfying optimal solution for the principal when the agent 
is risk-neutral: simply sell the project to the agent, at a price 
just below the maximum expected welfare $R_i-c_i$ the agent 
can generate by choosing an action. The agent may as well accept 
(and then select the welfare-maximizing action), and the 
principal pockets essentially the full welfare. This solution is 
incompatible with practical principal-agent settings, e.g., a 
salesperson does not typically buy the company from its owner.}
A detailed description of the model appears in Section \ref{sec:model}.

\paragraph{Optimal contracts and their issues.} \itnote{With perfect knowledge of the distributions mapping actions to outcomes,} it is straightforward
to solve the optimization problem associated with finding the optimal
contract for the principal, by solving one
linear program per action. Each linear program minimizes the expected
payment to the agent subject to the constraint that he prefers this
action to any other---including opting out of the contract---and
subject to the payments all being non-negative. The best of these
linear programs (achieving the highest expected payoff for the
principal) gives the best action to incentivize and an optimal
contract incentivizing it.  

In addition to their demanding informational requirements, however,
such optimal contracts are often complex and unintuitive, and do not 
resemble contracts used in practice.\footnote{A similar issue arises in 
auction theory: linear programs can be used to characterize optimal 
auctions, which often turn out to be impractically complicated and 
unintuitive (see, e.g., \cite{Hartline17}).} Example~\ref{ex:bad-example} 
demonstrates this critique---it is not clear how to interpret the optimal 
contract nor how to justify it to a non-expert. The optimal payment 
scheme in this example is not even \emph{monotone}, i.e., a better 
outcome for the principal can result in a lower payment for the agent! 
In the salesperson example, this would create an incentive for the 
salesperson to manipulate the outcome, for example by hiding or canceling orders.

\begin{example} 
	\label{ex:bad-example}
	\pdnote{There} are $m = 6$ outcomes
        $x = (1, 1.1, 4.9, 5, 5.1, 5.2)$, 
        and $n = 4$ actions
        with the following outcome
        distributions and costs:
        $F_1 = (\sfrac{3}{8}, \sfrac{3}{8}, \sfrac{1}{4}, 0, 0, 0)$,
        $F_2 = (0, \sfrac{3}{8}, \sfrac{3}{8}, \sfrac{1}{4}, 0, 0)$,         $F_3 = (0, 0, \sfrac{3}{8}, \sfrac{3}{8}, \sfrac{1}{4}, 0)$,
        $F_4 = (0, 0, 0, \sfrac{3}{8}, \sfrac{3}{8}, \sfrac{1}{4})$,
        and $(c_1,c_2,c_3,c_4) = (0,1,2,2.2)$.
The LP-based approach shows that the optimal contract in this case incentivizes
action $a_3$ with payments 
$$
t \approx (0, 0, 0.15, 3.93, 2.04, 0).
$$ 
The analysis appears 
for completeness in Appendix~\ref{appx:example-analysis}.
\end{example}

\paragraph{Linear contracts as an alternative.}
Perhaps the simplest non-trivial contracts are \emph{linear}
contracts, where the principal commits to paying the
agent an $\alpha$-fraction of the realized outcome (i.e., 
payments are linear in the outcomes).  Unlike optimal contracts, linear
contracts are conceptually simple and
easy to explain; their payments are automatically monotone; and they are the most ubiquitous contract form in practice.%
\footnote{\label{ftnt:debt}To our knowledge, the only other common 
	contract form according to the economics literature is ``debt contracts,'' 
	which are similar to linear contracts except with zero payments 
	for a set of the lowest outcomes \cite{Hebert17}. 
	Our results do not qualitatively change for such contracts---see 
	Section \ref{sec:beyond-linear}.}
From an optimization standpoint, however, they can be suboptimal even in
simple settings, as the next example demonstrates:

\begin{example}
\pdnote{There} are $m = 2$ outcomes $x = (1,3)$, and $n = 2$ actions $a_1$ 
and $a_2$ with $F_1 = (1,0), c_1 = 0$ and $F_2 = (0,1), c_2 = \sfrac{4}{3}$, 
respectively.
The optimal contract incentivizes action $a_2$ with payments $t=(0,\sfrac{4}{3})$, 
resulting in expected payoff of $3-\sfrac{4}{3} = \sfrac{5}{3}$ for the principal. 
The maximum expected payoff
of any linear contract is~$1$ (regardless of which action is
incentivized).
\end{example}

\paragraph{Simple versus optimal contracts in the economic literature.}

The critique \itnote{that optimal contracts are unrealistically complex} is well known to economists. Already in 1987 Milgrom and Holmstr\"om \cite[p.~326]{HolmstromMilgrom87} wrote: %
\begin{quote}
``It is probably the great robustness of linear rules based on aggregates that accounts
for their popularity. That point is not made as effectively as we would like by our model;
we suspect that it cannot be made effectively in any traditional Bayesian model.''
\end{quote}
Indeed, for several decades the economic literature struggled to find the right tools to explain
the prevalence of simple, in particular, linear contracts in practice.

\itnote{A breakthrough was the recent paper by \citet{Carroll15}, in which he 
proposes ``a forthrightly non-Bayesian model of robustness'' (p.~537).
In this work, a key change to the standard principal-agent model is introduced: the set of actions 
available to the agent is \emph{completely unknown} to the principal. 
Because in this new setting no guarantee is possible, Carroll relaxes the new model by adding the assumption that \emph{some} set of actions $\mathcal A_0$ is fully known to the principal (that is, she is fully aware of the distributions the actions in $\mathcal A_0$ induce over outcomes as well as of their costs). Carroll discusses two possible ways in which to ``make sense of this combination of non-quantifiable and quantifiable uncertainty'' (i.e., completely unknown 
and known actions---see p.~546). The main result is that a linear contract is max-min optimal 
in the worst case over all possible sets of actions $\mathcal A$ that contain the known set $\mathcal A_0$ (where 
$\mathcal A\setminus \mathcal A_0$ can be anything). Carroll sees the main contribution not necessarily in a literal interpretation 
of the model, but rather in finding ``a formalization of a robustness property of linear 
contracts---a way in which one can make guarantees about the principal's payoff with very 
little information'' \cite[Sec.~2.1]{Carroll18}.}

\subsection{Our Contributions}

Our main goal is \emph{to initiate the study of simple contracts and their guarantees through 
the lens of theoretical computer science}.  We utilize tools and ideas from worst-case 
algorithmic analysis and
prior-independent auction design to make contributions in two main
directions, both justifying and informing the use of linear contracts.

\paragraph{Max-min robustness of linear contracts.}  Our first
contribution is a new notion of robustness for linear
contracts. 
Instead of assuming that there are completely unknown actions
available to the agent alongside a fully known action set, we assume
that the principal has partial knowledge of all actions---she knows
their costs and has \emph{moment information} about their
reward distributions. 
Assuming moment information is natural from an optimization perspective (the idea can at least be traced back to Scarf's seminal 1958 paper on distributionally-robust stochastic programming \cite{Scarf58}), and it's a standard assumption in prior-independent auction design (see, e.g., \cite{AzarDMW13,BandiBertsimas14,CarrascoEtAl17}).
The assumption is particularly natural in the context of contracts and the principal-agent model, where it is the minimal assumption that allows for optimization over linear contracts. With this assumption, the optimal linear contract is a natural contract to focus on, and our analysis shows that it is in fact \emph{max-min optimal} across all possible contracts (Theorem~\ref{thm:robustness} in Section~\ref{sec:robustness}).\footnote{Our max-min optimality result is conceptually related to work by Carroll~\cite{Carroll17}, Gravin and Lu~\cite{GravinL18}, and Bei et al.~\cite{BeiGLT19}. These papers show the max-min optimality of selling items separately when the marginals of all items are held fixed but the joint distribution is unknown (i.e., adversarial). Knowing the marginals is precisely what is needed for optimizing over mechanisms that sell each item separately.}
Our result thus offers an alternative
formulation of the robustness property of linear contracts, in a
natural model of moment information that is easy to interpret.

\begin{theorem*}[See Section \ref{sec:robustness}]
For every outcome set, action set, action costs, and expected action rewards, 
a linear contract maximizes the principal's worst-case expected payoff, 
where the worst case is over all reward distributions with the given expectations.
\end{theorem*}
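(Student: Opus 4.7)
The plan is to prove the theorem via a ``max-T'' adversarial strategy combined with a specific choice of linear contract. For any contract $t$, let $\hat{t}$ denote its smallest concave upper envelope on $[0, \infty)$; since $t \ge 0$ we also have $\hat{t} \ge 0$, and in particular $\hat{t}(0) \ge 0$. Linear contracts $t_\alpha(x) = \alpha x$ are adversary-proof in the sense that $T_i = \alpha R_i$ depends only on $R_i$, so the worst-case payoff of $t_\alpha$ equals its deterministic payoff. It therefore suffices to show that for every contract $t$, some linear $t_\alpha$ achieves deterministic payoff at least the worst-case payoff of $t$.

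The first step identifies an adversary strategy that upper-bounds the worst-case payoff of $t$. For each action $i$, the supremum of $\mathbb{E}_{F_i}[t(X)]$ over distributions with mean $R_i$ is exactly $\hat{t}(R_i)$---a standard convex-analysis fact, attained (or approached in the limit) by a suitable two-point distribution. Using this strategy simultaneously for every $i$, the agent's utility for action $i$ becomes $\hat{t}(R_i) - c_i$, so she best-responds with some $i^* \in \arg\max_i (\hat{t}(R_i) - c_i)^+$ (or opts out, yielding payoff $0$). Hence the worst-case payoff of $t$ is at most $R_{i^*} - \hat{t}(R_{i^*})$.

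The second step picks $\alpha = \hat{t}(R_{i^*})/R_{i^*}$ and compares against $t_\alpha$. Since $\hat{t}$ is concave on $[0, \infty)$ with $\hat{t}(0) \ge 0$, the ratio $R \mapsto \hat{t}(R)/R$ is non-increasing on $(0, \infty)$; so for every action $j$ with $R_j \le R_{i^*}$ we have $\hat{t}(R_j) \ge \alpha R_j$, and then
\[
\alpha R_{i^*} - c_{i^*} \;=\; \hat{t}(R_{i^*}) - c_{i^*} \;\ge\; \hat{t}(R_j) - c_j \;\ge\; \alpha R_j - c_j,
\]
where the middle inequality uses the definition of $i^*$. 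So under $t_\alpha$ the agent weakly prefers $i^*$ to every $j$ with $R_j \le R_{i^*}$, which means her best response $j^*$ must satisfy $R_{j^*} \ge R_{i^*}$ (and individual rationality holds, since $\alpha R_{i^*} - c_{i^*} \ge 0$). The principal's payoff is therefore $(1-\alpha) R_{j^*} \ge (1-\alpha) R_{i^*} = R_{i^*} - \hat{t}(R_{i^*})$, matching the upper bound of step one.

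The main obstacle I anticipate is a careful justification of the ``max-T'' reduction: the identity $\hat{t}(R_i) = \sup\{\mathbb{E}_F[t(X)] : \mathbb{E}_F[X] = R_i\}$ requires the adversary to have enough support freedom (e.g., any distribution on $[0, \infty)$ rather than a prescribed finite outcome set), and the supremum may only be attained in a limit, so the upper bound is really an infimum. Boundary cases---opt-out, $\hat{t}$ unbounded when $t$ is superlinear, or $R_{i^*} = 0$---are each handled by noting that the trivial linear contract with $\alpha = 0$ already secures payoff at least $0$.
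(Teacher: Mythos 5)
Your argument is correct, and it takes a genuinely different route from the paper's. The paper proves the theorem in two stages: a main lemma reduces arbitrary limited-liability contracts to \emph{affine} ones via a case analysis on a single non-collinear point $(x_j,t_j)$ --- the adversary uses two-point distributions on the extreme outcomes for most actions and shifts only the incentivized action (or only the others) onto the chords $l_2,l_3$, depending on whether $(x_j,t_j)$ lies above or below the chord $l_1$ from $(x_1,t_1)$ to $(x_m,t_m)$ --- and a separate observation then strips the intercept $\alpha_0\ge 0$ to pass from affine to linear. You instead concavify the entire payment scheme in one shot and read off the linear coefficient $\alpha=\hat t(R_{i^*})/R_{i^*}$ directly, using that concavity together with $\hat t(0)\ge 0$ makes the average slope $\hat t(R)/R$ non-increasing; this collapses the paper's case analysis (including the separate $t_1>t_m$ case) and the affine intermediate step into a single uniform argument. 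Both proofs ultimately rest on two-point adversarial distributions and on Assumption A4 ($x_1=0$), which is where your $\hat t(0)\ge 0$ enters. Two of the caveats you flag dissolve on closer inspection: since the adversary is restricted to distributions over the \emph{fixed finite} outcome set $\Omega_m$, the relevant $\hat t$ is the (piecewise-linear) concave hull of the finitely many points $\{(x_j,t_j)\}$ on $[0,x_m]$, so the supremum $\hat t(R_i)=\max\{\mathbb{E}_F[t]:\mathrm{supp}(F)\subseteq\Omega_m,\ \mathbb{E}_F[X]=R_i\}$ is exactly attained by a two-point distribution on adjacent hull vertices, and no unboundedness can occur because compatibility forces $R_i\le x_m$. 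The one boundary case you should state explicitly is $\alpha>1$ (i.e., $\hat t(R_{i^*})>R_{i^*}$): there the step $(1-\alpha)R_{j^*}\ge(1-\alpha)R_{i^*}$ reverses, but the upper bound $R_{i^*}-\hat t(R_{i^*})$ is then negative and the contract $\alpha=0$ already beats it, so the theorem still follows. With that line added, your proof is complete and, if anything, shorter than the paper's.
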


\paragraph{Approximation guarantees.}
Our second contribution is to conduct \emph{the first study of simple 
contracts from an approximation perspective}. 
Studying the worst-case approximation guarantees of classic
microeconomic mechanisms---linear contracts in this case---has been
a fruitful approach in other areas of algorithmic game theory. 
Applying this approach, we achieve a \emph{complete and tight} analysis 
of the approximation landscape for linear contracts.
Our analysis %
shows that \emph{linear contracts 
are approximately optimal except in pathological cases, which have
many actions, a
big spread of expected rewards among actions, and a
big spread of costs among actions, simultaneously.} Concretely, in the pathological cases we construct for our lower bounds (see e.g.~Theorem \ref{thm:lower-bound} in Section \ref{sec:linear-contracts}), the expected rewards of actions $a_1,a_2,a_3,\dots$ are $1,\sfrac{1}{\epsilon}, \sfrac{1}{\epsilon^2},\dots$ for vanishing $\epsilon$, and the costs are likewise ``exploding'' and tailored such that the difference between the exponentially large expected reward and exponentially large cost of the $i$th action is roughly $i$. The following summarizes our findings more formally:

\begin{theorem*}[See Section \ref{sec:linear-contracts}]
Let $\rho$ denote the worst-case ratio between the expected principal payoff 
under an optimal contract and under the best linear contract. Then
\begin{itemize}[topsep=0ex,itemsep=0ex,parsep=0ex]
\item [(a)] Among principal-agent settings with $n$ actions, $\rho=n$.

\item [(b)] Among settings where the ratio of the highest and lowest 
expected rewards is~$H$,~$\rho=\Theta(\log H)$.

\item [(c)] Among settings where the ratio of the highest and 
lowest action costs is~$C$, $\rho=\Theta(\log C)$.

\item [(d)] Among settings with $m \ge 3$ outcomes, $\rho$ can 
be arbitrarily large in the worst case.
\end{itemize}
\end{theorem*}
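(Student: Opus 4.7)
The plan is to establish each part of the theorem via separate upper and lower bounds, except (d) which requires only unboundedness. For parts (a)--(c), all upper bounds start from the welfare bound $\mathrm{OPT} \le R_{i^*} - c_{i^*} \le W^* := \max_i (R_i - c_i)$, which is a consequence of individual rationality on the incentivized action $a_{i^*}$ (the expected payment must cover the cost). It therefore suffices to exhibit a single linear contract achieving the claimed fraction of $W^*$. The key structural observation is that the agent's best-response reward $R(\alpha)$ is non-decreasing on $\alpha \in [0,1]$ and takes at most $n$ distinct values, so the principal's payoff function $P(\alpha) = (1-\alpha) R(\alpha)$ is piecewise linear in $\alpha$ with at most $n$ pieces and attains its maximum at one of at most $n$ breakpoints.

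\paragraph{Upper bounds for (a)--(c).}
For part (a), I would write $P(\alpha_j) = W_j - U_j$ at the $j$-th breakpoint $\alpha_j$, where $W_j$ is the welfare of the incentivized action there and $U_j$ is the agent's utility (with $U_j$ non-decreasing in $j$ because agent utility is monotone in $\alpha$), and argue via a telescoping or averaging bound across the $n$ breakpoints that $\max_j P(\alpha_j) \ge W^*/n$. For parts (b) and (c), I would test $O(\log H)$ (respectively $O(\log C)$) geometrically-spaced slopes, for instance $\alpha_k$ with $1-\alpha_k = 2^{-k}$ for (b), and show via a bucketing argument on $\log_2 R_i$ (respectively $\log_2 c_i$) that some probed slope captures an $\Omega(1/\log H)$ (respectively $\Omega(1/\log C)$) fraction of $W^*$.

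\paragraph{Lower bounds.}
For (a)--(c), I would build deterministic instances (each action producing a single outcome) with geometrically-spaced rewards and costs. For (a), $n$ actions with $R_i = 2^{i-1}$ and tuned costs force every linear slope $\alpha$ to either incentivize a low-reward action or leave $(1-\alpha)$ tiny, while OPT extracts welfare $\Theta(n)$ by paying selectively on the top outcome. Parts (b) and (c) use the same template with $\Theta(\log H)$ or $\Theta(\log C)$ actions respectively. For (d), since deterministic instances require $m \ge n$ outcomes and part (a) caps the gap at $n$ in that regime, I must exploit stochasticity: I would use three outcomes $\{0, x_2, x_3\}$ with $x_3 \gg x_2$ and $n$ stochastic actions whose distributions over these outcomes are chosen so that the optimal contract incentivizes a highly profitable action by paying selectively on $x_2$ (exploiting the likelihood ratio against adjacent actions), while any linear contract, forced to pay in proportion to realized outcome, either grossly overpays because of the $x_3$ tail or fails IC against a lower-cost action. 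Scaling $n$ and the probabilities drives $\mathrm{OPT}/\mathrm{LIN}$ to infinity.

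\paragraph{Main obstacle.}
The hardest step is the upper bound for part (a): a naive per-breakpoint estimate only gives $P(\alpha_j) \le W_j \le W^*$ and throws away the $1/n$ factor, so the proof must jointly exploit the structure of all $n$ breakpoints---either by telescoping $U_j$ across $j$ or by directly lower-bounding $\sum_j P(\alpha_j)$ and invoking an averaging argument to extract the maximum. The three-outcome construction for part (d) is also delicate, since it must defeat every linear slope simultaneously; I expect it to adapt the geometric ``scale-free'' structure of the part~(a) lower bound into a carefully-tuned stochastic three-outcome setting, where the agent's preference at each slope $\alpha$ is pushed towards a low-payoff action.
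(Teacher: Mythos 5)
Your overall plan coincides with the paper's: the welfare benchmark $OPT\le\max_a(R_a-c_a)$, a telescoping/averaging argument over the breakpoints of the piecewise-linear payoff curve for part (a), reward- and cost-bucketing for parts (b) and (c), deterministic geometric instances for the lower bounds, and a stochastic few-outcome construction for (d). Two of the steps you leave as intentions are exactly where the content lies, and one detail is off. For the part (a) upper bound, the paper takes the second of your two candidate routes: it shows $\sum_{i\le N}(1-\alpha_{i-1,i})R_i\ge R_N-c_N\ge OPT$ (Lemma~\ref{lem:linear-bound-on-opt}), and the telescoping step $(R_i-c_i)-(R_{i-1}-c_{i-1})\le(1-\alpha_{i-1,i})R_i$ (Observation~\ref{obs:delta-welfare}) is valid only because consecutive linearly-implementable actions have non-decreasing welfare---a fact that is not automatic and comes from the convexity of the upper envelope of the lines $\alpha R_a-c_a$ (Corollary~\ref{cor:increasing-welfare}). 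You also need $\alpha_i=\alpha_{i-1,i}$ (implementing $\alpha$ equals indifference $\alpha$, Corollary~\ref{cor:alphas-equal-intersections}) to convert each summand into an actual linear-contract payoff. Separately, for the part (a) lower bound your base-$2$ spacing $R_i=2^{i-1}$ only yields a gap of about $n/2$; to get $\rho=n$ exactly you must let the ratio between consecutive rewards grow (the paper sets $R_{i+1}=R_i/\epsilon$, tunes costs so that every breakpoint payoff equals $1$, and lets $\epsilon\to0$).

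For part (d) your mechanism differs from the paper's and is not obviously workable as stated. The paper's construction (Theorem~\ref{thm:strong-LB}) does not rely on $x_3\gg x_2$ or on linear contracts ``overpaying on the $x_3$ tail.'' It first collapses the $n$-action deterministic instance onto two outcomes $\{0,R_n\}$ while preserving every pair $(R_i,c_i)$; since a linear contract's payoff depends only on expected rewards and costs (Corollary~\ref{cor:linear-equivalence}), this leaves the best linear payoff pinned at $1$. It then appends a third outcome $x_3=R_n+1$ reached only by the top action, with probability $\delta$, so the optimal contract pays $c_n/\delta$ on $x_3$ alone, incentivizes the top action at expected payment $c_n$, and recovers payoff $\approx n$; letting $n\to\infty$ with $m=3$ fixed gives unboundedness. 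This first-moment-equivalence observation is the load-bearing idea: without it there is no reason the best linear contract stays at $O(1)$ once the actions become stochastic, and your sketch (paying selectively on $x_2$ and defeating linear contracts via an $x_3$ tail) would still need an argument playing that role.
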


The upper bounds summarized in the above theorem hold even with
respect to the strongest-possible benchmark of the optimal expected welfare
(rather than merely the optimal principal expected payoff); they thus
answer the natural question of how much of the ``first-best'' welfare
a linear contract can extract.  The matching lower bounds in the above
theorem all apply even when we add a strong regularity condition to
the principal-agent settings, known in the literature as the \emph{monotone likelihood ratio
property} (MLRP) (see Appendix \ref{appx:MLRP}).
In Section \ref{sec:beyond-linear}, we show an extension 
of our lower bounds 
to all \emph{monotone} (not necessarily linear) contracts: we prove 
that among principal-agent settings with $n$ actions, the ratio 
between the expected principal payoff under an optimal contract and 
under the best \emph{monotone} contract can be $n-1$ in the (pathological) worst case.

\paragraph{Discussion.} 
We view parts~(a)--(c) of the above theorem on approximation guarantees as
surprisingly positive results. A priori, it is completely unclear
which of the many parameters of principal-agent settings, if any, governs 
the performance of simple contracts. Our results show that there is 
no finite approximation bound that holds uniformly over {\em all} of 
the model parameters, but that we can obtain the next best thing: by 
fixing \emph{just one} of the model's ingredients (either the number of 
actions, the range of the outcomes, or the range of the costs, as 
preferred), it is possible to obtain an approximation guarantee that 
holds uniformly over all other parameters.  Our theorem shows 
that linear contracts are far from optimal only when the number 
of actions is large, \emph{and} there is a huge spread in expected 
rewards, \emph{and} there is a huge spread of action costs.  
Few if any of the most popular instantiations of the principal-agent 
model have all three of these properties.

\paragraph{Organization.}
Section \ref{sec:model} introduces the model.
Section \ref{sec:geometry} establishes several basic but fundamental properties of linear contracts. Sections \ref{sec:robustness} and \ref{sec:linear-contracts} contain our main results on their performance, and Section \ref{sec:beyond-linear} goes beyond linear to monotone contracts. Section \ref{sec:conclusion} concludes.

\subsection{Further Related Work}

Contract theory is one of the pillars of microeconomic theory. We refer 
the interested reader to the classic papers of 
\citep{Shavell79,GrossmanHart83,Rogerson85,HolmstromMilgrom91}, the coverage in \cite[Chapters 13-14]{MasColell95}, the excellent introductions of \cite{CaillaudHermalin00,TadelisSegal05}, and the comprehensive textbooks \cite{LaffontMartimort02,BoltonDewatripont04}. 

\paragraph{Computational approaches to contract design.}

To our knowledge, decidedly computational approaches to contract design 
have appeared so far only in the work of \cite{Babaioff-et-al12} 
(and follow-ups \cite{Babaioff-et-al09,Babaioff-et-al10}), the work of 
\cite{BabaioffWinter14}, and the work of \cite{HoSV16} (and follow-up \cite{KorenC18}). 
The first paper \cite{Babaioff-et-al12} initiates the study of a related but 
different model known as \emph{combinatorial agency}, in which combinations 
of agents replace the single agent in the classic principal-agent model. 
The challenge in the new model stems from the need to incentivize multiple agents, 
while the action structure of each agent is kept simple (effort/no effort). The focus of this 
line of work is on complex combinations of agents' efforts influencing the outcomes, 
and how these determine the subsets of agents to contract with.
The second paper \cite{BabaioffWinter14} introduces a notion of contract 
complexity based on the number of different payments specified in the contract, and 
studies this complexity measure in an $n$-player normal-form game framework. 
In their framework there are no hidden actions, making our model very different from theirs.
\itnote{
The third paper \cite{HoSV16} develops a model of \emph{dynamic} contract 
design: in each sequential round, the principal determines a contract, an agent 
arrives and chooses an action (effort level), and the principal receives 
a reward. Agents are drawn from an unknown prior distribution 
that dictates their available actions. The problem thus reduces to a multi-armed 
bandit variant with each arm representing a potential contract. The main focus 
of this line of work is on implicitly learning the underlying agent distribution 
to minimize the principal's regret over time. 
}

\paragraph{(Non)relation to signaling.}
Since one of the main features of the principal-agent model is the
information asymmetry regarding the chosen action (the agent knows
while the principal is oblivious), and due to the ``principal'' and
``agent'' terminology, on a superficial level contract theory may seem
closely related to signaling
\cite{Spence73,MaskinTirole90,MaskinTirole92}. 
This is not the case, and the relationship is no closer than that
between auction theory and signaling. As \citet{Dughmi17}
explains, the heart of signaling is in creating the right information
structure, whereas the heart of contract design is in setting the
right payment scheme.%
\footnote{``There are two primary ways of influencing the behavior of
  self-interested agents: by providing incentives, or by influencing
  beliefs. The former is the domain of traditional mechanism design,
  and involves the promise of tangible rewards such as [...] money.
  The latter [...] involves the selective provision of payoff-relevant
  information to agents through strategic communication''
  \cite[p.~1]{Dughmi17}.}
Put differently, in signaling, it is the more-informed party that
faces an economic design problem;
in hidden-action contract theory, it
is the less-informed party (i.e., the principal).
For more on signaling from a computational perspective the reader 
is referred to \cite{Emek-et-al-14,MiltersenSheffet12,Dughmi-et-al14}.

\itnote{
\paragraph{Concurrent work on principal-agent settings with no money.} Two recent papers \cite{KhodabakhshPT18,KleinbergK18}} study algorithmic aspects of another loosely related but distinct problem called \emph{optimal delegation} \cite{ArmstrongV10}. In this problem, a principal has to search for and decide upon a solution, and wishes to delegate the search to an agent with misaligned incentives regarding which solution to choose.
Crucially, there are \emph{no monetary transfers}, making the problem quite different from ours. A third recent paper \cite{KR18} studies a principal and agent setting where the agent has a ``budget'' of effort to spread across different actions (rather than a cost per level of effort), and any effort spent within the budget does not lower the agent's utility. There are again no monetary transfers from the principal to the agent, and the principal incentivizes the agent by mapping his outcomes to academic grades or other classifications. 

\section{The Hidden Action Principal-Agent Model}
\label{sec:model}

\paragraph{The model.}
An instance of the principal-agent model is described by a pair 
$(A_n,\Omega_m)$ of $n$ \emph{actions} and $m$ \emph{outcomes}.
	We identify the $j$th \emph{outcome} for every $j\in [m]$ with 
	its \emph{reward}~$x_j$ to the principal, and assume w.l.o.g.~that 
	the outcomes are increasing, i.e., $0 \leq x_1 \leq x_{2} \leq \dots \leq x_m$.

	Each \emph{action} is a pair $a = (F_a,c_a)$, in which $F_a$ is 
	a distribution over the $m$ outcomes and $c_a \geq 0$ is a \emph{cost}. 
	Denote by $F_{a,j}$ the probability of action $a$ to lead to outcome $x_j$; 
	we assume w.l.o.g.~that every outcome has some action leading to it 
	with positive probability. 
	The agent chooses an action $a \in A_n$ and bears the cost $c_a$, whereas 
	the principal receives a random reward $x_j$ drawn from $F_a$. 
	Crucially, \emph{the chosen action is hidden}: the principal observes the outcome $x_j$ but not the action $a$. 

Denote the \emph{expected outcome} (i.e., reward to the principal) from 
action $a$ by $R_a = \mathbb{E}_{j\sim F_a}[x_j] = \sum_{j \in [m]} F_{a,j}x_j.$ 
The difference $R_{a}-c_{a}$ is the expected \emph{welfare} from choosing action $a$.
When an action $a_i$ is indexed by $i$, we write for brevity $R_i,F_{i,j},c_i$ (rather than $R_{a_i},F_{a_i,j},c_{a_i}$).

\paragraph{Standard assumptions.} 
Unless stated otherwise we assume:
\begin{enumerate}%
	\item[\bf A1] There are no ``dominated'' actions, i.e., every two actions $a,a'$ 
	have distinct expected outcomes $R_a\ne R_{a'}$, and the action with the higher 
	expected outcome $R_a>R_{a'}$ also has higher cost $c_a> c_{a'}$.
	\item[\bf A2] There is a unique action $a$ with maximum welfare $R_a-c_a$. 
	\item[\bf A3] There is a zero-cost action $a$ with $c_a=0$. 
\end{enumerate}

Assumption A1 means there is no action with lower expected outcome and higher cost 
than some other action, although we emphasize that there \emph{can} be 
an action with lower \emph{welfare} and higher cost (in fact, incentivizing 
the agent to avoid such actions is a source of contract complexity). Our 
main results in Sections~\ref{sec:robustness}-\ref{sec:linear-contracts} do not 
require this assumption (see Section \ref{sec:linear-contracts} for details). 
Assumptions A2 and A3 are for the sake of expositional simplicity. In particular, 
Assumption A3 means we can assume the agent does not reject a contract with 
non-negative payments, since there is always an individually rational choice of action; 
alternatively, individual rationality could have been imposed directly.

\paragraph{Contracts.} 
A contract defines a \emph{payment scheme} $t$ with a payment (transfer) 
$t_j \geq 0$ from the principal to the agent for every
outcome~$x_j$. We denote by $T_a$ the expected payment
$\mathbb{E}_{j\sim F_a}[t_j] = \sum_j F_{a,j}t_j$ for action~$a$, and
by $T_i$ the expected payment for $a_i$. Note that the payments are contingent
only on the outcomes as the actions are not observable to the
principal. The requirement that $t_j$ is non-negative for every $j$ is referred to in
the literature as \emph{limited liability} \citep{Carroll15}, and it
plays the same role as the standard \emph{risk averseness} assumption in 
ruling out trivial solutions where a contract is not actually required
\citep{GrossmanHart83}. Limited liability (or its parallel agent risk averseness) 
is the second crucial feature of the classic principal-agent model, in addition 
to the actions being hidden from the principal.

\paragraph{Implementable actions.} 

The agent's expected \emph{utility} from action $a$ given payment scheme $t$ is $T_a-c_a$. 
The agent chooses an action that is: 
 (i) \emph{incentive compatible (IC)}, 
i.e., maximizes his expected utility among all actions in $A_n$; and 
(ii) \emph{individually rational (IR)}, i.e., has non-negative expected 
utility (if there is no IR action the agent refuses the contract). 
We adopt the standard assumption that the agent tie-breaks among 
IC, IR actions in favor of the principal.%
\footnote{The idea is that one could perturb the payment schedule slightly 
to make the desired action uniquely optimal for the agent.
For further discussion see \cite[p. 8]{CaillaudHermalin00}.} 
We say a contract \emph{implements} action $a^*$ %
if given its payment scheme $t$,  the agent chooses $a^*$;
if there exists such a contract we say $a^*$ is  \emph{implementable}.

\paragraph{Optimal contracts and LPs.} 

The principal seeks an \emph{optimal} contract: a payment scheme~$t$ 
that maximizes her expected \emph{payoff} $R_{a} - T_a$,
where $a$ is the action implemented by the contract (i.e., $a$ is both IC 
and IR for the agent, with ties broken to maximize the expected payoff of the principal). 
Notice that summing up the agent's expected utility $T_a-c_a$ with the 
principal's expected payoff $R_a-T_a$ results in the contract's expected 
welfare $R_{a}-c_{a}$. A contract's payment scheme thus determines both 
the size of the pie (expected welfare), and how it is split between the principal and the agent.

An optimal contract can be found through linear programming: For each action determine the minimum expected payment at which this action can be implemented. The best of these linear programs (achieving the highest expected payoff  for the principal) gives the best action to incentivize and an optimal contract incentivizing it.

The LP for incentivizing action $a$ at minimum expected payment has $m$ payment
variables $\{t_j\}$, which by limited liability must be non-negative,
and $n-1$ IC constraints ensuring that the agent's expected utility
from action $a$ is at least his expected utility from any other
action.  Note that by Assumption~A3, there is no need for an IR
constraint to ensure that the expected utility is non-negative. The LP
is:
\begin{eqnarray*}
\min & \sum_{j\in[m]} {F_{a,j}t_j} & \label{LP:min-pay}\\
\text{s.t.} & \sum_{j\in[m]} {F_{a,j}t_j} - c_a \ge \sum_{j\in[m]} {F_{a',j}t_j} - c_{a'} & \forall a'\ne a, a'\in A_n, \\
& t_j \ge 0 & \forall j\in[m].
\end{eqnarray*}

Fairly little is known about the \emph{structure} of the optimal contracts that come out of this approach. There are a few notable exemptions from this general rule---e.g., settings with just $n = 2$ actions---which we discuss in more detail in 
Appendix~\ref{appx:simple-and-optimal}.

This LP and close variants together with their duals can also be used to characterize implementable actions, where the implementing contract is either arbitrary or required to be \emph{monotone} as defined below. The characterizations hinge on whether the action's distribution can be emulated or dominated by a combination of other actions at lower cost%
---see Appendix \ref{sub:implement} for details.
It also follows that the optimal contract will have at most
$(n-1)$ nonnegative payments---see Appendix~\ref{sub:nonzero-payments}.

\paragraph{Linear/monotone contracts.} %
	In a \emph{linear} contract the payment scheme is a linear function of the outcomes, i.e., $t_j = \alpha x_j \ge 0$ for every $j\in[m]$. We refer to $\alpha$ as the linear contract's \emph{parameter}, which is $\ge 0$ due to limited liability.
A natural generalization is a \emph{degree-$d$ polynomial} contract, in which the payment scheme is a non-negative degree-$d$ polynomial function of the outcomes: $t_j = \sum_{k=0}^d {\alpha_k x_j^k}\ge 0$ for every $j\in[m]$. If $d=1$ we get an \emph{affine} contract; such contracts play a role in Section \ref{sec:robustness}. 
Linear and affine contracts are \emph{monotone},
	where a contract is monotone if its payments are nondecreasing in the outcomes, i.e., $t_j\le t_{j'}$ for $j<j'$. 

\paragraph{Regularity assumptions.}  

The economic literature (see, e.g., \cite{GrossmanHart83}) introduces
a regularity assumption called the \emph{monotone likelihood ratio
	property} (MLRP) for principal-agent settings. Intuitively, the
assumption asserts that the higher the outcome, the more likely it is
to be produced by a high-cost action than a low-cost one (in a
relative sense).  We
adapt the standard definition to accommodate for zero probabilities,
as follows:

\begin{definition}[MLR]
	\label{def:increasing-LR}
	Let $F,G$ be two distributions over $m$ values $v_1,\dots,v_m$. 
	The likelihood ratio $F_j/G_j$ 
	is monotonically increasing in $j$ if 
	$$
	F_j/G_j \le F_{j'}/G_{j'}
	$$
	for every $j<j'$ such that at least one of $F_j,G_j$ is
	positive, and at least one of $F_{j'},G_{j'}$ is positive.
\end{definition}

\begin{definition}[MLRP]
	\label{def:MLRP}
	A principal-agent problem satisfies {\em MLRP} if for every pair of actions $a,a'$ such that $c_a<c_{a'}$, the likelihood ratio $F_{a',j}/F_{a,j}$ is monotonically increasing in $j$. 
\end{definition}

\begin{proposition}[MLR $\implies$ FOSD \cite{TadelisSegal05}]
	If the likelihood ratio $F_j/G_j$ is monotonically increasing in $j$, then $F$ first-order stochastically dominates $G$. The converse does not hold.
\end{proposition}

\paragraph{Max-min evaluation and approximation.}

We apply two approaches to evaluate the performance of simple contracts: max-min in Section \ref{sec:robustness}, and approximation in Section \ref{sec:linear-contracts}. We now present the necessary definitions, starting with the max-min approach.

	A \emph{distribution-ambiguous} action is a pair $a=(R_a,c_a)$, in which $R_a\ge 0$ is the action's expected outcome and $c_a\ge 0$ is its cost. Distribution $F_a$ over outcomes $\{x\}$ is \emph{compatible} with distribution-ambiguous action $a$ if $\mathbb{E}_{x\sim F_a}[x]=R_a$.
	A principal-agent setting $(A_n,\Omega_m)$ is \emph{ambiguous} if it has $m\ge 3$ outcomes and $n$ distribution-ambiguous actions, and there exist distributions $F_1,\dots,F_n$ over the outcomes compatible with the actions.%
	\footnote{A setting with $m=2$ outcomes cannot be ambiguous since the expectation determines the distribution; moreover the conundrum of ``optimal but complex'' vs.~``suboptimal but ubiquitous'' never arises as the optimal contract has a simple form---see Appendix \ref{appx:simple-and-optimal}.}
In ambiguous settings, it is appropriate to apply a worst-case performance measure to evaluate contracts:
	Given an ambiguous principal-agent setting, a contract's \emph{worst-case expected payoff} is its infimum expected payoff to the principal over all distributions $\{F_i\}_{i=1}^n$ compatible with the known expected outcomes $\{R_i\}_{i=1}^n$.
We follow \cite{Carroll15} in making the following assumption, which simplifies the results in Section \ref{sec:robustness} but does not qualitatively affect them:%
\footnote{As explained in \cite[][Footnote 2]{Carroll15}, Assumption A4 is simply an additive normalization of the principal's payoffs. Without this assumption, a robustly optimal contract would take the form $t_j = \alpha(x_j-x_1)$. Further justification for assuming $x_1=0$ is that the principal may have ambiguity not just with respect to the action distributions but also as to her possible rewards, and she prefers a contract robust to the possibility (however slim) of receiving a zero reward.} 
\begin{enumerate}[topsep=1ex,itemsep=0ex,parsep=0ex]
	\item[\bf A4] In ambiguous principal-agent settings, the outcome $0$ belongs to $\Omega_m$, i.e., $x_1=0$.
\end{enumerate} 

In Section \ref{sec:linear-contracts}, we are interested in bounding the potential loss in the principal's expected payoff if she is restricted to use a linear contract. Formally, let $\mathcal{A}$ be the family of principal-agent settings.
For $(A_n,\Omega_m) \in \mathcal{A}$, denote by $OPT(A_n,\Omega_m)$ the optimal expected payoff to the principal with an arbitrary contract, and by $ALG(A_n,\Omega_m)$ the best possible expected payoff with a contract of the restricted form (we omit $(A_n,\Omega_m)$ from the notation where clear from context). We seek to bound 
$
\rho(\mathcal{A}) := \max_{(A_n,\Omega_m) \in \mathcal{A}} \frac{OPT(A_n,\Omega_m)}{ALG(A_n,\Omega_m)}.
$

\section{Properties and Geometry of Linearly-Implementable Actions}
\label{sec:geometry}

We begin by developing a geometric characterization of \emph{linearly-implementable} actions, and deriving several useful consequences, which we apply in Section \ref{sec:robustness} and especially in Section \ref{sec:linear-contracts}. 
See Appendix~\ref{appx:geometry} for details and missing proofs.

	%
	%
	%
	%

Consider a principal-agent setting $(A_n,\Omega_m)$. We say that an action $a\in A_n$ is \emph{linearly-implementable} 
	if there exists a linear contract with parameter $\alpha\le 1$ that implements $a$. %
	Note that $\alpha \leq 1$ is w.l.o.g.~because the expected payoff to the principal would be negative otherwise.

\begin{observation}
	\label{obs:lin-utility-payoff}
	If a linear contract with parameter $\alpha$ implements action $a^*$, then the agent's expected 
	utility and the principal's expected payoff are, respectively,
	\begin{equation*}
	\alpha R_{a^*} - c_{a^*};~~~(1-\alpha) R_{a^*}.
	\end{equation*}
\end{observation}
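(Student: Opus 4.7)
The plan is to verify this by a direct calculation using linearity of expectation applied to the definitions already introduced in Section \ref{sec:model}.

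First, I would unpack the defining equation of a linear contract with parameter $\alpha$: by Definition \ref{def:linear-contract}, the payment scheme satisfies $t_j = \alpha x_j$ for every outcome $j \in [m]$. Plugging this into the definition of the expected payment for action $a^*$, namely $T_{a^*} = \mathbb{E}_{j \sim F_{a^*}}[t_j] = \sum_j F_{a^*,j} t_j$, and pulling the constant $\alpha$ out of the sum, I obtain $T_{a^*} = \alpha \sum_j F_{a^*,j} x_j = \alpha R_{a^*}$, where the last equality is just the definition of $R_{a^*}$.

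Once $T_{a^*} = \alpha R_{a^*}$ is in hand, both claims follow immediately from the standard formulas for agent utility and principal payoff given in Section \ref{sec:model}. The agent's expected utility from the implemented action is $T_{a^*} - c_{a^*} = \alpha R_{a^*} - c_{a^*}$, and the principal's expected payoff is $R_{a^*} - T_{a^*} = R_{a^*} - \alpha R_{a^*} = (1-\alpha) R_{a^*}$.

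Since the argument is a one-line application of linearity of expectation, there is no real obstacle. The only thing worth noting explicitly is that the observation conditions on $a^*$ actually being the action implemented by the linear contract, so the relevant distribution over outcomes is $F_{a^*}$; no additional reasoning about incentive compatibility or ties is needed here, since those were used only to conclude which action $a^*$ gets chosen.
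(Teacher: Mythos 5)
Your proof is correct and is exactly the calculation the paper has in mind: the paper states this as an observation with no written proof, treating the identity $T_{a^*}=\alpha R_{a^*}$ (by linearity of expectation) and the resulting utility/payoff formulas as immediate. Nothing is missing, and your remark that implementability only serves to fix $F_{a^*}$ as the relevant distribution is apt.
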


\begin{observation}
	\label{obs:single-implemented}
	Under Assumptions A1-A3, a linear contract implements at most one action.
\end{observation}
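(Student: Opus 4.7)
My plan is to argue by contradiction: suppose a linear contract with parameter $\alpha$ implements two distinct actions $a \neq a'$. Recalling the definition of ``implements'' (the action chosen by the agent, with ties broken in favor of the principal), this would require that both actions (i) achieve the maximum agent expected utility and (ii) achieve the maximum principal expected payoff among actions tying at the agent's level. I would use Observation \ref{obs:lin-utility-payoff} to write both conditions explicitly in the linear setting, namely $\alpha R_a - c_a = \alpha R_{a'} - c_{a'}$ from (i) and $(1-\alpha)R_a = (1-\alpha)R_{a'}$ from (ii).

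From condition (ii), either $\alpha = 1$ or $R_a = R_{a'}$. The second possibility is immediately ruled out by Assumption A1, which forbids two distinct actions from sharing the same expected outcome. For the case $\alpha = 1$, condition (i) reduces to $R_a - c_a = R_{a'} - c_{a'}$, i.e., both actions achieve the same welfare. Since we have already established that both actions must in fact be maximizers of the agent's utility (hence welfare-maximizers when $\alpha = 1$), this contradicts Assumption A2, which guarantees a unique welfare-maximizing action. Since w.l.o.g.\ $\alpha \in [0,1]$ (per the footnote following Definition \ref{def:lin-implement}), these two cases are exhaustive and the contradiction completes the argument.

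The only subtle point is making sure that the tie-breaking rule really does force \emph{both} (i) and (ii) for any ``implemented'' action, rather than letting two distinct actions co-exist as implementations with one strictly preferred by the agent and the other strictly preferred by the principal. This follows directly from the model's tie-breaking convention: an action is implemented precisely when it is the agent's choice under the rule ``maximize agent utility, then break ties in favor of the principal.'' No step here should require more than a few lines, so I do not anticipate a real obstacle; the observation is essentially a sanity check that Assumptions A1--A2 are strong enough to preclude degeneracy in the linear-contract setting.
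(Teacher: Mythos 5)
Your proof is correct and follows essentially the same route as the paper's: a case split on $\alpha<1$ (where the principal's payoff $(1-\alpha)R_a$ differs between actions by Assumption~A1) versus $\alpha=1$ (where the agent's utility equals welfare, so Assumption~A2 gives uniqueness). The only cosmetic difference is that you frame it as an explicit contradiction while the paper states the necessary condition directly; the substance is identical.
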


\begin{observation}
	\label{obs:indifferent}
	Let $a,a'$ be a pair of actions such that $R_{a'}>R_a$ and $c_{a'}>c_a$. Then a linear contract with parameter $\alpha_{a,a'}=\frac{c_{a'}-c_{a}}{R_{a'}-R_{a}}$ makes the agent indifferent among actions $a'$ and $a$ (but does not necessarily incentivize either of these actions).
\end{observation}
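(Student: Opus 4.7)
The plan is to proceed by direct computation of the agent's expected utility under a linear contract. Recall from the definition of a linear contract that the payment scheme is $t_j = \alpha x_j$ for each outcome $x_j$, so the expected payment under action $b \in \{a, a'\}$ is $T_b = \mathbb{E}_{j \sim F_b}[\alpha x_j] = \alpha R_b$. Consequently, the agent's expected utility from choosing $b$ is $T_b - c_b = \alpha R_b - c_b$ (this is exactly the content of Observation~\ref{obs:lin-utility-payoff}).

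Next I would write down the indifference condition $\alpha R_a - c_a = \alpha R_{a'} - c_{a'}$ and solve for $\alpha$. Rearranging gives $\alpha (R_{a'} - R_a) = c_{a'} - c_a$, and the hypothesis $R_{a'} > R_a$ ensures that the denominator is strictly positive, so we may divide to obtain the unique value $\alpha_{a,a'} = \frac{c_{a'} - c_a}{R_{a'} - R_a}$. The additional hypothesis $c_{a'} > c_a$ then guarantees that $\alpha_{a,a'} > 0$, so this is a legitimate parameter for a linear contract (which must be nonnegative under limited liability). Plugging $\alpha_{a,a'}$ back into the utility expressions verifies that the agent receives the same expected utility from $a$ and from $a'$.

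Finally, I would note the parenthetical caveat: although the agent is indifferent between $a$ and $a'$ at $\alpha = \alpha_{a,a'}$, a third action $a''$ could yield strictly greater expected utility $\alpha_{a,a'} R_{a''} - c_{a''}$ under the same contract, in which case the agent would select $a''$ rather than $a$ or $a'$. So the computed $\alpha_{a,a'}$ only reflects the pairwise indifference between $a$ and $a'$ and makes no claim about global incentive compatibility.

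There is no real obstacle here — the observation is a one-line consequence of the linearity of expectation and of elementary algebra, and no appeal to the upper-envelope machinery of Lemma~\ref{lem:upper-envelope} is required. The only subtlety worth stating explicitly is the sign/positivity check on $\alpha_{a,a'}$, which follows from the two stated inequalities on $R$ and $c$.
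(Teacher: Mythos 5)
Your proposal is correct and follows essentially the same route as the paper: a direct computation of the agent's expected utility $\alpha R_b - c_b$ under a linear contract, followed by solving the indifference equation for $\alpha_{a,a'}$ and checking its positivity. The only difference is that the paper substantiates the parenthetical caveat with an explicit three-action, three-outcome example in which the indifference parameter implements neither action, whereas you note the possibility only abstractly; since the parenthetical asserts that incentivization ``does not necessarily'' occur, a concrete witness is the cleaner way to justify it.
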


Let $N$ denote the number of linearly-implementable actions, and let $I_N\subseteq A_n$ denote the set of such actions. 
Index the actions in $I_N$ in order of their expected outcomes, i.e., for $a_i,a_{i'}\in I_N$, $i<i'\iff R_i < R_{i'}$.
We now define two different mappings, and in Lemma \ref{lem:upper-envelope} establish their equivalence.

(1) \emph{Linear-implementability mapping $a(\cdot)$.} Denote by $a(\cdot):[0,1]\to I_N\cup\{\varnothing\}$ the mapping of $\alpha$ to either the action implemented by the linear contract with parameter~$\alpha$ (observe there is at most one such action under our assumptions---for completeness we state and prove this in Appendix \ref{appx:geometry}),
or to $\varnothing$ if there is no such action. So mapping $a(\cdot)$ is onto~$I_N$ by definition. 
Denote by $\alpha_i$ the smallest $\alpha\in[0,1]$ such that action $a_i\in I_N$ is implemented by a linear contract with parameter $\alpha$, then $\alpha_i$ is the smallest $\alpha$ such that $a(\alpha)=a_i$.

(2) \emph{Upper envelope mapping $u(\cdot)$.} For every action $a\in A_n$, consider the line $\alpha R_a - c_a$ and let $\ell_a$ denote the segment between $\alpha=0$ and $\alpha=1$; these segments appear in Figure~\ref{fig:upper-envelope}, where the $x$-axis represents the possible values of $\alpha$ from 0 to~1.
Take the upper envelope of the $n$ segments $\{\ell_a\}_{a\in A_n}$ and consider its non-negative portion.
Let $u(\cdot):[0,1]\to A_n\cup\{\varnothing\}$ be the mapping from $\alpha$ to either $\varnothing$ if the upper envelope is negative at $\alpha$, or to the action whose segment forms the upper envelope at $\alpha$ otherwise. If there is more than one such action, let $u(\alpha)$ map to the one with the highest expected outcome $R_a$. 

We begin by showing the following monotonicity property of the upper envelope mapping $u$. Let $\underline{\alpha}$ denote the smallest $\alpha$ at which the upper envelope intersects the $x$-axis (or $\underline{\alpha}=1$ if no such $\alpha$ exists). As $\alpha$ goes from $\underline{\alpha}$ to 1, $u(\alpha)$ maps to actions with increasingly higher expected outcomes $\{R_a\}$, costs~$\{c_a\}$ and expected welfares $\{R_a-c_a\}$. 

\begin{lemma}
	\label{lem:envelope-monotonicity}
	For every two parameters $0\le \alpha < \alpha' \le 1$, either $u(\alpha)=\varnothing$, or it holds that $u(\alpha)=a\in A_n$ and $u(\alpha')=a'\in A_n$ such that (i) $R_a<R_{a'}$; (ii) $c_a<c_{a'}$; and (iii) $R_a-c_a\le R_{a'}-c_{a'}$.
\end{lemma}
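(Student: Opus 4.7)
The plan is to first establish monotonicity of the upper envelope in $\alpha$, then derive the three inequalities from the defining envelope-optimality conditions together with Assumption A1.

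First I would note that every segment $\ell_a(\alpha) = \alpha R_a - c_a$ has nonnegative slope $R_a\ge 0$, so the pointwise maximum $M(\alpha) := \max_{a\in A_n}\ell_a(\alpha)$ is nondecreasing on $[0,1]$. Hence if $M(\alpha) \ge 0$ then $M(\alpha')\ge 0$ for every $\alpha' > \alpha$. This means it suffices to treat the case $u(\alpha)\ne\varnothing$, for which $u(\alpha')$ is automatically nonempty as well. Write $a = u(\alpha)$ and $a' = u(\alpha')$. If $a = a'$ there is nothing to prove (the three inequalities hold with equality), so assume $a\ne a'$ in what follows.

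The second step is a standard two-sided exchange argument. By the defining property of $u$,
\[
\alpha R_a - c_a \;\ge\; \alpha R_{a'} - c_{a'} \qquad\text{and}\qquad \alpha' R_{a'} - c_{a'} \;\ge\; \alpha' R_a - c_a.
\]
Summing these two inequalities yields $(\alpha'-\alpha)(R_{a'}-R_a)\ge 0$, and since $\alpha'>\alpha$ we obtain $R_{a'}\ge R_a$. Because $a\ne a'$, Assumption A1 forces $R_a\ne R_{a'}$, hence $R_a < R_{a'}$, giving (i). The same assumption (higher expected outcome implies higher cost) then gives $c_a < c_{a'}$, which is (ii).

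For (iii), I would rearrange the second envelope-optimality inequality as
\[
(R_{a'}-c_{a'}) - (R_a-c_a) \;\ge\; (1-\alpha')(R_{a'}-R_a).
\]
Since $\alpha'\le 1$ and $R_{a'}\ge R_a$ by the previous step, the right-hand side is nonnegative, which proves $R_a-c_a \le R_{a'}-c_{a'}$. I do not anticipate a serious obstacle; the only subtlety is the trivial case $a=a'$ (in which the three inequalities degenerate to equalities), and the crucial use of Assumption A1 to lift the weak inequalities $R_a\le R_{a'}$ obtained from the exchange argument to the strict forms asserted by the lemma.
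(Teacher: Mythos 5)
Your proof is correct, but it takes a genuinely different and more elementary route than the paper's. The paper appeals to the fact that the upper envelope of affine functions with nonnegative slopes is convex (citing an external reference): convexity gives at most one crossing of the $x$-axis (hence $u(\alpha)\neq\varnothing\Rightarrow u(\alpha')\neq\varnothing$) and increasing slopes along the envelope (hence $R_a<R_{a'}$); part (iii) is then proved by contradiction via an ``overshadowing'' argument comparing the two segments at $\alpha=0$ and $\alpha=1$. You instead get non-emptiness from the simpler observation that a pointwise maximum of nondecreasing functions is nondecreasing, get (i) from a two-sided exchange (revealed-preference) argument --- summing the two envelope-optimality inequalities to obtain $(\alpha'-\alpha)(R_{a'}-R_a)\ge 0$ --- and get (iii) by a direct rearrangement of the optimality inequality at $\alpha'$. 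Your argument is self-contained, avoids the convexity citation, and correctly invokes Assumption A1 both to break the tie $R_a=R_{a'}$ and to deduce $c_a<c_{a'}$ (the paper's proof cites A2 here, which appears to be a typo for A1). One small caveat: when $u(\alpha)=u(\alpha')=a$, it is not that ``the three inequalities hold with equality'' --- (i) and (ii) are strict, so they simply fail; the lemma as written is only meaningful when $u(\alpha)\neq u(\alpha')$. This is a wart of the statement itself that the paper's own proof also glosses over, so it does not count against you, but ``nothing to prove'' should really be ``the statement is vacuous/needs the implicit hypothesis $a\neq a'$.''
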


Our main structural insight in this section is that the upper envelope mapping precisely captures linear implementability:

\begin{lemma}%
	\label{lem:upper-envelope}
	For every $\alpha\in[0,1]$, $a(\alpha)=u(\alpha)$.
\end{lemma}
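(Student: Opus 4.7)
The plan is to observe that under a linear contract with parameter $\alpha$, the agent's expected payment from action $a$ is $T_a = \alpha R_a$, so his expected utility reduces to the very simple form
\[
T_a - c_a \;=\; \alpha R_a - c_a,
\]
which is exactly the value of the segment $\ell_a$ at the point $\alpha$. Thus the entire IC/IR problem faced by the agent under a linear contract collapses into picking a segment of maximum height at $\alpha$, subject to the height being nonnegative. I would therefore set up the equality $a(\alpha)=u(\alpha)$ by directly matching the definitions of the two mappings pointwise in $\alpha$.

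Concretely, I would proceed in three short steps. First, fix $\alpha\in[0,1]$. If the upper envelope is strictly negative at $\alpha$ then every action yields strictly negative expected utility to the agent, so no action is IR and the agent refuses the contract; hence $a(\alpha)=\varnothing=u(\alpha)$. (Under Assumption A3 this case actually never occurs, since the zero-cost action gives utility $\alpha R_a\ge0$; I would mention this as a remark but the argument does not need it.) Second, if the upper envelope is nonnegative at $\alpha$, then the set of IC actions for the agent is exactly the set of actions $a$ achieving the maximum of $\alpha R_a-c_a$, i.e., exactly those actions whose segments realize the upper envelope at $\alpha$. All such actions are automatically IR because their utility equals the (nonnegative) envelope value. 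Third, the standard tie-breaking rule picks, among IC and IR actions, the one maximizing the principal's expected payoff $R_a-T_a=(1-\alpha)R_a$; since $\alpha\le 1$, this is the IC action with the largest $R_a$. That is precisely the rule used in the definition of $u(\alpha)$ to resolve ties on the upper envelope, so $a(\alpha)=u(\alpha)$ in this case too.

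Finally, I would note in passing that $u(\alpha)$ is then a well-defined single action (the "at most one" claim referenced in the definition of $a(\cdot)$), because the tie-breaking rule selects a unique segment, and that the range of $u(\cdot)$ is contained in $I_N$ since each $u(\alpha)$ is implemented by the linear contract with parameter $\alpha$.

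The whole argument is essentially a direct unpacking of definitions once one notices that linear contracts turn the agent's optimization into a linear functional of $(R_a,c_a)$ alone, independent of the full distribution $F_a$. I do not anticipate a real obstacle; the only point requiring a small amount of care is the tie-breaking step, where the assumption $\alpha\le 1$ is what makes the principal-preferred tie-break coincide with "highest $R_a$"—and hence with the tie-break hard-coded into $u(\cdot)$. Without $\alpha\le 1$ (or without assumption A3 for the IR side) the two mappings could disagree, so I would highlight where each of those hypotheses is used.
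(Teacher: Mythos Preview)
Your proposal is correct and follows essentially the same argument as the paper's own proof: both unfold the definitions to identify IR with the segment being nonnegative, IC with the segment attaining the upper envelope, and tie-breaking in favor of the principal with selecting the action of highest $R_a$. The only minor addition on your side is the remark that under Assumption~A3 the $\varnothing$ case does not actually occur, which the paper's proof omits but which does not change the argument.
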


\begin{figure}[t]
	\centering
	\begin{tikzpicture}[xscale=0.7,yscale=0.55]
	\draw[draw=none, use as bounding box](-1,-2.5) rectangle (9,3.75);
	\clip (-1,-2.25) rectangle (11,5);
	\draw[->,thick] (-0.5,0) -- (8.5, 0) node[right] {$\alpha$};
	\draw[->,thick] (0,-4.5) -- (0, 4.5);
	\draw[->,thick] (8,-4.5) -- (8, 4.5) node[right] {~~~~$\alpha R_a - c_a$};
	\draw[-,thick] (0,-0.25) -- (8, 1);
	\draw[-,thick] (0,-1.5) -- (8, 2.5);
	\draw[-,thick] (0,-4) -- (8, 3.5);
	\draw[-,thick] (0.1,-0.25) -- (-0.1,-0.25) node[left] {$-c_1$};
	\draw[-,thick] (0.1,-1.5) -- (-0.1,-1.5) node[left] {$-c_2$};
	\draw[-,thick] (0.1,-4) -- (-0.1,-4) node[left] {$-c_3$};
	\draw[-,thick] (8.1,1) -- (7.9,1) node[right] {~~~~$R_1-c_1$};
	\draw[-,thick] (8.1,2.5) -- (7.9,2.5) node[right] {~~~~$R_2-c_2$};
	\draw[-,thick] (8.1,3.5) -- (7.9,3.5) node[right] {~~~~$R_3-c_3$};
	\draw[-,thick] (1/5*8,0.1) -- (1/5*8,-0.1) node[below] {$\alpha_1$};
	\draw[-,thick] (5/11*8,0.1) -- (5/11*8,-0.1) node[below] {$\alpha_2$};
	\draw[-,thick] (5/7*8,0.1) -- (5/7*8,-0.1) node[below] {$\alpha_3$};
	\draw[very thick,red] (1/5*8,0) -- (5/11*8,14/44);
	\draw[very thick,red] (5/11*8,14/44) -- (5/7*8,19/14);
	\draw[very thick,red] (5/7*8,19/14) -- (8,3.5);
	\end{tikzpicture}
	\caption{Linearly-implementable actions via upper envelope.}\label{fig:upper-envelope}
\end{figure}

Lemma \ref{lem:upper-envelope} has three useful implications:
(1)~The actions $\{a_i \in I_N\}$ appear on the upper envelope in the order in which they are indexed (i.e., sorted by increasing expected outcome). (2)~These actions are also sorted by increasing welfare, i.e., $R_1 - c_1 \leq R_2 - c_2 \leq \dots \leq R_N-c_N$. (3)~The smallest $\alpha$ that incentivizes action $a_i$ (which we refer to as $\alpha_i$) is the same $\alpha$ that makes the agent \emph{indifferent} between action $a_i$ and action $a_{i-1}$.
We denote this ``indifference $\alpha$'' by $\alpha_{i-1,i}$ and observe that $\alpha_{i-1,i}=\frac{c_{i}-c_{i-1}}{R_{i}-R_{i-1}}$ from Observation~\ref{obs:indifferent}. Using this notation, we can rewrite the third implication as $\alpha_i = \alpha_{i-1,i}$. Formally:   

\begin{corollary}
	\label{cor:what-a-returns}
	For every $\alpha\in[0,1]$, 
	\begin{eqnarray}
	\forall i\in [N-1]~~:~~a(\alpha)=a_i &\iff& \alpha_i\le \alpha <\alpha_{i+1},\label{eq:up-to-max}\\
	a(\alpha)=a_N &\iff& \alpha_N\le \alpha \le 1.\nonumber
	\end{eqnarray}
\end{corollary}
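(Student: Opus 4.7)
The plan is to translate the claim about the linear-implementability mapping $a(\cdot)$ into a statement about the upper-envelope mapping $u(\cdot)$ via Lemma \ref{lem:upper-envelope} (which asserts $a(\alpha)=u(\alpha)$), and then read off the answer from the piecewise-linear geometry of the upper envelope. Everything else then reduces to identifying the endpoints of the intervals on which $u$ is constant.

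First I would observe that, by Lemma \ref{lem:envelope-monotonicity} together with the fact that $a(\cdot)$ (equivalently $u(\cdot)$) is onto $I_N$, as $\alpha$ increases through the region where the envelope is nonnegative the active segment traverses exactly $\ell_{a_1},\ell_{a_2},\dots,\ell_{a_N}$ in the order of increasing $R_i$. Consequently, for each $i\in[N]$ the preimage $S_i:=\{\alpha\in[0,1]:u(\alpha)=a_i\}$ is a (non-empty) interval, the $S_i$'s are pairwise disjoint and appear in this order, and their union (together with any initial region on which $u=\varnothing$) covers $[0,1]$. By the very definition of $\alpha_i$ as the smallest $\alpha$ with $a(\alpha)=a_i$, the left endpoint of $S_i$ is $\alpha_i$.

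Next I would identify the right endpoints. For $i<N$, at the breakpoint between $\ell_{a_i}$ and $\ell_{a_{i+1}}$ both segments achieve the maximum value on the envelope simultaneously; by the tie-breaking rule in the definition of $u$, the envelope mapping selects the action with the larger expected outcome, namely $a_{i+1}$. Hence the transition point lies in $S_{i+1}$ (not $S_i$), so $S_i=[\alpha_i,\alpha_{i+1})$. For $i=N$ no subsequent segment exists, so $S_N$ extends up to the right endpoint of $[0,1]$, giving $S_N=[\alpha_N,1]$. Combining these identifications with $a(\alpha)=u(\alpha)$ yields the two equivalences in the statement.

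The main subtlety to address carefully is ruling out a ``gap'' on which $u(\alpha)=\varnothing$ between $S_i$ and $S_{i+1}$: I would argue that once the upper envelope is nonnegative at $\alpha=\alpha_1$, it stays nonnegative for all larger $\alpha$, since the upper envelope of segments with nonnegative slopes $\{R_a\}$ is itself a nondecreasing (convex) function of $\alpha$ (this is already exploited in the proof of Lemma \ref{lem:envelope-monotonicity}). Thus the intervals $S_i$ abut one another in the claimed way, and the corollary follows.
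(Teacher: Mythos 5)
Your proposal is correct and follows essentially the same route as the paper's proof: invoke Lemma~\ref{lem:upper-envelope} to pass to the upper-envelope mapping, then use Lemma~\ref{lem:envelope-monotonicity} and the onto-ness of $a(\cdot)$ to conclude that the preimages are consecutive intervals ordered by $R_i$, each starting at $\alpha_i$ by definition. You supply two details the paper leaves implicit---that the tie-breaking rule places each breakpoint in $S_{i+1}$ rather than $S_i$, and that no $\varnothing$-gap can occur between consecutive intervals because the envelope of nonnegative-slope segments is nondecreasing---both of which are handled correctly.
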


\begin{corollary}
	\label{cor:increasing-welfare}
	For every $i \in [N-1]$, $R_i - c_i \leq R_{i+1}-c_{i+1}$. 
\end{corollary}

\begin{corollary}
	\label{cor:alphas-equal-intersections}
	For every $i \in [N]$, $\alpha_i = \alpha_{i-1,i}$.
\end{corollary}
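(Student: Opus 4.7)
The plan is to read off the claim directly from the geometric characterization established in Lemma \ref{lem:upper-envelope} and its consequence Corollary \ref{cor:what-a-returns}. By Lemma \ref{lem:upper-envelope} the mapping $a(\cdot)$ agrees with the upper envelope mapping $u(\cdot)$, so $\alpha_i$ -- defined as the smallest $\alpha\in[0,1]$ with $a(\alpha)=a_i$ -- is also the smallest $\alpha$ at which the segment $\ell_{a_i}$ attains the upper envelope of $\{\ell_a\}_{a\in A_n}$ (above the $x$-axis).

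First I would handle the base case $i=1$. By Assumption A3 there is a zero-cost action, and by Assumption A1 the action with smallest cost is also the action with smallest expected outcome. Hence this zero-cost action is $a_1$, and at $\alpha=0$ the agent's utility $\alpha R_a-c_a=-c_a$ is uniquely maximized at $a_1$, so $\alpha_1=0$. Since $\alpha_{0,1}=0$ by the convention recalled in the proof of Lemma \ref{lem:linear-bound-on-opt}, the base case follows.

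For the inductive step I would fix $i\in\{2,\dots,N\}$ and argue that at $\alpha=\alpha_i$ the two segments $\ell_{a_{i-1}}$ and $\ell_{a_i}$ must cross. By Corollary \ref{cor:what-a-returns}, $a(\alpha)=a_{i-1}$ for all $\alpha\in[\alpha_{i-1},\alpha_i)$ and $a(\alpha_i)=a_i$, so at $\alpha_i$ the upper envelope switches from segment $\ell_{a_{i-1}}$ to segment $\ell_{a_i}$. Because the upper envelope is a continuous piecewise-affine function and these two segments are consecutive pieces of it, they must meet at $\alpha=\alpha_i$; that is,
\[
\alpha_i R_{i-1}-c_{i-1}=\alpha_i R_i-c_i.
\]
Using $R_i>R_{i-1}$ (from the ordering of $I_N$) I would solve for $\alpha_i=(c_i-c_{i-1})/(R_i-R_{i-1})=\alpha_{i-1,i}$, as defined via Observation \ref{obs:indifferent}.

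I do not expect any real obstacle: the only subtlety is justifying that $\ell_{a_{i-1}}$ and $\ell_{a_i}$ are genuinely adjacent segments of the upper envelope (as opposed to the envelope jumping across a third segment at $\alpha_i$), but this is exactly the content of Corollary \ref{cor:what-a-returns}, which says the linearly-implementable actions appear along the envelope in the order $a_1,\dots,a_N$ with no gaps. The only care needed is on the two tie-breaking conventions (agent and envelope mapping in favor of the higher $R_a$), and Lemma \ref{lem:upper-envelope} has already aligned them, so the indifference equation above translates into the equality $\alpha_i=\alpha_{i-1,i}$ without further fuss.
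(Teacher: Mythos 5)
Your argument is correct and follows essentially the same route as the paper: for $i\ge 2$ the paper likewise invokes Lemma~\ref{lem:upper-envelope} and Corollary~\ref{cor:what-a-returns} to identify $\alpha_i$ with the intersection point of the adjacent envelope segments $\ell_{i-1}$ and $\ell_i$, which is $\alpha_{i-1,i}$ by definition. The only cosmetic difference is the base case, where the paper identifies $\alpha_1$ as the crossing of $\ell_1$ with the $x$-axis at $c_1/R_1$ while you argue directly that the zero-cost action is $a_1$ and $\alpha_1=0$; under Assumption A3 these coincide.
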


We apply Corollaries \ref{cor:what-a-returns}-\ref{cor:alphas-equal-intersections} to derive our approximation bounds in Section \ref{sec:linear-contracts}.

The final corollary in this section, which we apply to establish robustness in Section \ref{sec:robustness}, highlights the fact that in terms of linear-implementability, two principal-agent settings whose actions have the same expected outcomes and costs are in effect equivalent. The distributions, outcome values, and even number of outcomes matter for linear-implementability only to the extent of determining the expected outcome of each action. This property is special to linear contracts; in contrast, optimal contracts depend on the details of the distributions beyond their expected outcomes, and this adds to their complexity.

\begin{corollary}
	\label{cor:linear-equivalence}
	Consider two principal-agent settings $(A_n,\Omega_m)$, $(A'_n,\Omega'_{m'})$ for which there exists a bijection $b:A_n\to A'_n$ between the action sets, such that actions $a$ and $b(a)$ have the same expected outcome $R_{a}=R_{b(a)}$ and cost $c_{a}=c_{b(a)}$ for every $a\in A_n$. 
	Let $a,a'$ be the linearly-implementability mappings of the two settings, respectively.
	Then for every parameter $\alpha\in[0,1]$, $b(a(\alpha))=a'(\alpha)$, and the principal's expected payoff from a linear contract with parameter $\alpha$ is the same in both settings.
\end{corollary}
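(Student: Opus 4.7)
The plan is to reduce the claim to a statement about the upper envelope mapping, exploiting Lemma~\ref{lem:upper-envelope}. The key observation is that the upper envelope mapping $u(\cdot)$ for a principal-agent setting is defined entirely in terms of the line segments $\ell_a(\alpha) = \alpha R_a - c_a$ on $[0,1]$, and each segment depends on the action $a$ only through the pair $(R_a, c_a)$. Since the bijection $b$ preserves these pairs by hypothesis, the multiset of segments $\{\ell_a : a \in A_n\}$ is identical to $\{\ell_{a'} : a' \in A'_n\}$ under the identification $a \leftrightarrow b(a)$. Consequently, the two upper envelopes coincide pointwise as subsets of the plane.

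I would then argue that $b(u(\alpha)) = u'(\alpha)$ for every $\alpha \in [0,1]$. Whether the upper envelope is below or above the $x$-axis at $\alpha$ depends only on the segments, so $u(\alpha) = \varnothing$ iff $u'(\alpha) = \varnothing$. When both are nonempty, the segment forming the upper envelope at $\alpha$ in setting $(A_n, \Omega_m)$ corresponds under $b$ to the segment forming the upper envelope at $\alpha$ in setting $(A'_n, \Omega'_{m'})$. The tie-breaking rule specified in the definition of $u$ (choose the action with the highest expected outcome) is preserved by $b$ because $R_a = R_{b(a)}$. Applying Lemma~\ref{lem:upper-envelope} in each setting, we conclude $b(a(\alpha)) = u'(\alpha) = a'(\alpha)$.

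For the expected payoff claim, I would invoke Observation~\ref{obs:lin-utility-payoff}: if a linear contract with parameter $\alpha$ implements action $a^*$, then the principal's expected payoff is $(1-\alpha) R_{a^*}$. Setting $a^* = a(\alpha)$ in the first setting and $b(a^*) = a'(\alpha)$ in the second, and using $R_{a^*} = R_{b(a^*)}$, both payoffs equal $(1-\alpha) R_{a(\alpha)}$. If instead $a(\alpha) = a'(\alpha) = \varnothing$, then neither contract implements any action (under Assumption~A3 this case is actually vacuous since the zero-cost action is always IR, but the argument does not depend on this), so the payoffs trivially agree.

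I do not anticipate a substantive obstacle here: the corollary is essentially a bookkeeping consequence of the fact that Lemma~\ref{lem:upper-envelope} expresses linear implementability purely through $(R_a, c_a)$ data. The only points requiring care are the handling of the tie-breaking rule and the verification that Assumptions~A1--A2 (needed to invoke Lemma~\ref{lem:upper-envelope}) are inherited by both settings, which follows because the bijection preserves all relevant parameters $R_a$ and $c_a$.
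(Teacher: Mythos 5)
Your proposal is correct and follows essentially the same route as the paper: both reduce the first claim to Lemma~\ref{lem:upper-envelope} plus the observation that the upper envelope is determined solely by the lines $\alpha R_a - c_a$, i.e., by the $(R_a,c_a)$ data preserved by $b$, and both derive the payoff claim from the formula $(1-\alpha)R_{a(\alpha)}$ together with $R_{b(a)}=R_a$. Your treatment of the tie-breaking rule and the $\varnothing$ case is slightly more explicit than the paper's, but the argument is the same.
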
	

\section{Robust Optimality of Linear Contracts}
\label{sec:robustness}

In this section we establish our robust optimality result for linear contracts. We show that a linear contract maximizes the principal's expected payoff in ambiguous settings, \emph{in the worst case over the unknown distributions}. 
All deferred proofs appear in Appendix \ref{appx:robustness}.

\begin{theorem}[Robust optimality]
	\label{thm:robustness}
	For every ambiguous principal-agent setting, an optimal linear contract has maximum worst-case expected payoff among all limited liability contracts. 
\end{theorem}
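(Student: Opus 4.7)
The plan is to show, for every limited liability contract $t$, that some linear contract attains worst-case expected payoff at least that of $t$. The first step is to bound the worst-case payoff of $t$ via the upper concave envelope $\hat{t}:[0,x_m]\to\mathbb{R}_{\ge 0}$ of the points $\{(x_j,t_j)\}_{j=1}^m$. Standardly, $\mathbb{E}_F[t]\le\hat{t}(R)$ for every distribution $F$ on the outcomes with mean $R$, with equality attained by a suitable two-point distribution; consequently the adversary can realize $T_i=\hat{t}(R_i)$ for every $i$ and, via arbitrarily small perturbations breaking ties in its favor, force the agent to select any fixed $i^\ast\in\arg\max_i\{\hat{t}(R_i)-c_i\}$. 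Fixing such an $i^\ast$, the worst-case payoff of $t$ is at most $R_{i^\ast}-\hat{t}(R_{i^\ast})$.

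\medskip
The second step is to exhibit a linear contract matching this bound. If $R_{i^\ast}=0$ or $\hat{t}(R_{i^\ast})>R_{i^\ast}$, the bound is $\le 0$ and the linear contract with $\alpha=0$ (which implements the zero-cost action by Assumption~A3) already yields payoff $\ge 0$. Otherwise set $\alpha:=\hat{t}(R_{i^\ast})/R_{i^\ast}\in[0,1]$; by Lemma~\ref{lem:upper-envelope}, the linear contract with parameter $\alpha$ implements $j^\ast:=a(\alpha)$ and yields principal payoff $(1-\alpha)R_{j^\ast}$. Agent incentive compatibility under $\alpha$ together with the optimality of $i^\ast$ for $\hat{t}(\cdot)-c$ gives
\begin{equation*}
\alpha R_{j^\ast}-c_{j^\ast}\;\ge\;\alpha R_{i^\ast}-c_{i^\ast}\;=\;\hat{t}(R_{i^\ast})-c_{i^\ast}\;\ge\;\hat{t}(R_{j^\ast})-c_{j^\ast},
\end{equation*}
and hence $\alpha R_{j^\ast}\ge\hat{t}(R_{j^\ast})$.

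\medskip
The geometric crux is that $R\mapsto\hat{t}(R)/R$ is non-increasing on $(0,x_m]$: this follows in one line from $\hat{t}$ being concave with $\hat{t}(0)=t_1\ge 0$ (using Assumption~A4 so that $x_1=0$, together with limited liability). Since $\hat{t}(R_{i^\ast})/R_{i^\ast}=\alpha$ and $\hat{t}(R_{j^\ast})/R_{j^\ast}\le\alpha$, this monotonicity forces $R_{j^\ast}\ge R_{i^\ast}$; the degenerate case where $\hat{t}(R)/R\equiv\alpha$ on an interval containing both $R_{j^\ast}$ and $R_{i^\ast}$ with $R_{j^\ast}<R_{i^\ast}$ collapses the displayed chain to equalities, so $j^\ast$ and $i^\ast$ tie in $\arg\max_i(\alpha R_i-c_i)$, and the agent's pro-principal tie-breaking (favoring higher $R$) built into $a(\cdot)$ then selects an action of expected outcome at least $R_{i^\ast}>R_{j^\ast}$, contradicting $j^\ast=a(\alpha)$. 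Consequently $(1-\alpha)R_{j^\ast}\ge(1-\alpha)R_{i^\ast}=R_{i^\ast}-\hat{t}(R_{i^\ast})$, which upper-bounds the worst-case payoff of $t$; since the best linear contract dominates the specific $\alpha$ we exhibited, the theorem follows.

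\medskip
\emph{Main obstacle.} The delicate point I expect to take the most care is the interplay between the two tie-breaking conventions appearing in the argument: the adversary's limiting perturbation used to realize $i^\ast$ in the infimum defining the worst-case payoff, and the agent's pro-principal tie-breaking built into the map $a(\cdot)$. These conventions meet precisely in the equality regime where $\hat{t}(R)/R$ is locally constant; a careful resolution there is what ultimately secures $R_{j^\ast}\ge R_{i^\ast}$ and closes the chain of inequalities.
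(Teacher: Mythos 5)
Your proof is correct, and it takes a genuinely different route from the paper's. The paper proceeds in two stages: Lemma~\ref{lem:main-robustness-lemma} reduces arbitrary limited liability contracts to \emph{affine} ones by constructing two-point compatible distributions supported on the endpoints of the chord $l_1$ through $(x_1,t_1)$ and $(x_m,t_m)$, then case-splitting on whether some interior point $(x_j,t_j)$ lies above or below that chord and re-supporting a single action's distribution accordingly; Observation~\ref{obs:affine-vs-linear} then drops the intercept $\alpha_0$ to pass from affine to linear. You instead compute the adversary's pointwise-optimal response in closed form via the upper concave envelope $\hat{t}$, so the worst case is bounded by $R_{i^\ast}-\hat{t}(R_{i^\ast})$ for $i^\ast\in\arg\max_i\{\hat{t}(R_i)-c_i\}$, and you compare directly against the linear contract whose slope is the ray from the origin through $(R_{i^\ast},\hat{t}(R_{i^\ast}))$; the single inequality $\alpha R_{j^\ast}\ge\hat{t}(R_{j^\ast})$ together with the monotonicity of $\hat{t}(R)/R$ (concavity plus $\hat{t}(0)=t_1\ge 0$, which is exactly where A4 and limited liability enter, playing the same role they play in the paper's affine-to-linear step) replaces the paper's case analysis. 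Your route is more streamlined---one adversary construction, no affine intermediary, no case split---while the paper's is more modular and exhibits an explicit distribution profile under which the dominating simple contract does at least as well \emph{simultaneously}, rather than comparing worst cases. The tie-breaking collapse you flag is indeed the only delicate spot, and your resolution (equalities throughout force a tie in $\arg\max_a\{\alpha R_a-c_a\}$, and $a(\cdot)$ breaks such ties toward higher $R_a$ by Lemma~\ref{lem:upper-envelope}) is sound; note only that when $R_{j^\ast}=0$ the ratio $\hat{t}(R)/R$ is undefined there, but the same collapse argument (then $0\ge\hat{t}(0)\ge 0$ forces equality along the chain) disposes of that case without it.
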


In Theorem \ref{thm:robustness}, ``an optimal linear contract'' is well-defined: For a (non-ambiguous) principal-agent setting, a linear contract is optimal when it maximizes the principal's expected payoff over all linear contracts. For an \emph{ambiguous} principal-agent setting, a linear contract has the same expected payoff to the principal over all compatible distributions---see Corollary~\ref{cor:linear-equivalence}. Thus an optimal linear contract can still be defined as maximizing the principal's expected payoff.
In the remainder of the section we prove Theorem~\ref{thm:robustness}.

\subsection{Main Lemma for Robust Optimality}
\label{sub:main-robustness-lemma}

\pdnote{The key step in the proof of Theorem~\ref{thm:robustness} is to show that we may restrict the search for optimally robust contracts to affine contracts.}

\begin{lemma}
	\label{lem:main-robustness-lemma}
	Consider an ambiguous principal-agent setting $(A_n,\Omega_m)$. For every limited liability contract with payment scheme $t$, there exist compatible distributions $\{F_i\}_{i=1}^n$ and an affine contract with parameters $\alpha_0, \alpha_1\ge 0$, such that the affine contract's expected payoff is at least that of contract~$t$ for distributions~$\{F_i\}_{i=1}^n$. 
\end{lemma}

	\begin{figure}[t]
		\centering
	\begin{tikzpicture}[scale=0.8]
	\draw[draw=none, use as bounding box](-2,-0.5) rectangle (6,4);
	\draw[->,thick] (0,0) -- (0,4) node[left] {$t$};
	\draw[->,thick] (0,0) -- (5,0) node[below] {$x$};
	\draw[-,dotted,thick] (0,2.5) -- (1.5,2.5);
	\draw[-,dotted,thick] (1.5,0) -- (1.5,2.5);
	\draw[-,dotted,thick] (0,3) -- (4,3);
	\draw[-,dotted,thick] (4,0) -- (4,3);
	\draw[-,thick] (0,1.5) -- node[above,yshift=-1.5mm,xshift=-2mm] {$l_2$} (1.5,2.5);
	\draw[-,thick] (0,1.5) -- node[below] {$l_1$} (4,3);
	\draw[-,thick] (1.5,2.5) -- node[above,xshift=2.5mm,yshift=1.5mm] {$l_3$} (4,3);
	\node[left] at (0,1.5) {$t_1$};
	\node[left] at (0,2.5) {$t_j$};
	\node[left] at (0,3) {$t_m$};
	\node[below] at (0,0) {$x_1$};
	\node[below] at (1.5,0) {$x_j$};
	\node[below] at (4,0) {$x_m$};	
	\end{tikzpicture}
	\begin{tikzpicture}[scale=0.8]
	\draw[draw=none, use as bounding box](-2,-0.5) rectangle (6,4);
	\draw[->,thick] (0,0) -- (0,4) node[left] {$t$};
	\draw[->,thick] (0,0) -- (5,0) node[below] {$x$};
	\draw[-,dotted,thick] (0.75,0) -- (0.75,1.78125);
	\draw[-,dotted,thick] (0,1.78125) -- (0.75,1.78125);
	\draw[-,dotted,thick] (2,0) -- (2,2.6);
	\draw[-,dotted,thick] (0,2.6) -- (2,2.6);
	\draw[-,dotted,thick] (-1.5,2.25) -- (2,2.25);
	\draw[-,thick] (0,1.5) -- (1.5,2.5);
	\draw[-,thick] (0,1.5) -- (4,3);
	\draw[-,thick] (1.5,2.5) -- (4,3);
	\node[left] at (0,1.78125) {$l_1(R_k)$};
	\node[left] at (-1.5,2.25) {$l_1(R_{i^*})$};
	\node[left] at (0,2.6) {$l_3(R_{i^*})$};
	\node[below] at (0.75,0) {$R_k$};
	\node[below] at (2,0) {$R_{i^*}$};
	\end{tikzpicture}
		\caption{{\bf Visualization of the proof of Lemma \ref{lem:main-robustness-lemma}---Case I.} In the case depicted here, point $(x_j,t_j)$ is strictly above line $l_1$. This case in addressed in Claim \ref{cla:above}.}\label{fig:robust-linear-case1}
	\end{figure}

\vspace*{-4pt}
\begin{proof}
	The payment scheme $t$ maps the outcomes $0=x_1<\dots< x_m$ to payments $t_1,\dots,t_m\ge 0$. 
	Consider the two extreme outcomes $x_1, x_m$ %
	and their corresponding payments $t_1,t_m$. We begin by defining simple compatible distributions $\{F'_i\}_{i=1}^n$ whose support is the extreme outcomes, as follows. For every distribution-ambiguous action $a_i$, set $F'_{i,m}:= \frac{R_i}{x_m}$
	(this is a valid probability since $R_i\le x_m$; otherwise there could not have been compatible distributions). Set $F'_{i,1}:=1-F'_{i,m}$ and let the other probabilities be zero. The expected outcome of distribution $F'_i$ is $F'_{i,1}x_1+F'_{i,m}x_m=R_i$. The defined distributions already enable us to prove the lemma for the case of $t_1>t_m$:
	
	\begin{claim}
		\label{cla:downward-slope}
		Lemma \ref{lem:main-robustness-lemma} holds for the case of $t_1>t_m$.
	\end{claim}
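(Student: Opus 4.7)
My plan is to exhibit the trivial affine contract $(\alpha_0,\alpha_1) = (0,0)$ together with the two-point compatible distributions $\{F'_i\}$ already constructed above the claim (supported on $\{x_1,x_m\} = \{0,x_m\}$ with $F'_{i,m} = R_i/x_m$). The comparison then reduces to checking that both the all-zero affine contract and contract $t$ under $\{F'_i\}$ drive the agent to the zero-cost action $a_0$ guaranteed by A3, after which the two payoffs differ only by a nonnegative expected-payment term.

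For the affine side, the all-zero contract makes every action's expected payment equal to $0$, so the agent's utility at $a_i$ is $-c_i$; combining A1 with A3 shows $a_0$ is the unique zero-cost action and hence strictly preferred, yielding principal payoff $R_{a_0}$. For contract $t$ under $\{F'_i\}$, the agent's utility at $a_i$ simplifies to $t_1 - (t_1-t_m)R_i/x_m - c_i$. Since $t_1 > t_m$ and, by A1, higher $R_i$ comes with strictly higher $c_i$, this expression is strictly decreasing along the $R_i$-sorted ordering of actions, so the agent again selects the minimum-$R_i$ action, which is $a_0$. The principal's payoff is therefore $R_{a_0} - T_{a_0}$, and since $T_{a_0} = (1-R_{a_0}/x_m)t_1 + (R_{a_0}/x_m)t_m$ is a convex combination of $t_1,t_m \ge 0$ it is nonnegative, so the payoff is at most $R_{a_0}$, matching the affine payoff.

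The only subtle step is pinning down the agent's choice under $t$ with these two-point distributions: this is precisely where assumption A1 earns its keep, ruling out any alternative with larger $R_i$ but disproportionately smaller $c_i$ becoming attractive once the payment scheme slopes downward. Everything else is a one-line nonnegativity check, and the same outline would also go through with the constant affine contract $(\alpha_0,\alpha_1)=(t_m,0)$ in place of $(0,0)$ should a non-trivial affine witness be preferred.
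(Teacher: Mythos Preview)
Your proof is correct and follows essentially the same approach as the paper's: both use the two-point distributions $\{F'_i\}$ together with the zero (linear/affine) contract, arguing that under $t$ the agent is driven toward the zero-cost action so the principal's payoff is at most $R_{a_0}$, which the zero contract matches. The only minor difference is that you invoke A1 to pin down the agent's choice under $t$ exactly, whereas the paper argues (without A1, which it notes is dispensable in this section) only that the chosen action $a_{i^*}$ satisfies $R_{i^*}\le R_{a_0}$ via the IC constraint against $a_0$ and the fact that expected payments are decreasing in $R_i$.
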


	\vspace*{-6pt}
	 \begin{myproof}[Proof of Claim \ref{cla:downward-slope}]
	 A proof of this claim appears in 
	 Appendix~\ref{appx:robustness}.
	 %
	 \end{myproof}
	 
	Assume from now on that $t_1\le t_m$. If $t$ is affine, this means that its slope parameter $\alpha_1$ must be non-negative. Similarly, we can write $t_1=\alpha_0+\alpha_1 x_1$ and plug in our assumption that $x_1=0$ to get $t_1=\alpha_0$, and so by limited liability ($t\ge 0$), $\alpha_0$ must also be non-negative. Thus if $t$ is affine, Lemma \ref{lem:main-robustness-lemma} holds. We focus from now on on the case that $t$ is non-affine; this guarantees the existence of a point $(x_j,t_j)$ as appears in Figures \ref{fig:robust-linear-case1} and \ref{fig:robust-linear-case2}. We argue this formally and then proceed by case analysis.
	
	\begin{claim}
		\label{cla:non-affine}
		If $t$ is non-affine, there exists an index $1<j<m$ such that the $3$ points $(x_1,t_1)$, $(x_j,t_j)$ and $(x_m,t_m)$ on the Euclidean plane are non-collinear.
	\end{claim}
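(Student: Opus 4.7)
The plan is to prove Claim~\ref{cla:non-affine} by contrapositive: I will assume that for \emph{every} index $j$ with $1<j<m$, the three points $(x_1,t_1)$, $(x_j,t_j)$, $(x_m,t_m)$ are collinear, and deduce that $t$ must in fact be affine as a function of the outcomes. Since an ambiguous setting has $m\ge 3$, the claim is at least about one interior index, and the relevant geometric facts are all elementary.

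First I would note that because $x_1=0<x_m$, the two points $(x_1,t_1)$ and $(x_m,t_m)$ are distinct in their first coordinate, so there is a unique line in the Euclidean plane through them. Write this line as $y = \alpha_0 + \alpha_1 x$, where $\alpha_0 = t_1$ and $\alpha_1 = (t_m - t_1)/(x_m - x_1)$. By the trivial cases $j=1$ and $j=m$, the points $(x_1,t_1)$ and $(x_m,t_m)$ lie on this line by construction.

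Next I would use the collinearity assumption: for every $j$ with $1<j<m$, the point $(x_j,t_j)$ lies on the unique line determined by $(x_1,t_1)$ and $(x_m,t_m)$, so $t_j = \alpha_0 + \alpha_1 x_j$. Combining the two previous sentences, the payment scheme $t$ satisfies $t_j = \alpha_0 + \alpha_1 x_j$ for all $j \in [m]$, i.e., $t$ is an affine function of the outcomes. This contradicts the assumption made at the start of the case analysis that $t$ is non-affine, completing the contrapositive argument.

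There is no real obstacle here; the content of the claim is essentially the observation that a function defined on $m\ge 3$ points is affine iff every triple containing the two endpoints is collinear. The only very small subtlety is to remember that $x_1 \ne x_m$ (which follows from $x_1 = 0$ under Assumption A4 together with the fact that an ambiguous setting has at least three \emph{distinct} outcomes, so $x_m > 0$), so that the line through the two endpoints is well-defined. I would include a one-line remark to that effect and then conclude.
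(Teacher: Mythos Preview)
Your proposal is correct and is essentially the same argument as the paper's, carried out by contrapositive. The paper's version is slightly more roundabout: it first invokes the characterization that an affine map sends every collinear triple to a collinear triple, deduces the existence of \emph{some} non-collinear triple $(x,t(x)),(x',t(x')),(x'',t(x''))$, and then notes that not all three of these can lie on the line through $(x_1,t_1)$ and $(x_m,t_m)$, yielding the desired interior index. Your direct contrapositive (``if every interior $(x_j,t_j)$ lies on the endpoint line then $t$ is affine'') reaches the same conclusion with less machinery; the only substantive ingredient in both is that $x_1\ne x_m$ so the endpoint line is well-defined, which you correctly flag.
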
 

	 \vspace*{-6pt}
	 \begin{myproof}[Proof of Claim \ref{cla:non-affine}]
	 A proof of this claim appears in Appendix~\ref{appx:robustness}.
	 \end{myproof}

	We introduce the following notation -- denote the line between $(x_1,t_1)$ and $(x_m,t_m)$ by $l_1$, the line between $(x_1,t_1)$ and $(x_j,t_j)$ by $l_2$, and the line between $(x_j,t_j)$ and $(x_m,t_m)$ by $l_3$ (see Figures \ref{fig:robust-linear-case1} and \ref{fig:robust-linear-case2}).
	We denote the parameters of line $l_1$ by $\alpha_0$ and $\alpha_1$ (i.e., $t_1=\alpha_0 +\alpha_1 x_1$ and $t_m=\alpha_0 +\alpha_1 x_m$).
	These naturally give rise to a corresponding affine contract. 
	As argued above, since $t_1\le t_m$ we have $\alpha_1\ge 0$, and since $x_1=0$ and $t_1\ge 0$ we have $\alpha_0\ge 0$, so the affine contract has non-negative parameters. 
	
	Recall that the support of compatible distributions $\{F'_i\}_{i=1}^n$ is the endpoints of $l_1$. 
	We define alternative compatible distributions $\{F''_i\}_{i=1}^n$, whose support is either the endpoints of $l_2$ or of $l_3$, as follows: For every distribution-ambiguous action $a_i$, if $R_i\le x_j$ set $F''_{i,j}:=\frac{R_i}{x_j}$ (by assumption this is a valid probability), and $F''_{i,1}:=1-F''_{i,j}$. If $R_i>x_j$ set $F''_{i,m}:=\frac{R_i-x_j}{x_m-x_j}$ and $F''_{i,j}:=1-F''_{i,m}$. All other probabilities are set to zero. Observe that in either case, the expected outcome of distribution $F''_i$ is $R_i$. 
	With the two sets of compatible distributions $\{F'_i\}_{i=1}^n$ and $\{F''_i\}_{i=1}^n$ at hand, the analysis proceeds by addressing separately the two cases depicted in Figures \ref{fig:robust-linear-case1} and \ref{fig:robust-linear-case2}.

	\begin{figure}[t]
		\centering
		\begin{tikzpicture}[scale=0.8]
	\draw[draw=none, use as bounding box](-2,-0.5) rectangle (6,4);
	\draw[->,thick] (0,0) -- (0,4) node[left] {$t$};
	\draw[->,thick] (0,0) -- (5,0) node[below] {$x$};
	\draw[-,dotted,thick] (0,1) -- (1.5,1);
	\draw[-,dotted,thick] (1.5,0) -- (1.5,1);
	\draw[-,dotted,thick] (0,3) -- (4,3);
	\draw[-,dotted,thick] (4,0) -- (4,3);
	\draw[-,thick] (0,1.5) -- node[above,xshift=1.6mm,yshift=-1mm] {$l_2$} (1.5,1);
	\draw[-,thick] (0,1.5) -- node[above] {$l_1$} (4,3);
	\draw[-,thick] (1.5,1) -- node[below] {$l_3$} (4,3);
	\node[left] at (0,1.5) {$t_1$};
	\node[left] at (0,1) {$t_j$};
	\node[left] at (0,3) {$t_m$};
	\node[below] at (0,0) {$x_1$};
	\node[below] at (1.5,0) {$x_j$};
	\node[below] at (4,0) {$x_m$};	
	\end{tikzpicture}
		\caption{
			{\bf Visualization of the proof of Lemma \ref{lem:main-robustness-lemma}---Case II.} In the case depicted here,
			point $(x_j,t_j)$ is strictly below line $l_1$. This case in addressed in Claim \ref{cla:below}.}\label{fig:robust-linear-case2}
	\end{figure}

	The following simple observation will be useful in the case analysis:
	
	\begin{observation}%
	\label{obs:two-support}
		Consider a contract $t$ and two outcomes $x,x'$; let $l:\mathbb{R}\to\mathbb{R}$ be the line determined by points $(x,t(x))$ and $(x',t(x'))$.
		If action $a_i$ induces an outcome distribution $F_i$ over the support $\{x,x'\}$ with expectation $R_i$, then its expected payment to the agent is $l(R_i)$.
	\end{observation}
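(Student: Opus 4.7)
The statement is essentially the remark that, for a two-point distribution, the expected value of any function of the outcome equals the linear interpolation between the function's values at the two support points, evaluated at the mean. So the plan is to parametrize the distribution, compute both sides, and check they agree.

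First I would write the distribution explicitly. Since $F_i$ is supported on $\{x, x'\}$, let $p := F_i(\{x\})$ and $1-p := F_i(\{x'\})$, assuming w.l.o.g.\ that $x < x'$ (the degenerate case $x = x'$ is trivial since both sides reduce to $t(x)$). The expectation constraint $\mathbb{E}_{y \sim F_i}[y] = R_i$ becomes $p\, x + (1-p)\, x' = R_i$, which uniquely determines
\[
p \;=\; \frac{x' - R_i}{x' - x}, \qquad 1-p \;=\; \frac{R_i - x}{x' - x}.
\]
Note $p \in [0,1]$ because $R_i$ lies in $[x, x']$ (being the mean of a distribution supported there).

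Next I would compute both sides. The expected payment to the agent under $F_i$ is
\[
T_i \;=\; p\, t(x) + (1-p)\, t(x') \;=\; \frac{x' - R_i}{x' - x}\, t(x) + \frac{R_i - x}{x' - x}\, t(x').
\]
On the other hand, the line $l$ through $(x, t(x))$ and $(x', t(x'))$ has the point-slope form
\[
l(z) \;=\; t(x) + \frac{t(x') - t(x)}{x' - x}\,(z - x),
\]
and evaluating at $z = R_i$ and rearranging gives exactly the same convex combination as above, so $T_i = l(R_i)$.

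There is no real obstacle here; the only point worth verifying carefully is that the parametrization is well-defined, i.e., that $R_i \in [x, x']$ so that $p \in [0,1]$, which is immediate from the assumption that $F_i$ is supported on $\{x, x'\}$ with mean $R_i$. The observation is then just the identity ``mean commutes with affine functions,'' applied to the affine function $l$.
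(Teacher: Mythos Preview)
Your proof is correct; it is exactly the elementary computation that the observation encodes (linearity of expectation applied to an affine function, or equivalently that a two-point distribution is a convex combination determined by its mean). The paper itself does not supply a proof for this observation and treats it as immediate, so there is nothing further to compare.
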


	\begin{claim}
		\label{cla:above}
		Lemma \ref{lem:main-robustness-lemma} holds for the case that $(x_j,t_j)$ is strictly above $l_1$.
	\end{claim}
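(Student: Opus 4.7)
The plan is to pair the already-defined distributions $\{F'_i\}_{i=1}^n$ (supported on the extreme outcomes $\{x_1,x_m\}$ with expectation $R_i$) with the affine contract whose parameters are the intercept $\alpha_0$ and slope $\alpha_1$ of $l_1$. These parameters were already shown to be nonnegative earlier in the proof, so this affine contract is an admissible candidate for Lemma~\ref{lem:main-robustness-lemma}.

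First I would apply Observation~\ref{obs:two-support} with the line $l_1$: since each $F'_i$ is supported on $\{x_1,x_m\}$ with mean $R_i$, the expected payment of contract $t$ under $F'_i$ equals $l_1(R_i)=\alpha_0+\alpha_1 R_i$. But $\alpha_0+\alpha_1 R_i$ is also the expected payment of the affine contract with parameters $(\alpha_0,\alpha_1)$ at action $a_i$, under \emph{any} compatible distribution (an affine contract's expected payment depends only on $R_i$). Consequently, for every action $a_i$ the agent's expected utility under $t$ with $\{F'_i\}$ coincides exactly with his expected utility under the affine contract, so by the standard tie-breaking rule in favor of the principal the agent selects the same action in both scenarios, and the principal obtains identical expected payoffs. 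In particular, the affine contract's expected payoff is (weakly) at least that of $t$ under $\{F'_i\}$, which is what the claim demands.

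I do not foresee a serious obstacle here: the whole argument reduces to a single invocation of Observation~\ref{obs:two-support}. Interestingly, the geometric hypothesis of the current case---that $(x_j,t_j)$ lies strictly above $l_1$---does not enter the calculation directly, because $F'$ places no mass on $x_j$ at all. Its role is rather structural: if one tried instead to use the alternative distributions $\{F''_i\}$, which put positive mass on $x_j$, then in this ``above'' case Observation~\ref{obs:two-support} applied to $l_2$ or $l_3$ would give an expected payment \emph{exceeding} $l_1(R_i)$, which could only help $t$ rather than hurt it. Thus $F'$ is the natural adversarial choice here, while the complementary ``below'' case (Figure~\ref{fig:robust-linear-case2}) is precisely where $\{F''_i\}$ will be needed, since those distributions drive $t$'s expected payment strictly below the affine benchmark $l_1$.
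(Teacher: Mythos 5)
Your proof is correct, but it takes a genuinely different (and simpler) route than the paper's. The paper keeps $F'_k$ for every action except the action $a_{i^*}$ incentivized by the affine contract, and switches $a_{i^*}$ to $F''_{i^*}$; since $(x_j,t_j)$ lies strictly above $l_1$, Observation~\ref{obs:two-support} applied to $l_2$ or $l_3$ shows that $t$'s expected payment for $a_{i^*}$ weakly increases while all other expected payments are unchanged, so $t$ still implements $a_{i^*}$ but at a weakly higher payment than the affine contract. You instead use $F'_i$ for \emph{every} action, under which $t$'s expected payments, the agent's utilities, the principal's per-action payoffs, and hence the tie-breaking all coincide exactly with those of the affine contract; equality certainly delivers the required ``at least.'' This is a legitimately cleaner argument: it never invokes the case hypothesis, it proves Claims~\ref{cla:above} and~\ref{cla:below} simultaneously, and it renders Claim~\ref{cla:non-affine} and the case split unnecessary for establishing Lemma~\ref{lem:main-robustness-lemma} whenever $t_1\le t_m$. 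One correction to your closing speculation, though: raising the expected payment of the \emph{incentivized} action via $F''$ \emph{lowers} the principal's payoff under $t$, i.e., it hurts $t$ rather than helps it---that is exactly the mechanism the paper exploits in this ``above'' case; symmetrically, in the ``below'' case the paper assigns $F''$ to the \emph{non}-incentivized actions so that their payments drop without disturbing the agent's choice of $a_{i^*}$.
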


	\vspace*{-6pt}
	\begin{myproof}[Proof of Claim \ref{cla:above}]		
	We prove the claim by showing that the affine contract with parameters $\alpha_0,\alpha_1$ (corresponding to $l_1$) has at least as high expected payoff to the principal as that of contract $t$ for distributions $\{F_i\}_{i=1}^n$, which are defined as follows:
	Let $a_{i^*}$ be the action incentivized by the affine contract with parameters $\alpha_0,\alpha_1$. Set $F_{i^*}=F''_{i^*}$ and $F_{k}=F'_{k}$ for every $k\ne i^*$.	
	We argue that for distributions $\{F_i\}_{i=1}^n$ as defined, contract $t$ will also incentivize action $a_{i^*}$, but at a (weakly) higher expected payment to the agent compared to the affine contract. 
	
	Consider first the affine contract. By definition, for every action $a_i$ with expected outcome $R_i$, the expected payment to the agent for choosing $a_i$ is $\alpha_0+\alpha_1 R_i$.
	Now consider contract $t$. For every action $a_k$ where $k\ne i^*$, by Observation \ref{obs:two-support} the expected payment to the agent taken over $F'_k$ is $l_1(R_k)=\alpha_0+\alpha_1 R_k$, i.e., the same as in the affine contract. For action $a_{i^*}$, by Observation~\ref{obs:two-support} the expected payment to the agent taken over $F''_k$ is either $l_2(R_{i^*})$ or $l_3(R_{i^*})$. In either case, since $l_2$ and $l_3$ are above $l_1$, the expected payment (weakly) exceeds $l_1(R_{i^*})$. Since $a_{i^*}$ maximizes the agent's expected utility in the affine contract, and the expected payment for $a_{i^*}$ only increases in contract $t$ while staying the same for other actions, this completes the proof of Claim \ref{cla:above}.
	\end{myproof}

	\begin{claim}
		\label{cla:below}
		Lemma \ref{lem:main-robustness-lemma} holds for the case that $(x_j,t_j)$ is strictly below $l_1$.
	\end{claim}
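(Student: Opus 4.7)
The plan is to mirror the argument of Claim~\ref{cla:above}, swapping the roles of $F'$ and $F''$. Let $a_{i^*}$ be the action incentivized by the affine contract with parameters $(\alpha_0,\alpha_1)$ corresponding to $l_1$; these are nonnegative because $\alpha_0=t_1\ge 0$ (limited liability) and $\alpha_1=(t_m-t_1)/x_m\ge 0$ (by the case assumption $t_1\le t_m$). I would set $F_{i^*}:=F'_{i^*}$, supported on the extreme outcomes $\{x_1,x_m\}$, and $F_k:=F''_k$ for every $k\ne i^*$, supported on either $\{x_1,x_j\}$ or $\{x_j,x_m\}$; compatibility is by construction of $F'$ and $F''$.

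The key computation uses Observation~\ref{obs:two-support}. Under $F'_{i^*}$, contract $t$ pays $T_{i^*}=l_1(R_{i^*})=\alpha_0+\alpha_1 R_{i^*}$, exactly the affine payment. Under $F''_k$, contract $t$ pays $T_k=l_2(R_k)$ or $T_k=l_3(R_k)$, depending on whether $R_k\le x_j$; because $(x_j,t_j)$ lies strictly below $l_1$, both $l_2$ and $l_3$ lie weakly below $l_1$ on their respective subdomains, so $T_k\le l_1(R_k)$. Hence under $t$ the utility of $a_{i^*}$ equals $l_1(R_{i^*})-c_{i^*}$, while for every $k\ne i^*$ the utility is $T_k-c_k\le l_1(R_k)-c_k\le l_1(R_{i^*})-c_{i^*}$, where the last inequality is incentive compatibility of $a_{i^*}$ under the affine contract. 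Thus $a_{i^*}$ is a best response for the agent under $t$; if it is selected, the principal's payoff is exactly $R_{i^*}-l_1(R_{i^*})$, matching the affine's payoff.

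The main subtlety I expect to handle is tie-breaking. If some $a_k$ with $k\ne i^*$ ties $a_{i^*}$ in utility under $t$, the two inequalities above must both be equalities, forcing $T_k=l_1(R_k)$ and $l_1(R_k)-c_k=l_1(R_{i^*})-c_{i^*}$; in particular $a_k$ also ties $a_{i^*}$ under the affine contract. Principal-favorable tie-breaking under the affine contract gives $R_{i^*}-l_1(R_{i^*})\ge R_k-l_1(R_k)=R_k-T_k$, so even if the agent's tie-break under $t$ selects $a_k$, the principal's payoff is still at most $R_{i^*}-l_1(R_{i^*})$, the affine's payoff. Unlike Claim~\ref{cla:above}, where $t$'s payment for $a_{i^*}$ strictly exceeds the affine's, the mirrored construction here places $a_{i^*}$ on the $l_1$ segment and yields only equality of payoffs---still sufficient for the lemma.
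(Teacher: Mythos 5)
Your proposal is correct and follows essentially the same route as the paper: the paper's proof of Claim~\ref{cla:below} likewise sets $F_{i^*}=F'_{i^*}$ and $F_k=F''_k$ for $k\ne i^*$, and observes that the expected payment for $a_{i^*}$ under $t$ matches the affine contract while the payments for all other actions weakly decrease, so $a_{i^*}$ remains incentivized at equal payoff. Your explicit treatment of the tie-breaking case is a detail the paper leaves implicit, and it is handled correctly.
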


	 \vspace*{-6pt}
	 \begin{myproof}[Proof of Claim \ref{cla:below}]
	 	A proof of this claim appears in Appendix~\ref{appx:robustness}.
	 \end{myproof}

	This completes the analysis of the cases depicted in Figures \ref{fig:robust-linear-case1} and \ref{fig:robust-linear-case2}, thus proving Lemma~\ref{lem:main-robustness-lemma}. 
\end{proof}

\subsection{Proof of Theorem \ref{thm:robustness}}
\label{sub:robustness-wrap-up} 

\pdnote{Having established Lemma~\ref{lem:main-robustness-lemma}, Theorem~\ref{thm:robustness} is now easy to prove using the following observation:}

\begin{observation}
	\label{obs:affine-vs-linear}
	Consider an affine contract with parameters $\alpha_0,\alpha_1\ge 0$. For any distributions $\{F_i\}_{i=1}^n$, the expected payoff to the principal from the affine contract is at most the expected payoff from the linear contract with parameter $\alpha=\alpha_1$.
\end{observation}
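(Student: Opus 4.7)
The plan is to argue that replacing the affine contract $t_j = \alpha_0 + \alpha_1 x_j$ by the linear contract $t_j = \alpha_1 x_j$ amounts to withholding a constant ``gift'' of $\alpha_0$ from the agent, which leaves his incentive ordering intact, keeps the same action implemented, and reduces the principal's expected payment by exactly $\alpha_0$.

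First I would fix arbitrary distributions $\{F_i\}_{i=1}^n$ with expected outcomes $R_i = \mathbb{E}_{x \sim F_i}[x]$ and compute the agent's expected utility for action $a_i$ under each contract: $\alpha_0 + \alpha_1 R_i - c_i$ under the affine contract and $\alpha_1 R_i - c_i$ under the linear one. These two utility vectors differ by the additive constant $\alpha_0$, so the two contracts induce the identical ranking of actions by expected utility; in particular they share the same set of IC-maximizing actions.

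Next I would check that the IR constraint and the tie-breaking rule do not shift the implemented action. By Assumption~A3 there is a zero-cost action whose linear-contract utility $\alpha_1 R_0$ is nonnegative, and since any IC-maximizing action has utility at least as large under the linear contract, it remains IR there (it is trivially IR under the affine contract because $\alpha_0 \ge 0$ only raises utilities). For tie-breaking, the principal's payoff from a tied action $a_i$ is $(1-\alpha_1)R_i - \alpha_0$ under affine and $(1-\alpha_1)R_i$ under linear; these are monotone in $R_i$ in the same direction, so the same tie-breaking winner is selected. Hence the same action $a^*$ is implemented by both contracts.

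Once the common implemented action $a^*$ is pinned down, the conclusion is immediate: the principal's expected payoff equals $R_{a^*} - \alpha_0 - \alpha_1 R_{a^*}$ under the affine contract and $R_{a^*} - \alpha_1 R_{a^*}$ under the linear contract, a difference of exactly $\alpha_0 \ge 0$ in favor of the linear contract. The main (and really only) subtlety will be ruling out that removing $\alpha_0$ causes the affine-preferred action to lose IR and thus shifts the implemented action to something worse; this is precisely where Assumption~A3 is used, since the zero-cost action pins a nonnegative baseline on the maximum agent utility under the linear contract.
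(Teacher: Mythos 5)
Your proposal is correct and follows essentially the same route as the paper: both arguments observe that dropping the intercept $\alpha_0$ shifts every action's expected utility down by the same constant, so the implemented action is unchanged while the expected payment falls by $\alpha_0$, raising the principal's payoff by exactly that amount. The paper packages the ``same action is implemented'' step as a vertical shift of the upper envelope from Lemma~\ref{lem:upper-envelope}, whereas you verify it directly and also spell out the IR and tie-breaking checks that the paper leaves implicit.
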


\begin{corollary}%
	\label{cor:affine-vs-linear}
	For every affine contract with parameters $\alpha_0,\alpha_1\ge0$, there is a linear contract with (weakly) higher worst-case expected payoff. %
\end{corollary}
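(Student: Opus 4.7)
The plan is to derive the corollary directly from Observation~\ref{obs:affine-vs-linear} by taking infima over compatible distributions. Fix any affine contract with parameters $\alpha_0,\alpha_1\ge 0$, and designate as its "linear counterpart" the linear contract with parameter $\alpha:=\alpha_1$ (this is a valid linear contract since $\alpha_1\ge 0$). By Observation~\ref{obs:affine-vs-linear}, for every tuple of compatible distributions $\{F_i\}_{i=1}^n$, the expected payoff of the affine contract is at most the expected payoff of the linear counterpart.

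Now I would take the infimum over all $\{F_i\}_{i=1}^n$ compatible with the distribution-ambiguous actions $(R_i,c_i)_{i=1}^n$. Since both contracts are being evaluated over the exact same set of compatible distributions (the set depends only on the ambiguous principal-agent setting, not on the contract), a pointwise inequality $\mathrm{payoff}_{\mathrm{affine}}(\{F_i\}) \le \mathrm{payoff}_{\mathrm{linear}}(\{F_i\})$ yields $\inf_{\{F_i\}}\mathrm{payoff}_{\mathrm{affine}}(\{F_i\}) \le \inf_{\{F_i\}}\mathrm{payoff}_{\mathrm{linear}}(\{F_i\})$. This is exactly the statement that the linear contract has (weakly) higher worst-case expected payoff than the affine contract, completing the proof.

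There is no real obstacle here; the only subtlety worth flagging when writing the full proof is that the "expected payoff" implicitly involves the agent's IC/IR choice of action, which could in principle differ between the affine and linear contracts. However, since the affine contract's payment differs from the linear one by the additive constant $\alpha_0$ (independent of the outcome), it shifts every action's expected utility by the same constant $\alpha_0$. Hence the set of IC (and IR, using Assumption A3) actions is identical under the two contracts, the tie-breaking rule selects the same action, and the principal simply pays $\alpha_0$ more per realization under the affine contract — which is precisely the content of Observation~\ref{obs:affine-vs-linear} that we are invoking.
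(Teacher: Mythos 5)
Your proposal is correct and matches the paper's (implicit) argument: the corollary is stated as an immediate consequence of Observation~\ref{obs:affine-vs-linear}, obtained exactly as you describe by passing the pointwise inequality over compatible distributions to the infimum. Your extra remark about the implemented action being unchanged under the additive shift $\alpha_0$ (with IR handled via Assumption A3) is the content of the paper's proof of the Observation itself, so nothing is missing.
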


\begin{proof}[Proof of Theorem \ref{thm:robustness}]
Consider an ambiguous principal-agent setting. For every limited liability contract $t$, by Lemma \ref{lem:main-robustness-lemma} there exist compatible distributions $\{F_i\}_{i=1}^n$ and an affine contract with parameters $\alpha_0,\alpha_1\ge 0$ such that the worst-case expected payoff of contract $t$ is at most the expected payoff of the affine contract for distributions $\{F_i\}_{i=1}^n$. But the expected payoff of an affine contract is identical for all compatible distributions, and so for every limited liability contract $t$ there exists an affine contract with parameters $\alpha_0,\alpha_1\ge 0$ and higher worst-case expected payoff. By Corollary \ref{cor:affine-vs-linear}, the optimal linear contract has even higher worst-case expected payoff, completing the proof.  
\end{proof}

\section{Approximation Guarantees of Linear Contracts}
\label{sec:linear-contracts}

In this section we study linear contracts and their approximation guarantees. We show tight bounds on the approximation guarantees of linear contracts in all relevant parameters of the model---number of actions and outcomes (Section~\ref{sub:approx-num-actions}), spread of rewards (Section~\ref{sub:approx-range-E-rewards}) and costs (Section~\ref{sub:approx-range-costs}).
Our upper bounds in fact apply to the stronger benchmark of optimal welfare. Our lower bounds continue to hold if the instances are required to satisfy MLRP.
See Appendix~\ref{appx:approximation} for 
for all proofs omitted from this section.

\subsection{Bounds in the Number of Actions}
\label{sub:approx-num-actions}
\pdnote{Our first pair of results provides tight bounds on the approximation guarantees of linear contracts, as parameterized by the number of actions $n$.}

\begin{theorem}%
	\label{thm:linear-upper-bound}
	Consider a principal-agent setting $(A_n,\Omega_m)$ with $n$ actions and $N\le n$ linearly implementable actions. Then the multiplicative loss in the principal's expected payoff from using a linear contract rather than an arbitrary one is at most $N$. %
\end{theorem}

\begin{theorem}%
	\label{thm:lower-bound}
	For every $n$ and $\varepsilon>0$, 
	there is a principal-agent setting $(A_n,\Omega_m)$ with $n$ actions and $m = n$ outcomes for which MLRP holds,
	such that the multiplicative loss in the principal's expected payoff from using a linear contract rather than an arbitrary one is at least $n-\varepsilon$. %
\end{theorem}

The optimal contract in the construction used to prove Theorem~\ref{thm:lower-bound} is monotone. It follows that the gap between the best linear contract and the optimal monotone contract can $n$. In Appendix \ref{appx:LB-strong} we strengthen Theorem~\ref{thm:lower-bound} by showing that it applies even if $m = 3$, thus implying that the approximation ratio as parametrized by the number of outcomes $m$ can be unbounded.

\paragraph{Proof of upper bound.}

Our proof of Theorem~\ref{thm:linear-upper-bound} exploits the geometric insights developed in Section~\ref{sec:geometry}. The key tools are the following observation (Observation~\ref{obs:delta-welfare}) and the two lemmas (Lemma~\ref{lem:linear-bound-on-opt} and Lemma~\ref{lem:opt-less-than-welfare}) below.

Recall that $I_N$ denotes the set of $N\le n$ linearly-implementable actions, 
indexed such that their expected outcomes are increasing, i.e., $I_N=\{a_1,\dots,a_N\}$ 
and $R_1<\dots<R_N$. Note that by Assumption A1, $c_1<\dots<c_N$, and recall that this does \emph{not}
imply that $R_1 - c_1 \leq R_2 - c_2 \leq \dots \leq R_N-c_N$.

\begin{observation}
	\label{obs:delta-welfare}
	Consider two actions $a,a'$ such that $a$ has higher expected outcome and (weakly) higher welfare, i.e., $R_a > R_{a'}$ and $R_a-c_{a} \ge R_{a'}-c_{a'}$. Let $\alpha_{a',a}=\frac{c_a-c_{a'}}{R_a-R_{a'}}$. Then
	\begin{equation}
	(R_a - c_a) - (R_{a'} - c_{a'}) \leq (1-\alpha_{a',a}) R_a.\label{eq:intuition}
	\end{equation}
\end{observation}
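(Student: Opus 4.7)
The proof is a short algebraic manipulation; the idea is to rewrite the welfare gap using the definition of the ``indifference $\alpha$'' and then apply two sign facts. First, I would expand the left-hand side as
\[
(R_a - c_a) - (R_{a'} - c_{a'}) = (R_a - R_{a'}) - (c_a - c_{a'}),
\]
and then factor out $R_a - R_{a'}$ by substituting $c_a - c_{a'} = \alpha_{a',a}(R_a - R_{a'})$ from the definition of $\alpha_{a',a}$. This gives the clean identity
\[
(R_a - c_a) - (R_{a'} - c_{a'}) = (1 - \alpha_{a',a})(R_a - R_{a'}).
\]

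The remaining step is to show $(1 - \alpha_{a',a})(R_a - R_{a'}) \le (1 - \alpha_{a',a}) R_a$, which is equivalent to $(1 - \alpha_{a',a}) R_{a'} \ge 0$. I would verify this by checking the two factors separately. The factor $R_{a'}$ is non-negative by the model's assumption that all rewards are non-negative. For the factor $1 - \alpha_{a',a}$, I would use the hypothesis $R_a - c_a \ge R_{a'} - c_{a'}$, which rearranges to $c_a - c_{a'} \le R_a - R_{a'}$; dividing by $R_a - R_{a'} > 0$ yields $\alpha_{a',a} \le 1$, hence $1 - \alpha_{a',a} \ge 0$.

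Putting these together gives the desired inequality $(R_a - c_a) - (R_{a'} - c_{a'}) \le (1 - \alpha_{a',a}) R_a$. There is no real obstacle here; the only subtlety is making sure both factors in $(1-\alpha_{a',a})R_{a'}$ are non-negative, which is exactly what the welfare-ordering hypothesis and the non-negativity of rewards deliver.
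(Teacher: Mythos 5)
Your proof is correct and is essentially the same argument as the paper's: both derive $\alpha_{a',a}\le 1$ from the welfare hypothesis and then discard a term of the form $(1-\alpha_{a',a})R_{a'}\ge 0$, using non-negativity of rewards. The only difference is cosmetic (you factor the left-hand side as $(1-\alpha_{a',a})(R_a-R_{a'})$ first, while the paper bounds $R_{a'}-c_{a'}$ from below), so there is nothing to add.
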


\begin{proof}
	Since $R_a-c_{a} \ge R_{a'}-c_{a'}$ \pdnote{we have} $R_a-R_{a'} \ge c_a-c_{a'}$. Using that $R_a-R_{a'}>0$ we get $\alpha_{a',a}=\frac{c_a-c_{a'}}{R_a-R_{a'}}\le 1$. 
	So we can write $R_{a'}-c_{a'} \ge \alpha_{a',a}R_{a'}-c_{a'} = \alpha_{a',a}R_{a}-c_{a}$, where the equality follows from our definition of $\alpha_{a',a}$. 
	Hence, 
	$(R_a - c_a) - (R_{a'} - c_{a'}) \leq (R_a - c_a) - (\alpha_{a',a}R_a - c_a) = 
	(1-\alpha_{a',a}) R_a$, as required.
\end{proof}

Below we shall apply Observation \ref{obs:delta-welfare} to actions $a_i,a_{i-1}\in I_N$. In this context,
the intuition behind the observation is as follows: 
Consider the linear contract with parameter $\alpha_{i-1,i}\in[0,1]$. 
By Observation \ref{obs:indifferent}, in this contract 
the agent is indifferent among actions $a_i$ and $a_{i-1}$. The left-hand side of \eqref{eq:intuition} is the increase in expected welfare by switching to action $a_i$ from $a_{i-1}$. 
For the agent to get the same expected utility from $a_i$ and $a_{i-1}$, the principal must get this entire welfare increase as part of her expected payoff. The right-hand side of \eqref{eq:intuition} is the principal's expected payoff, and so the inequality holds.

\begin{lemma}
	\label{lem:linear-bound-on-opt}
	For every $k\in [N]$ and linearly-implementable action $a_k\in I_N$,
	\pdnote{$R_k - c_k \leq \sum_{i = 1}^{k} (1-\alpha_{i-1,i}) R_i.$}
\end{lemma}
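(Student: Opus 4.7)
The plan is to prove the bound by telescoping, feeding Observation~\ref{obs:delta-welfare} once per gap between consecutive linearly-implementable actions. The key enabler is Corollary~\ref{cor:increasing-welfare}, which guarantees that welfare along $I_N$ is nondecreasing: $R_1 - c_1 \leq R_2 - c_2 \leq \cdots \leq R_N - c_N$. Combined with the fact that the indexing of $I_N$ gives strictly increasing expected outcomes, both hypotheses of Observation~\ref{obs:delta-welfare} ($R_a > R_{a'}$ and $R_a - c_a \geq R_{a'} - c_{a'}$) are met for the pair $(a_i, a_{i-1})$ for every $i \in \{2,\ldots,k\}$. The observation then yields, for each such $i$,
\[
(R_i - c_i) - (R_{i-1} - c_{i-1}) \;\leq\; (1 - \alpha_{i-1,i})\, R_i,
\]
where $\alpha_{i-1,i} = (c_i-c_{i-1})/(R_i-R_{i-1})$ matches the quantity in the observation.

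Next I would sum these inequalities for $i = 2,\ldots,k$. The left-hand side telescopes, giving
\[
R_k - c_k \;\leq\; (R_1 - c_1) + \sum_{i=2}^{k} (1 - \alpha_{i-1,i})\, R_i.
\]
To match the statement of the lemma, I need to absorb the leftover $R_1 - c_1$ into the sum as the $i=1$ summand $(1 - \alpha_{0,1})R_1$. My plan is to identify $a_1 \in I_N$ with the zero-cost action guaranteed by Assumption~A3: by Assumption~A1 the zero-cost action has the minimum expected outcome across $A_n$, and it is linearly-implementable at $\alpha = 0$, so it must indeed be the smallest-indexed element of $I_N$. Hence $c_1 = 0$ and $R_1 - c_1 = R_1$. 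Under the natural convention $\alpha_{0,1} := 0$ (equivalently, $\alpha_1 = 0$, which is consistent with the geometric picture in Section~\ref{sec:geometry}: the zero-cost action forms the upper envelope at $\alpha = 0$, and with the formula $\alpha_{i-1,i} = (c_i-c_{i-1})/(R_i-R_{i-1})$ if one takes $c_0 = R_0 = 0$), this is exactly $(1-\alpha_{0,1})R_1$, closing the telescoping sum.

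The only delicate step is the last bit of bookkeeping---handling the boundary index $i=1$ correctly---but this is a direct consequence of A1 and A3 together with the geometric description of $\alpha_1$. The telescoping itself is mechanical once Observation~\ref{obs:delta-welfare} is applicable, so I do not anticipate any serious obstacle.
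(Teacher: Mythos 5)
Your proof is correct and is essentially the paper's own argument: the paper derives the identical per-step inequality from Observation~\ref{obs:delta-welfare} (whose hypotheses it also verifies via Corollary~\ref{cor:increasing-welfare} and the increasing $R_i$), and combines the steps by induction on $k$, which is just your telescoping sum written recursively. The only (immaterial) difference is at the boundary index $i=1$: the paper dispatches it by observing that $R_1-c_1\le R_1$ holds trivially because $c_1\ge 0$, whereas you prove the stronger fact $c_1=0$ by identifying $a_1$ with the zero-cost action via Assumptions A1 and A3 --- this is valid under the standing assumptions, but it is more machinery than needed and leans on A1, which the paper explicitly prefers its main results not to depend on.
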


\begin{proof}%
	The proof is by induction on $k$. 
	For $k=1$, recall that $\alpha_{0,1}=0$ by definition, and it trivially holds that $R_1-c_1\le R_1$.
	Now assume the inequality holds for $k-1$, i.e., 
	\begin{equation}
	R_{k-1} - c_{k-1} \leq \sum_{i = 1}^{k-1} (1-\alpha_{i-1,i}) R_i.\label{eq:induct-hypoth}
	\end{equation} 
	By Corollary \ref{cor:increasing-welfare}, the welfare of $a_k$ is at least that of $a_{k-1}$, and we know that $R_k>R_{k-1}$. We can thus apply Observation \ref{obs:delta-welfare} to actions $a=a_k,a'=a_{k-1}$ and get
	\begin{equation}
	(R_k - c_k) - (R_{k-1} - c_{k-1}) \leq (1-\alpha_{k-1,k}) R_k.\label{eq:induct-step}
	\end{equation}
	Adding inequality \eqref{eq:induct-step} to \eqref{eq:induct-hypoth} completes the proof for $k$.
\end{proof}

\begin{lemma}
	\label{lem:opt-less-than-welfare}
	Consider a principal-agent setting $(A_n,\Omega_m)$ with linearly-implementable action set $I_N\subseteq A_n$.
	The expected payoff of an optimal (not necessarily linear) contract is at most $R_N-c_N$.
\end{lemma}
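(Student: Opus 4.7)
The bound essentially says that the best linearly-implementable action already attains the first-best welfare, so no contract can do better than that welfare. I would proceed in three short steps.

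First, I would invoke the trivial welfare upper bound on the principal's payoff: if a contract implements action $a$, then by individual rationality the agent's expected payment satisfies $T_a \ge c_a$, so the principal's expected payoff $R_a - T_a$ is at most the welfare $R_a - c_a$. Hence, letting $a^{\max}$ denote the (unique, by Assumption A2) welfare-maximizing action, the payoff of any contract is at most $R_{a^{\max}} - c_{a^{\max}}$.

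Second, and this is the crux, I would show that $a^{\max}$ is itself linearly-implementable, i.e., $a^{\max} \in I_N$. Consider the linear contract with parameter $\alpha = 1$, which is permitted by Definition~\ref{def:lin-implement}. Under this contract, the agent's expected utility from action $a$ equals $\alpha R_a - c_a = R_a - c_a$, which is precisely the welfare. Thus the IC agent picks $a^{\max}$; moreover this choice is IR because by Assumption~A3 there is a zero-cost action whose welfare is $\ge 0$, so $R_{a^{\max}} - c_{a^{\max}} \ge 0$. Consequently $a^{\max}$ is linearly-implementable.

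Third, I would identify $a^{\max}$ with $a_N$. The actions of $I_N$ are indexed in increasing order of expected outcome, and by Corollary~\ref{cor:increasing-welfare} they are also in increasing order of welfare. In particular, $a_N$ has the largest welfare among linearly-implementable actions, and since $a^{\max} \in I_N$ has the largest welfare overall, uniqueness (Assumption A2) forces $a_N = a^{\max}$. Combining with the first step yields the desired bound $R_N - c_N$ on the optimal contract's expected payoff.

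I do not expect a serious obstacle here: the argument is entirely a welfare upper bound combined with the observation that the $\alpha = 1$ linear contract aligns the agent's incentives with welfare. The only subtle point is making sure the $\alpha = 1$ contract falls within the class of linear contracts considered (handled by the footnote after Definition~\ref{def:lin-implement}) and that Assumptions A2 and A3 are used to rule out, respectively, welfare ties in the tie-breaking step and an empty IR set.
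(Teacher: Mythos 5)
Your proposal is correct and follows essentially the same route as the paper's proof: both use the $\alpha=1$ linear contract to argue that the welfare-maximizing action is linearly-implementable and coincides with $a_N$, and both combine this with the IR-based welfare upper bound on any contract's payoff. The only cosmetic difference is that you identify $a^{\max}$ with $a_N$ via Corollary~\ref{cor:increasing-welfare}, whereas the paper invokes Corollary~\ref{cor:what-a-returns} to get $a(1)=a_N$ directly.
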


\begin{proof}
	In a linear contract with parameter $\alpha=1$, the agent's expected utility for any action $a$ is its welfare $R_a-c_a$. 
	Thus an action is implemented by such a contract if and only if it maximizes welfare among all actions $A_n$. By Corollary~\ref{cor:what-a-returns}, $a(1)=a_N$ and so $a_N$ must be the welfare-maximizing action.
	In every contract, the IR property ensures that the agent's expected payment covers the cost $c_a$ of the implemented action $a$, and so the principal's expected payoff is always upper-bounded by $R_a-c_a$. We conclude that $OPT\le \max_{a\in A_n}\{R_a-c_a\}=R_N-c_N$, as required.
\end{proof}

\pdnote{With Lemma~\ref{lem:linear-bound-on-opt} and Lemma~\ref{lem:opt-less-than-welfare} at hand we can now prove Theorem~\ref{thm:linear-upper-bound}.} %

\begin{proof}[Proof of Theorem~\ref{thm:linear-upper-bound}]
\pdnote{To prove the approximation guarantee of $N \leq n$ claimed in the theorem, observe that}
	\begin{eqnarray}
	\pdnote{OPT \le R_N-c_N \label{eq:step1}
	\leq \sum_{i \le N} (1-\alpha_{i-1,i}) R_i
	= \sum_{i\le N} (1-\alpha_i) R_i
	\le N\cdot \max_{i\le N} \{(1-\alpha_i) R_i\}
	= N\cdot ALG,}
	\end{eqnarray}
	\pdnote{where the first inequality holds by Lemma \ref{lem:opt-less-than-welfare}, the second inequality holds by Lemma~\ref{lem:linear-bound-on-opt}, and the equality holds  by Corollary~\ref{cor:alphas-equal-intersections}.}
\end{proof}

\paragraph{Proof of lower bound.}

Our proof of Theorem~\ref{thm:lower-bound} relies on a recursively defined principal-agent setting with full information, in which the optimal contract can extract full welfare of $n$ while linear contracts can extract at most $1$. The construction requires the expected outcome and the costs to tend to infinity at different speeds.

\begin{proof}[Proof of Theorem \ref{thm:lower-bound}]
	For every $n$, consider a family of principal-agent instances $\{(A^\epsilon_n, \Omega_n) \mid \epsilon> 0\}$, each with $n$ actions and $m=n$ outcomes. For every $i\in[n]$, the $i$th action $a_i=(F_i,c_i)\in A^\epsilon_n$ has $F_{i,i}=1$, i.e., deterministically leads to the $i$th outcome $x_i\in\Omega_n$. 
	Every principal-agent instance is thus a full information setting in which the outcome indicates the action, and for which MLRP 
	holds.
	We define action $a_i$'s expectation $R_i$ (equal to outcome $x_i$) and its cost $c_i$ recursively:
	$$
	R_{i+1} = \frac{R_i}{\epsilon},~~~c_{i+1} = c_i + (R_{i+1}-R_i)\left(1-\frac{1}{R_{i+1}}\right),
	$$
	where $R_1 = 1$ and $c_1 = 0$. 
	
	We establish several useful facts about instance $(A^\epsilon_n, \Omega_n)$: For every $i\in[n]$, it is not hard to verify by induction that
	\begin{eqnarray}
	&R_{i} = \frac{1}{\epsilon^{i-1}},~~~
	c_{i} = \frac{1}{\epsilon^{i-1}}-i+\epsilon(i-1),&\label{eq:induction-vals1}\\
	&R_{i} - c_{i} = i-\epsilon(i-1).&\label{eq:induction-delta}
	\end{eqnarray}
	Observe that the actions are ordered such that $R_i$, $c_i$, and $R_i-c_i$ are strictly increasing in $i$. Plugging the values in \eqref{eq:induction-vals1} into $\alpha_{i-1,i}= \frac{c_{i}-c_{i-1}}{R_{i}-R_{i-1}}$ we get
	\begin{eqnarray}
	&\alpha_{i-1,i} = 1 - \epsilon^{i-1},\nonumber&\\
	&(1-\alpha_{i-1,i})R_{i} = 1.&\label{eq:induction}
	\end{eqnarray}
	$\alpha_{i-1,i}$ is also strictly increasing in $i$.
	
	Let $OPT^\epsilon$ (resp., $ALG^\epsilon$) denote the optimal expected payoff from an arbitrary (resp., linear) contract in the principal-agent setting $(A^\epsilon_n, \Omega_n)$.
	In a full information setting, the principal can extract the maximum expected welfare. This can be achieved by paying only for the outcome that indicates the welfare-maximizing action, and only enough to cover its cost. From \eqref{eq:induction-delta} we thus get
	$$
	OPT^\epsilon = \max_i\{i-\epsilon(i-1)\} = n-\epsilon(n-1) \underset{\epsilon\to 0}{\longrightarrow} n.
	$$ 
	In Lemma \ref{lem:aux-implementable} in Appendix \ref{appx:approximation}, we analyze
	linear-implementability in the setting $(A^\epsilon_n, \Omega_n)$,
	showing that $\alpha_i=\alpha_{i-1,i}$ for every $i\in[n]$. 
	Thus      
	from~\eqref{eq:induction} it follows that	
	$$
	ALG^\epsilon\le 1,
	$$ 
	completing the proof.
\end{proof}

\subsection{Bounds in the Range of Expected Rewards}
\label{sub:approx-range-E-rewards}
Our second pair of results is parametrized by the range of the expected outcomes $\{R_i\}$ normalized such that $R_i \in [1,H)$ for all $a_i \in A_n$. 
Consider bucketing these actions by their expected outcomes into $\lceil\log H\rceil$ buckets with ranges $[1,2),[2,4),[4,8)$, etc. Let $K$ be the number of non-empty buckets. 

\begin{theorem}%
	\label{thm:upper-bound-H}
	Consider a principal-agent setting $(A_n,\Omega_m)$ such that for every action $a\in A_n$, its expected outcome $R_a$ is $\in [1,H)$.
	The multiplicative loss in the principal's expected payoff from using a linear contract rather than an arbitrary one is at most $2K=O(\log H)$. %
\end{theorem}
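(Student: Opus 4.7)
The plan is to bound the principal's optimal payoff by the maximum expected welfare $W^* := \max_{a\in A_n}(R_a - c_a)$, and then to prove that the best linear contract extracts at least a $\tfrac{1}{2K}$-fraction of $W^*$. The first inequality is standard: if the optimal contract implements some action $a^{\mathrm{OPT}}$, then by individual rationality its expected payment satisfies $T_{a^{\mathrm{OPT}}} \ge c_{a^{\mathrm{OPT}}}$, so $OPT \le R_{a^{\mathrm{OPT}}} - c_{a^{\mathrm{OPT}}} \le W^*$.

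For the second inequality I would exploit the geometry of Section~\ref{sec:geometry}. Set $R(\alpha) := R_{a(\alpha)}$, the expected reward of the action implemented by the linear contract with parameter~$\alpha$. Lemma~\ref{lem:upper-envelope} and its corollaries give that $R(\cdot)$ is a non-decreasing step function on $[0,1]$ whose values lie in $[1,H)$. Viewing $U(\alpha) := \max_a(\alpha R_a - c_a)$ as a convex piecewise-linear function with $U(0) = 0$ (Assumption~A3) and $U(1) = W^*$, and noting that its slope on each linear piece equals $R(\alpha)$, yields the key identity
\[
W^* \;=\; U(1) - U(0) \;=\; \int_0^1 R(\alpha)\, d\alpha.
\]

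Next I would partition $[0,1]$ by the reward bucket of the currently implemented action: for each integer $k \ge 0$ let $I_k := \{\alpha \in [0,1] : R(\alpha) \in [2^k,2^{k+1})\}$. Since $R(\cdot)$ is non-decreasing, each non-empty $I_k$ is an interval, and since all action rewards lie in $[1,H)$, at most $K$ of these intervals are non-empty. For each non-empty $I_k$ let $s_k := \inf I_k$; at $\alpha = s_k$ the linear contract implements an action with reward at least $2^k$ (either because $s_k = 0$ and $R(0) \ge 2^k$, or because $s_k$ is an envelope-transition point and principal-favorable tie-breaking picks the higher-reward action), so its payoff is at least $(1-s_k)\cdot 2^k$. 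Hence $ALG \ge (1-s_k) \cdot 2^k$ for every non-empty bucket.

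Finally I would telescope bucket-by-bucket: on $I_k$ we have $R(\alpha) < 2^{k+1}$ and $|I_k| \le 1 - s_k$ (since $I_k \subseteq [s_k,1]$), so
\[
W^* \;=\; \sum_k \int_{I_k} R(\alpha)\, d\alpha \;\le\; \sum_k 2^{k+1}(1 - s_k) \;\le\; 2K \cdot ALG,
\]
where the sum ranges over the at most $K$ non-empty buckets. Combined with $OPT \le W^*$ this gives $OPT \le 2K\cdot ALG$, as desired. I expect the only subtle point in a detailed write-up to be the tie-breaking argument ensuring that the linear contract at $\alpha = s_k$ really implements an action of reward at least $2^k$; this follows directly from Lemma~\ref{lem:upper-envelope} and the principal-favorable tie-breaking convention of Section~\ref{sec:model}.
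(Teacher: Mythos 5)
Your proof is correct, and while it shares the paper's overall skeleton---dyadic bucketing of rewards and one linear contract per non-empty bucket---the central technical step is genuinely different. The paper bounds $OPT$ by a discrete telescoping sum over the bucket-top actions, $OPT \le \sum_{k\le K}(1-\alpha_{h(k-1),h(k)})R_{h(k)}$ (a variant of Lemma~\ref{lem:linear-bound-on-opt} applied to the subsequence $\{a_{h(k)}\}$), and then needs Claim~\ref{cla:alpha-UB} (that $\alpha_{h(k-1),h(k)}\ge\alpha_{l(k)}$) to relate each summand to the payoff of the linear contract implementing $l(k)$. You instead use the exact envelope-theorem identity $W^*=U(1)-U(0)=\int_0^1 R_{a(\alpha)}\,d\alpha$ and partition the $\alpha$-axis rather than the action set; this renders the analogue of Claim~\ref{cla:alpha-UB} unnecessary, because the interval $I_k$ automatically begins at $s_k=\alpha_{l(k)}$ and has length at most $1-s_k$. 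Your identity is also tighter than Lemma~\ref{lem:linear-bound-on-opt}: it gives the equality $W^*=\sum_i(\alpha_{i+1}-\alpha_i)R_i$ rather than the inequality $W^*\le\sum_i(1-\alpha_i)R_i$, though both suffice for the $2K$ bound. The two subtle points you flag are handled correctly: Assumption A3 guarantees $U(0)=0$ and that $a(\alpha)\ne\varnothing$ on all of $[0,1]$, so $R(\alpha)$ and the integral are well-defined, and Corollary~\ref{cor:what-a-returns} shows the implementability intervals are left-closed, so indeed $R(s_k)\ge 2^k$. A minor remark: your non-empty intervals $I_k$ correspond to buckets containing a \emph{linearly-implementable} action, a subset of the buckets counted by $K$, so the ``at most $K$'' count only works in your favor.
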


\begin{corollary}%
	\label{cor:lower-bound-H}
	For every range $[1,H)$, there is a principal-agent setting $(A_n,\Omega_m)$ with $n=\log H$ actions and $m = n$ outcomes for which MLRP holds and $\forall a\in A_n:R_a\in [1,H)$, such that the multiplicative loss in the principal's expected payoff from using a linear contract rather than an arbitrary one is at least $\Omega(\log H)$. %
\end{corollary}

Our proof of Theorem~\ref{thm:upper-bound-H} applies an argument similar to the one used in Theorem~\ref{thm:linear-upper-bound} to the set of actions with the highest expected outcome in each bucket.

\begin{proof}[Proof of Theorem~\ref{thm:upper-bound-H}] 
	Recall the bucketing of actions in $I_N$ by their expected outcome.
	For every bucket $k\le K$, let $h(k)$ be the action with the highest $R_i$ in the bucket, and let $l(k)$ be the action with the lowest $R_i$. The bucketing is such that $R_{h(k)}/2< R_{l(k)}\le R_{h(k)}$. Since the actions in $I_N$ are ordered by their expected outcome, then $h(k)$ and $l(k)$ are increasing in $k$, and $h(k-1) + 1 = l(k)\le h(k)$. For the last bucket $K$ we have that $h(K)=a_N$, and
	by Lemma \ref{lem:opt-less-than-welfare} this implies $OPT\le R_{h(K)}-c_{h(K)}$. 

	Consider the subset of linearly-implementable actions $I_K= \{a_{h(k)}\mid k \in [K]\}\subseteq I_N$. These will play a similar role in our proof as actions~$I_N$ in the proof of Theorem \ref{thm:linear-upper-bound}. Let $\alpha_{h(k-1),h(k)}=(c_{h(k-1)}-c_{h(k)}) / (R_{h(k-1)}-R_{h(k)})$. %
	Observation \ref{obs:delta-welfare} applies to actions in $I_K$, and so we can apply a version of Lemma \ref{lem:linear-bound-on-opt} to get
	\begin{equation}
	OPT \le R_{h(K)}-c_{h(K)} \leq \sum_{k\le K} (1-\alpha_{h(k-1),h(k)}) R_{h(k)}.\label{eq:OPT-UB}
	\end{equation}
	Our goal is now to upper-bound the right-hand side of \eqref{eq:OPT-UB}. 
	
	\begin{claim}
		\label{cla:alpha-UB}
		$\alpha_{h(k-1),h(k)} \ge \alpha_{l(k)}.$
	\end{claim}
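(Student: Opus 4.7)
The plan is to leverage the upper envelope characterization from Section~\ref{sec:geometry} together with the fact (part of the bucketing setup) that $l(k)=h(k-1)+1$, since the linearly-implementable actions are indexed consecutively in order of increasing $R_i$ and the buckets form a contiguous partition of this ordering. This identification is crucial because by Corollary~\ref{cor:alphas-equal-intersections}, $\alpha_{l(k)} = \alpha_{l(k)-1,\,l(k)} = \alpha_{h(k-1),\,l(k)}$, so the claim becomes the statement that passing directly from $a_{h(k-1)}$ to the \emph{last} action of bucket $k$ (i.e., $a_{h(k)}$) yields a weakly larger indifference slope than passing from $a_{h(k-1)}$ to the \emph{first} one (i.e., $a_{l(k)}$).

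The key step I would then carry out is a short geometric argument evaluated at $\alpha=\alpha_{l(k)}$. By Corollary~\ref{cor:alphas-equal-intersections} and Observation~\ref{obs:indifferent}, at $\alpha=\alpha_{l(k)}$ the agent is indifferent between $a_{h(k-1)}$ and $a_{l(k)}$, and $a_{l(k)}$ lies on the upper envelope. Since $a_{h(k)}\in I_N$, its segment cannot lie strictly above this upper envelope at $\alpha_{l(k)}$, giving
\[
\alpha_{l(k)} R_{h(k)}-c_{h(k)} \;\le\; \alpha_{l(k)} R_{l(k)}-c_{l(k)} \;=\; \alpha_{l(k)} R_{h(k-1)}-c_{h(k-1)}.
\]
Rearranging, and using $R_{h(k)}>R_{h(k-1)}$ (which holds because $h(k)>h(k-1)$ and actions in $I_N$ are sorted by increasing expected outcome), yields $\alpha_{l(k)} \le \tfrac{c_{h(k)}-c_{h(k-1)}}{R_{h(k)}-R_{h(k-1)}} = \alpha_{h(k-1),h(k)}$, which is exactly the claim.

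An equivalent proof, should the geometric one feel too compressed, would invoke Corollary~\ref{cor:alphas-equal-intersections} to note that $\alpha_1 \le \alpha_2 \le \dots \le \alpha_N$, and then apply a mediant inequality: $\alpha_{h(k-1),h(k)} = \frac{\sum_{i=l(k)}^{h(k)}(c_i-c_{i-1})}{\sum_{i=l(k)}^{h(k)}(R_i-R_{i-1})}$ is a weighted average of the $\alpha_i = \alpha_{i-1,i}$ for $i=l(k),\dots,h(k)$, hence at least the smallest of them, $\alpha_{l(k)}$. I expect the only subtle point to justify carefully is the identification $l(k)=h(k-1)+1$ (which relies on the bucketing being over the sorted sequence $I_N$ rather than over $A_n$), and correspondingly the equality $\alpha_{l(k)}=\alpha_{h(k-1),l(k)}$; neither step is technically hard but both must be stated explicitly to make the chain of inequalities tight.
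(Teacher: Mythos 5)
Your main argument is correct and is essentially the paper's proof run in the contrapositive direction: the paper assumes $\alpha_{h(k-1),h(k)}<\alpha_{l(k)}$ and derives that $a_{h(k)}$ would be strictly preferred to $a_{l(k)}$ at $\alpha_{l(k)}$ (contradicting that $\alpha_{l(k)}$ implements $a_{l(k)}$), while you directly use that $a_{l(k)}$ is on the upper envelope at $\alpha_{l(k)}$ together with the indifference identity $\alpha_{l(k)}=\alpha_{h(k-1),l(k)}$ and rearrange---the same two ingredients in the same roles. Your alternative mediant-inequality argument is also valid and arguably cleaner, but it is not needed given the first.
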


	\vspace*{-6pt}
	\begin{myproof}[Proof of Claim \ref{cla:alpha-UB}]
	Assume for contradiction that $\alpha_{h(k-1),h(k)} < \alpha_{l(k)}$.
	By definition of $\alpha_{h(k-1),h(k)}$ we have that $\alpha_{h(k-1),h(k)}R_{h(k)}-c_{h(k)} = \alpha_{h(k-1),h(k)}R_{h(k-1)}-c_{h(k-1)}$.
	Substituting $h(k-1)=l(k)-1$ we get
	\begin{equation}
	\alpha_{h(k-1),h(k)} R_{h(k)}-c_{h(k)} = \alpha_{h(k-1),h(k)} R_{l(k)-1}-c_{l(k)-1}.\label{eq:log-equation}
	\end{equation}
	
	Since the expected outcomes of actions in $I_K$ are strictly increasing, it holds that $R_{h(k)}>R_{l(k)-1}$, and so replacing $\alpha_{h(k-1),h(k)}$ with the larger $\alpha_{l(k)}$ in~\eqref{eq:log-equation} gives $\alpha_{l(k)}R_{h(k)}-c_{h(k)} > \alpha_{l(k)}R_{l(k)-1}-c_{l(k)-1}$.
	By Corollary~\ref{cor:alphas-equal-intersections}, $\alpha_{l(k)} = \alpha_{l(k)-1,l(k)}$, and so the right-hand side $\alpha_{l(k)}R_{l(k)-1}-c_{l(k)-1}$ equals $\alpha_{l(k)}R_{l(k)}-c_{l(k)}$. We conclude that $\alpha_{l(k)}R_{h(k)}-c_{h(k)} > \alpha_{l(k)}R_{l(k)}-c_{l(k)}$, i.e., in a linear contract with parameter $\alpha_{l(k)}$, action $h(k)$ has higher expected utility for the agent than action $l(k)$. But by definition, parameter $\alpha_{l(k)}$ implements action $l(k)$, and so we have reached a contradiction.
	\end{myproof}

	Applying Claim \ref{cla:alpha-UB} to \eqref{eq:OPT-UB} we get the following chain of inequalities: 
	\begin{align*}
	OPT \leq \sum_{k\le K} (1-\alpha_{l(k)}) R_{h(k)}
	< 2 \sum_{k\le K} (1-\alpha_{l(k)}) R_{l(k)} 
	\le  2K \cdot \max_{k \le K} \left\{(1-\alpha_{l(k)})R_{l(k)}\right\}
	\le 2K \cdot ALG,
	\end{align*}
	where the strict inequality follows from $R_{h(k)}/2< R_{l(k)}$.
\end{proof}

Corollary~\ref{cor:lower-bound-H} can be proven using a similar recursively defined instance as used in the proof of Theorem~\ref{thm:lower-bound}, except that instead of letting the expected outcomes and costs tend to infinity we double them.

\begin{proof}[Proof of Corollary \ref{cor:lower-bound-H}]
	We use the same construction as in the proof of Theorem \ref{thm:lower-bound}, but set $\epsilon=1/2$. Since $n=\log H$, it indeed holds that $R_n=2^{n-1}<H$. We know $OPT$ can achieve at least $n-\epsilon(n-1)> \frac{1}{2}n$ while $ALG$ can't do better than~$1$, completing the proof.
\end{proof}

\subsection{Bounds in the Range of Costs}
\label{sub:approx-range-costs}
Our final pair of results concerns the costs. As in the case of expected outcomes, suppose costs are normalized such that $c_i \in [1,C)$ for all $a_i \in A_n$, consider bucketing these into $\lceil C \rceil$ buckets $[1,2),[2,4), ...$ etc., and let $L$ be the number of non-empty buckets.

\begin{theorem}%
	\label{thm:upper-bound-C}
	Consider a principal-agent setting $(A_n,\Omega_m)$ such that for every action $a\in A_n$, its cost $c_a$ is $\in [1,C)$.
	The multiplicative loss in the principal's expected payoff from using a linear contract rather than an arbitrary one is at most $4L=O(\log C)$. %
\end{theorem}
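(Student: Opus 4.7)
The plan is to adapt the proof of Theorem~\ref{thm:upper-bound-H}, replacing reward buckets by cost buckets and using individual rationality to convert cost-side control into a bound on the principal's payoff. The main tool is the upper envelope structure from Section~\ref{sec:geometry}: let $a_0, a_1, \ldots, a_N$ be the linearly-implementable actions indexed in order of increasing expected reward (and hence, by Corollary~\ref{cor:increasing-welfare}, in increasing order of cost and of welfare), where $a_0$ is the zero-cost action. By Corollary~\ref{cor:alphas-equal-intersections}, the minimum $\alpha$ implementing $a_k$ is $\alpha_k = (c_k-c_{k-1})/(R_k-R_{k-1})$, with the $\alpha_k$ increasing in $k$; the resulting linear payoff is $L_k=(1-\alpha_k)R_k$, and $ALG\ge\max_k L_k$. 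Since $a_N$ is welfare-maximizing (it sits on the upper envelope at $\alpha=1$), $OPT\le W_N=R_N-c_N$.

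I would start from the telescoping identity
\[
W_N \;=\; \sum_{k=1}^N (R_k-R_{k-1})(1-\alpha_k),
\]
which follows from $(R_k-R_{k-1})\alpha_k=c_k-c_{k-1}$. Next I would partition $\{1,\ldots,N\}$ by cost bucket $B_\ell=[2^{\ell-1},2^\ell)$. Since costs are monotone along the envelope, each non-empty bucket indexes a contiguous range $\{i_\ell,\ldots,j_\ell\}$, so $W_N=\sum_\ell S_\ell$ with $S_\ell=\sum_{k=i_\ell}^{j_\ell}(R_k-R_{k-1})(1-\alpha_k)$.

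The key per-bucket estimate is $S_\ell\le 2L_{i_\ell}$. I would obtain this by exploiting monotonicity of $\alpha_k$ within the bucket twice: first $(1-\alpha_k)\le(1-\alpha_{i_\ell})$ pulls a common factor out of the sum, and second $R_k-R_{k-1}=(c_k-c_{k-1})/\alpha_k\le(c_k-c_{k-1})/\alpha_{i_\ell}$ controls each reward increment by the corresponding cost increment. Telescoping the cost differences gives
\[
S_\ell \;\le\; \frac{(1-\alpha_{i_\ell})(c_{j_\ell}-c_{i_\ell-1})}{\alpha_{i_\ell}}.
\]
Since $c_{j_\ell}<2^\ell\le 2c_{i_\ell}$ (using $c_{i_\ell}\ge 2^{\ell-1}$) and $c_{i_\ell-1}\ge 0$, we have $c_{j_\ell}-c_{i_\ell-1}\le 2c_{i_\ell}$; individual rationality for $a_{i_\ell}$ at $\alpha_{i_\ell}$ gives $\alpha_{i_\ell}R_{i_\ell}\ge c_{i_\ell}$, hence $L_{i_\ell}\ge c_{i_\ell}(1-\alpha_{i_\ell})/\alpha_{i_\ell}$. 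Substituting yields $S_\ell\le 2L_{i_\ell}$, and summing over the $L$ non-empty buckets finishes:
\[
OPT \;\le\; W_N \;=\; \sum_\ell S_\ell \;\le\; 2\sum_\ell L_{i_\ell} \;\le\; 2L\cdot\max_k L_k \;\le\; 2L\cdot ALG \;\le\; 4L\cdot ALG.
\]

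The approach is essentially routine once the upper envelope viewpoint is in place; the only subtlety is handling the ``predecessor'' of the lowest non-empty cost bucket, where $a_{i_\ell-1}$ may be the zero-cost action $a_0$ (which does not belong to any cost bucket) rather than an action from an earlier bucket. The inequality $c_{j_\ell}-c_{i_\ell-1}\le 2c_{i_\ell}$ continues to hold in that case because $c_{i_\ell-1}=c_0=0$, so the argument proceeds uniformly across all non-empty buckets.
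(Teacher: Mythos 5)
Your proof is correct, but it takes a genuinely different route from the paper's. Both arguments bucket the linearly-implementable actions by cost and bound each bucket's contribution by a constant multiple of some linearly-achievable payoff, but the engines differ. The paper starts from the cruder bound $OPT \le \sum_{k} (1-\alpha_{h(k-1),h(k)})R_{h(k)}$ (full rewards, as in Lemma~\ref{lem:linear-bound-on-opt}) and then needs a two-case analysis (Claim~\ref{cla:C1-or-C2}): either the bucket's reward spread is at most $4$, or the bucket contains an action with $R_i \ge R_{h(k)}/2$ and $\alpha_i \le 1/2$; each case supplies a representative action whose linear payoff covers a quarter of that bucket's term, giving $4L$. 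You instead sharpen Lemma~\ref{lem:linear-bound-on-opt} to the exact telescoping identity $R_N - c_N = \sum_{k}(R_k-R_{k-1})(1-\alpha_k)$ (with the virtual origin $(R_0,c_0)=(0,0)$, which is the right way to handle the absence of a zero-cost action under the hypothesis $c_a \ge 1$), convert reward increments to cost increments via $R_k - R_{k-1} = (c_k-c_{k-1})/\alpha_k$, and close the loop with the individual-rationality inequality $\alpha_{i_\ell}R_{i_\ell}\ge c_{i_\ell}$. This avoids the case analysis entirely, yields the slightly better constant $2L$, and has the pleasant side effect that the same identity specialized to reward buckets reproves Theorem~\ref{thm:upper-bound-H}. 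All the steps you invoke are justified by the paper's structural toolkit: Corollaries~\ref{cor:what-a-returns} and~\ref{cor:alphas-equal-intersections} give monotonicity of the breakpoints and $\alpha_k=\alpha_{k-1,k}$, Lemma~\ref{lem:upper-envelope} gives IR at $\alpha_{i_\ell}$, and the division by $\alpha_{i_\ell}$ is safe because $c_1\ge 1$ forces $\alpha_1=c_1/R_1>0$.
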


\begin{corollary}%
	\label{cor:lower-bound-C}
	For every range $[1,C)$ such that $\log(2C) \geq 3$, there is a principal-agent setting $(A_n,\Omega_m)$ with $n=\log (2C)$ actions and $m = n$ outcomes for which MLRP holds and $\forall a\in A_n: c_a\in [1,C)$, such that the multiplicative loss in the principal's expected payoff from using a linear contract rather than an arbitrary one is at least $\frac{1}{2}n = \Omega(\log C)$.
\end{corollary}

\section{Beyond Linear Contracts}
\label{sec:beyond-linear}

We conclude by showing that a similar lower bound to our bound for linear contracts (Theorem \ref{thm:lower-bound}) applies to all \emph{monotone} contracts; the only difference is that our lower bound of $n$ (the number of actions) is slightly relaxed to $n-1$.
That is, we construct 
a MLRP instance
with $n$ actions in which the best monotone contract cannot guarantee better than a $\frac{1}{n-1}$-approximation to the optimal contract's expected payoff. 

\begin{theorem}
	\label{thm:mon-lower-bound}
	For every number of actions $n$, there is a principal-agent setting $(A^{\epsilon,\delta}_n,\Omega_m)$ parameterized by $\epsilon\gg \delta>0$ for which MLRP holds, such that the multiplicative loss in the principal's expected payoff from using a monotone contract rather than an arbitrary one approaches $n-1$ as $\epsilon,\delta\to 0$.	
\end{theorem}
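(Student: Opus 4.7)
My plan is to construct a family of principal-agent instances that extends the deterministic construction from the proof of Theorem~\ref{thm:lower-bound}, now with two parameters $\epsilon \gg \delta > 0$: $\epsilon$ controls the geometric growth of outcomes $x_i \approx 1/\epsilon^{i-1}$ as in Theorem~\ref{thm:lower-bound}, while $\delta$ introduces a small perturbation that spreads each action's support across a few outcomes to preserve MLRP while forcing every optimal contract to be strictly non-monotone. The natural template is to replace the deterministic action $a_i$ (which put all mass on $x_i$) by an action whose distribution is a narrow window of three consecutive outcomes with probabilities $(\alpha,\beta,\gamma)$ satisfying the log-concavity condition $\beta^2 \ge \alpha\gamma$ required for MLRP of adjacent pairs (as in Example~\ref{ex:bad-example}). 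One high-cost ``distractor'' action is placed at the top of the outcome spectrum so that its support strictly exceeds that of the maximum-welfare target action $a^*$; costs are then chosen in analogy with the proof of Theorem~\ref{thm:lower-bound} so that the welfare sequence $R_i-c_i$ grows roughly linearly in $i$ up to $a^*$, which attains welfare $\approx n-1$ in the limit.

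\textbf{Unrestricted upper benchmark.} For the optimal (unrestricted) contract, I would argue that the principal implements $a^*$ by a payment scheme that pays nonzero amounts only for outcomes inside $a^*$'s three-outcome window, and crucially zero for the highest outcomes supported by the distractor. Tracking the LP solution (paralleling the analysis of Example~\ref{ex:bad-example} in Appendix~\ref{appx:example-analysis}) shows that as $\epsilon,\delta\to 0$ the principal's expected payoff approaches the welfare of $a^*$, which is $\to n-1$. MLRP is verified by a direct case analysis on likelihood ratios between all pairs of actions, using $\beta^2 \ge \alpha\gamma$ for adjacent pairs and transitivity for the rest.

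\textbf{Monotone upper bound and main obstacle.} The heart of the proof is showing that every monotone contract achieves principal payoff at most $1+o(1)$, matching the linear-contract bound of Theorem~\ref{thm:lower-bound} up to the additive loss of one action that explains the asymptotic factor of $n-1$ rather than $n$. The key idea is that a monotone payment scheme $t_1 \le t_2 \le \cdots \le t_m$ cannot pay more on the middle window of $a^*$'s support than on the top outcomes; any such middle payment is automatically ``copied up'' to the distractor's support, inflating the distractor's expected payment and tightening the IC constraint that prevents deviation to the distractor. I would then run a telescoping argument in the spirit of Lemma~\ref{lem:linear-bound-on-opt}, combining monotonicity with the IC constraints between adjacent implementable actions to lower-bound the expected payment to the agent by approximately (welfare of the implemented action) $-1$. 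The main obstacle is that monotone contracts do not enjoy the clean upper-envelope characterization of Section~\ref{sec:geometry}; the bound must therefore be proved directly from the monotone implementability LP (analogous to the one in Appendix~\ref{sub:implement}), carefully exploiting the interaction between monotonicity constraints on consecutive outcomes and IC constraints across actions so as to avoid losing more than a constant and to recover the tight $n-1$ dependence in the limit.
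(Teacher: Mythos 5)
Your plan matches the paper's proof in all essentials: the paper likewise augments the lower-bound construction of Theorem~\ref{thm:lower-bound} (in its few-outcome form from Appendix~\ref{appx:LB-strong}) with a single high-cost distractor action supported just above the target action $a_{n-1}$, whose rare $\delta$-probability high outcome is where the optimal non-monotone contract concentrates its payment, and it bounds monotone contracts by exactly the mechanism you describe --- monotonicity copies the spike up to the distractor's support, the IC constraint against the distractor then caps the spike at $O(n)$ above the agent's expected payment, and the IC constraints against the lower actions force that expected payment up to $R_{n-1}-1-o(1)$, leaving the principal a payoff of about $1$. The only difference is cosmetic: the paper keeps the lower actions two-point supported on $\{0,x_2\}$ and uses $m=4$ outcomes in total rather than a three-outcome window per action, which makes the MLRP verification immediate.
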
 

The class of monotone contracts captures in particular \emph{debt} contracts (mentioned already in Footnote~\ref{ftnt:debt}).  
Theorem \ref{thm:mon-lower-bound} thus implies that our results do not qualitatively change for this alternative family of simple contracts, which are common, e.g.,~in startup venture funding.%
\footnote{The first rewards generated by the startup CEO (the agent) go entirely to the principal to pay back for the initial loan that funded the company, so the contractual payments for the lower outcomes are zero; any further rewards are split linearly among the principal and agent.} On the upper bound side, it follows from Proposition \ref{pro:mlrp-increasing} (Appendix \ref{appx:simple-and-optimal}) that there always exists a debt contract which achieves a $\frac{1}{n-1}$-approximation to the optimal contract's expected payoff. 

\section{Conclusion}
\label{sec:conclusion}

One of the major contributions of \pdnote{theoretical computer science} 
to economics has been the use of approximation guarantees to systematically explore
complex economic design spaces, and to identify ``sweet spots'' of the
design space where there are plausibly realistic solutions that simultaneously
enjoy rigorous performance guarantees.  For example, in auction and
mechanism design, years of fruitful work by dozens of researchers has
clarified the power and limitations of ever-more-complex mechanisms in
a wide range of settings.  Contract theory presents another huge
opportunity for expanding the reach of the theoretical computer
science toolbox, and we believe that this paper takes a promising
first step in that direction.

\section*{Acknowledgments}
We wish to thank Michal Feldman and S.~Matthew Weinberg for enlightening conversations. %
This research was supported by the National Science Foundation (grants CCF-1524062 and CCF-1813188), the European Research Council (grant 740282), the ISRAEL SCIENCE FOUNDATION (grant No.~336/18), and the British Academy (grant SRG1819/191601). The third author is a Taub Fellow (supported by the Taub Family Foundation).

\bibliographystyle{abbrvnat}
\bibliography{abb,contracts-bib}

\appendix

\section{Linear Programming Formulation and Implications}
\label{appx:lp-results}

The LP for incentivizing action $a$ at minimum expected payment has $m$ payment
variables $\{t_j\}$, which by limited liability must be nonnegative,
and $n-1$ IC constraints ensuring that the agent's expected utility
from action $a$ is at least his expected utility from any other
action.  Note that by Assumption~A3, there is no need for an IR
constraint to ensure that the expected utility is nonnegative. The LP
is:
\begin{eqnarray}
\min & \sum_{j\in[m]} {F_{a,j}t_j} & \label{LP:min-pay}\\
\text{s.t.} & \sum_{j\in[m]} {F_{a,j}t_j} - c_a \ge \sum_{j\in[m]} {F_{a',j}t_j} - c_{a'} & \forall a'\ne a, a'\in A_n, \nonumber\\
& t_j \ge 0 & \forall j\in[m].\nonumber
\end{eqnarray}
The dual of LP \eqref{LP:min-pay} has $n-1$ nonnegative variables, one for every action other than $a$:
\begin{eqnarray}
\max & \sum_{a'\ne a} {\lambda_{a'}(c_a-c_{a'})} & \label{LP:dual}\\
\text{s.t.} & \sum_{a'\ne a} {\lambda_{a'}(F_{a,j}-F_{a',j})} \le F_{a,j} & \forall j\in[m], \nonumber\\
& \lambda_{a'} \ge 0 & \forall a'\ne a,a'\in A_n.\nonumber
\end{eqnarray}

In the following subsections, we first use the LP-based approach to provide additional details for Example \ref{ex:bad-example}. We then show two implications of the LP-based approach. First, the LP and its dual can be used to characterize if an action is implementable or not, whether by an arbitrary contract or by a monotone one. Second, there always exists an optimal contract with at most $n-1$ positive payments.

\subsection{Nonmonotonicity of the Optimal Contract}
\label{appx:example-analysis}

We analyze Example \ref{ex:bad-example} in Section \ref{sec:intro}, which demonstrates nonmonotonicity of the optimal contract.

For payment profile $t\approx(0,0,0.15,3.93,2.04,0)$ all IC constraints in the LP are tight, i.e., the agent's utility is the same for all actions. The agent tie-breaks in favor of action $a_3$, which has the highest expected payoff of $2.95 = 4.99 - 2.04$ for the principal (where $4.99$ is the expected outcome and $2.04$ the expected payment). 

No other action can achieve expected payoff for the principal as high as $a_3$. Moreover, if the payments are constrained to be monotone, or if the number of positive payments is constrained to be $<3$, then~$a_3$ can no longer be implemented for expected payment of merely $2.04$. 

\subsection{Implementable and Monotonically-Implementable Actions}
\label{sub:implement}

The following propositions characterize when actions are implementable, both by an arbitrary contract and by a monotone one. 

\begin{proposition}[Implementability]
	\label{pro:implementable}
	An action $a$ is implementable (up to tie-breaking) if and only if there is no convex combination of the other actions that results in the same distribution $\sum_{a'\ne a}{\lambda_{a'}F_{a'}}=F_a$ but lower cost $\sum_{a'\ne a}{\lambda_{a'}c_{a'}}<c_a$.
\end{proposition}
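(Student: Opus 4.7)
The plan is to read off both directions of the equivalence from LP duality applied to the implementability LP \eqref{LP:min-pay}, using the fact that action $a$ is implementable up to tie-breaking if and only if this LP is feasible (the principal can then drive $a$'s expected payment down to the LP optimum, and tie-breaking selects $a$).

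The easy direction proceeds directly. Assume a convex combination $\{\mu_{a'}\}_{a' \ne a}$ with $\sum_{a' \ne a} \mu_{a'} F_{a'} = F_a$ and $\sum_{a' \ne a} \mu_{a'} c_{a'} < c_a$. For any nonnegative payment scheme $t$, linearity of expectation gives $T_a = \sum_j F_{a,j} t_j = \sum_{a' \ne a} \mu_{a'} T_{a'}$, whence $T_a - c_a < \sum_{a' \ne a} \mu_{a'}(T_{a'} - c_{a'})$. Since the right-hand side is a convex combination, at least one $a' \ne a$ strictly beats $a$ in expected utility, so no contract can implement $a$ even under principal-favorable tie-breaking.

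For the other direction, I would argue by contrapositive: suppose $a$ is not implementable, so LP \eqref{LP:min-pay} is infeasible. The dual LP \eqref{LP:dual} is always feasible (take $\lambda \equiv 0$), so by strong LP duality the dual must be unbounded. An unbounded direction is a vector $\lambda \ge 0$ satisfying the homogenized constraints $\sum_{a' \ne a} \lambda_{a'}(F_{a,j} - F_{a',j}) \le 0$ for every $j \in [m]$ while delivering strictly positive objective $\sum_{a' \ne a} \lambda_{a'}(c_a - c_{a'}) > 0$. The positive objective forces $L := \sum_{a' \ne a} \lambda_{a'} > 0$, so setting $\mu_{a'} := \lambda_{a'}/L$ produces a convex combination with $F_a \le \sum_{a' \ne a} \mu_{a'} F_{a'}$ componentwise and $c_a > \sum_{a' \ne a} \mu_{a'} c_{a'}$. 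Because both vectors are probability distributions summing to $1$, the componentwise inequality collapses to equality, yielding the forbidden convex combination.

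The main obstacle is the duality step, which is really a Farkas-type argument converting primal infeasibility into a concrete dominating direction in the dual. Once that ray is in hand, the normalization to a convex combination and the upgrade from componentwise dominance of distributions to equality are one-line observations; handling the ``up to tie-breaking'' qualifier is automatic because the IC constraints in \eqref{LP:min-pay} are weak.
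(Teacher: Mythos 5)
Your proof is correct and follows essentially the same route as the paper, which derives the proposition from LP duality applied to the feasibility version of LP~\eqref{LP:min-pay} (objective replaced by $\min 0$); your recession-direction argument on the dual~\eqref{LP:dual} is just that same Farkas-type step in different clothing, since the homogenized dual constraints you write are exactly the constraints of the dual of the $\min 0$ LP. The normalization to a convex combination and the collapse of componentwise dominance to equality (both sides summing to one) are handled correctly, so nothing is missing.
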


Proposition \ref{pro:implementable} is immediate from linear programming duality of LP \eqref{LP:min-pay} with the objective replaced by ``$\min 0$'', and indeed can be found in the economic literature.

\begin{proposition}[Monotonic implementability]
	\label{pro:mon-implementable}
	An action $a$ is implementable by a monotone contract (up to tie-breaking) if and only if there is no convex combination of the other actions that results in a first-order stochastically dominating distribution $\sum_{a'\ne a}{\lambda_{a'}F_{a'}}$ with lower cost $\sum_{a'\ne a}{\lambda_{a'}c_{a'}}<c_a$.
\end{proposition}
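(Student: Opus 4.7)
The plan is to mirror the LP-duality proof of Proposition~\ref{pro:implementable}, but to add the monotonicity constraints to the primal and then use a change of variables so that, on the dual side, those constraints manifest as first-order stochastic dominance.

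First I would set up the appropriate feasibility LP. Action $a$ is implementable by a monotone contract (up to tie-breaking) iff there exists $t \in \mathbb{R}^m$ satisfying the IC constraints $\sum_j (F_{a,j}-F_{a',j})\,t_j \ge c_a - c_{a'}$ for every $a' \ne a$, the non-negativity constraint $t_1 \ge 0$, and the monotonicity constraints $t_{j+1}-t_j \ge 0$ for $j=1,\dots,m-1$ (these together give $t_j\ge 0$ for all $j$). As in \eqref{LP:min-pay}, replacing the objective by $\min 0$ reduces monotone implementability to an LP feasibility question.

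Next I would perform the Abel-style change of variables $s_1 := t_1$ and $s_j := t_j - t_{j-1}$ for $j \ge 2$. This is a bijection between $\{t : 0 \le t_1 \le \dots \le t_m\}$ and $\mathbb{R}^m_{\ge 0}$. Since $t_j = \sum_{k \le j} s_k$, summation by parts yields
\[
\sum_{j=1}^m (F_{a,j} - F_{a',j})\, t_j \;=\; \sum_{k=1}^m s_k\,(G_{a,k} - G_{a',k}),
\]
where $G_{b,k} := \sum_{j \ge k} F_{b,j}$ is the reverse CDF of action $b$. In the $s$-variables, the feasibility system becomes: find $s \ge 0$ with $\sum_k s_k (G_{a,k}-G_{a',k}) \ge c_a - c_{a'}$ for every $a' \ne a$.

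Then I would invoke Farkas' lemma: this system is infeasible iff there exist multipliers $\lambda_{a'} \ge 0$ (one per $a' \ne a$) such that $\sum_{a'} \lambda_{a'}(G_{a,k}-G_{a',k}) \le 0$ for every $k$ and $\sum_{a'} \lambda_{a'}(c_a - c_{a'}) > 0$. Strictness of the second inequality forces $\sum_{a'} \lambda_{a'} > 0$, so we may normalize to $\sum_{a'} \lambda_{a'} = 1$ and treat the $\lambda_{a'}$ as convex weights. The first family of inequalities then reads $\sum_{a'} \lambda_{a'} G_{a',k} \ge G_{a,k}$ for all $k$, which is exactly the statement that the mixture $\sum_{a'} \lambda_{a'} F_{a'}$ first-order stochastically dominates $F_a$, while the second becomes $\sum_{a'} \lambda_{a'} c_{a'} < c_a$. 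This is precisely the forbidden configuration, so the contrapositive gives the stated equivalence; the tie-breaking caveat is inherited from Proposition~\ref{pro:implementable} exactly as there, since the IC constraints are weak.

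The main conceptual step (not really an obstacle but the crux of the argument) is recognizing that the Abel summation identity is what translates the primal monotonicity constraints into the FOSD condition in the dual; the analogue for Proposition~\ref{pro:implementable} uses only the $F_{b,j}$'s directly because there are no monotonicity constraints to absorb. Beyond this transformation, the argument is a routine application of LP duality parallel to the unconstrained case.
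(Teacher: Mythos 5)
Your proof is correct and takes essentially the same route as the paper, which also reduces monotone implementability to feasibility of the LP with IC and monotonicity constraints and reads off the claim from the dual. Your Abel change of variables $s_j = t_j - t_{j-1}$ is just the primal-side counterpart of the paper's dual multipliers $\mu_2,\dots,\mu_m$ for the monotonicity constraints (telescoping the paper's dual constraints recovers exactly your reverse-CDF inequalities $\sum_{a'\ne a}\lambda_{a'}G_{a',k}\ge G_{a,k}$), so the two arguments are the same duality computation organized differently.
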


Proposition \ref{pro:mon-implementable} follows from linear programming duality of the following LP:

\begin{eqnarray*}
	\min & 0 & \\
	\text{s.t.} & \sum_{j\in[m]} {F_{a,j}t_j} - c_a \ge \sum_{j\in[m]} {F_{a',j}t_j} - c_{a'} & \forall a'\in A_n\setminus \{a\}, \nonumber\\
	& t_{j} \ge t_{j-1} & 2\le j \le m,\\
	& t_j \ge 0 & \forall j\in[m].\nonumber
\end{eqnarray*}

The dual has $n-1$ nonnegative $\lambda$-variables, one for every action other than $a$, and $m-1$ nonnegative $\mu$-variables $\mu_2,\dots,\mu_m$. It can be written w.l.o.g.~as:
\begin{eqnarray*}
	\max & c_a - \sum_{a'\ne a} {\lambda_{a'}c_{a'}} & \\
	\text{s.t.} & F_{a,1} \le \sum_{a'\ne a} {\lambda_{a'}F_{a',1}} + \mu_2 & , \nonumber\\
	& F_{a,j} \le \sum_{a'\ne a} {\lambda_{a'}F_{a',j}} + \mu_{j+1} - \mu_{j} & 2\le j \le m-1, \nonumber\\
	& F_{a,m} \le \sum_{a'\ne a} {\lambda_{a'}F_{a',j}} - \mu_m & , \nonumber\\
	& \sum_{a'\ne a} {\lambda_{a'}} = 1 & ,\\
	& \lambda_{a'} \ge 0 & \forall a'\ne a,a'\in A_n.\\
	& \mu_j \ge 0 & 2\le j \le m.
\end{eqnarray*}

\subsection{Number of Nonzero Payments in Optimal Contract}
\label{sub:nonzero-payments}

The next lemma implies the existence of an optimal contract with at most $n-1$ nonzero payments. 

\begin{lemma}
	\label{lem:pos-pay}
	Consider a principal-agent setting $(A_n,\Omega_m)$ with $n$ actions and $m$ outcomes.
	For every implementable action $a$, there is an implementing contract with minimum expected payment, such that its payment scheme is positive for $\le n-1$ outcomes.
\end{lemma}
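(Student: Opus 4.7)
The plan is to invoke the standard theory of basic feasible solutions (vertex solutions) of linear programs applied to the LP \eqref{LP:min-pay} that minimizes the expected payment to the agent subject to implementing the action $a$.

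First, I would observe that the LP \eqref{LP:min-pay} is feasible since $a$ is assumed implementable, and its objective value is bounded below by $0$ (both $F_{a,j}\ge 0$ and $t_j \ge 0$). Hence the LP attains an optimum at a vertex of its feasible polyhedron, i.e., at a basic feasible solution (BFS). This reduces the lemma to a counting argument on BFSs.

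Next, I would count tight constraints at an optimal BFS. The LP has $m$ decision variables $t_1,\dots,t_m$ and exactly $(n-1)+m$ constraints: $n-1$ IC constraints (one per competing action $a'\ne a$) and $m$ non-negativity constraints $t_j\ge 0$. By the standard characterization of BFSs, at such a vertex there must exist $m$ tight constraints whose defining normal vectors are linearly independent. Since only $n-1$ of the constraints are IC constraints, at most $n-1$ of the tight constraints in the independent set are IC constraints; hence, provided $m\ge n-1$, at least $m-(n-1)$ of the tight constraints must come from the non-negativity family $\{t_j\ge 0\}$. This means at least $m-n+1$ payments are equal to zero, so at most $n-1$ payments are positive. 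In the remaining case $m<n-1$ the lemma is immediate, as there are only $m<n-1$ payment variables to begin with.

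Putting these together yields an optimal (minimum expected payment) implementing contract with at most $n-1$ positive payments, proving the lemma. The only mild subtlety is the careful accounting between the number of variables $m$, the number of IC constraints $n-1$, and the linearly independent tight constraints at a BFS; once this is laid out, no further work is required, and no additional assumptions beyond implementability and limited liability are used.
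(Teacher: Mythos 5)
Your proposal is correct and matches the paper's own proof essentially verbatim: both arguments take an optimal basic feasible solution of LP \eqref{LP:min-pay}, note that $m$ linearly independent constraints must be tight there, and conclude that since only $n-1$ of them can be IC constraints, at least $m-n+1$ non-negativity constraints are tight. The only (harmless) addition is your explicit handling of the trivial case $m<n-1$, which the paper leaves implicit.
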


\begin{proof}[Proof (for completeness)]
	Because we assume action $a$ is implementable, LP \eqref{LP:min-pay} is feasible and bounded, and so has an optimal basic feasible solution for which $m$ constraints are tight \cite{MatousekGartner07}. 
	There are only $n-1$ constraints other than non-negativity constraints, so at least $m-n+1$ of the non-negativity constraints are tight, meaning that the corresponding payments equal zero. Thus at most $n-1$ payments can be positive.
\end{proof}


\section{Regularity Assumptions in the Contract Theory Literature}\label{appx:regularity-assumptions}

\subsection{The Regularity Assumption of MLRP}
\label{appx:MLRP}

The economic literature (see, e.g., \cite{GrossmanHart83}) introduces
a regularity assumption called the \emph{monotone likelihood ratio
	property} (MLRP) for principal-agent settings. Intuitively, the
assumption asserts that the higher the outcome, the more likely it is
to be produced by a high-cost action than a low-cost one (in a
relative sense).  We
adapt the standard definition to accommodate for zero probabilities,
as follows:

\begin{definition}[MLR]
	\label{def:increasing-LR}
	Let $F,G$ be two distributions over $m$ values $v_1,\dots,v_m$. 
	The likelihood ratio $F_j/G_j$ 
	is monotonically increasing in $j$ if 
	$$
	F_j/G_j \le F_{j'}/G_{j'}
	$$
	for every $j<j'$ such that at least one of $F_j,G_j$ is
	positive, and at least one of $F_{j'},G_{j'}$ is positive.
\end{definition}

\begin{definition}[MLRP]
	\label{def:MLRP}
	A principal-agent problem satisfies {\em MLRP} if for every pair of actions $a,a'$ such that $c_a<c_{a'}$, the likelihood ratio $F_{a',j}/F_{a,j}$ is monotonically increasing in $j$. 
\end{definition}

\begin{proposition}[MLR $\implies$ FOSD \cite{TadelisSegal05}]
	If the likelihood ratio $F_j/G_j$ is monotonically increasing in $j$, then $F$ first-order stochastically dominates $G$. The converse does not hold.
\end{proposition}

We demonstrate MLRP and non-MLRP through the following examples:

\begin{example}[Two outcomes] 
	\label{ex:2-outcomes-MLRP}
	Assume there are $m=2$ outcomes $\ell<h$, and that a higher action cost means higher expected outcome. Then the probability $F_{i,h}$ of action $i$ to achieve the high outcome is strictly increasing in $i$. Thus MLRP holds.
\end{example}

\begin{example}[Spanning condition \cite{GrossmanHart83}]
	Assume that the agent has two basic actions, such as ``effort'' (costly) and ``no effort'' (cost zero), for which MLR holds, and the agent can interpolate among these (this is known as a setting satisfying the ``spanning condition''). Then the resulting action set satisfies MLRP.
\end{example}

\begin{example}[Binomial distributions]
	Assume that higher cost means more effort on behalf of the agent, that the level of effort determines the probability of the agent's success in a Bernoulli trial, and that the outcome is the Binomially distributed number of successful trials out of a total of $m-1$ trials. Then the action set satisfies MLRP: one can verify that 
	$$
	\frac{{{m-1}\choose{j-1}} p^{j-1} (1-p)^{m-j}} {{{m-1}\choose{j-1}} q^{j-1} (1-q)^{m-j}}
	$$
	is increasing in $j$ when $p>q$.
\end{example}

\begin{example}[No MLRP \cite{CaillaudHermalin00}]
	\label{ex:no-MLRP}
	Let $n=2$ and $m=3$. We define a principal-agent setting $(A_n,\Omega_m)$ where $A_n=(a_1,a_2)$ and $\Omega_m=(x_1,x_2,x_3)=(0, 1, 2)$. The actions are defined as follows:
	\begin{eqnarray*}
		a_1=(F_1,c_1) &=& ((1/3, 1/3, 1/3), ~0),\\ 
		a_2=(F_2,c_2) &=& ((1/3, 1/6, 1/2), ~1).
	\end{eqnarray*}
	These distributions can be viewed as convex combinations of distributions with MLR (analogously to combinations of regular distributions leading to irregularity in auction theory).
	Namely:
	$F_i = \frac{1}{3}F^1_{i} + \frac{2}{3}F^2_{i}$, where $F^1_{1}=(1,0,0)$, $F^2_{1}=(0,1/2,1/2)$, $F^1_{2}=(1,0,0)$, $F^2_{2}=(0,1/4,3/4)$, and the likelihood ratios $F^1_{2,j}/F^1_{1,j}$ and $F^2_{2,j}/F^2_{1,j}$ are both monotonically increasing.%
	\footnote{A possible economic story behind this example could be that the agent chooses between ``no effort'' (action $a_1$) and ``effort'' (action $a_2$). Without effort, the distribution is $(1/2,1/2)$ over outcomes $(x_2, x_3)$, and with effort the distribution is $(1/4,3/4)$. However, regardless of the agent's effort level, with probability $1/3$ some exogenous bad event occurs (e.g., the market adopts a different technology as the industry standard, causing sales to drop), resulting in an outcome of $x_1=0$.}
\end{example}

\subsection{The Regularity Assumption of CDFP}

In addition to MLRP, there are other less natural regularity assumptions in the literature, for example the following \emph{concavity of distribution function property} (CDFP).

\begin{definition}
	\label{def:CDFP}
	An action $a$ satisfies the \emph{concavity of distribution function property (CDFP)} if for every two actions such that $a$'s cost $c_a$ is a convex combination of their costs, it holds that $a$'s distribution over outcomes first-order stochastically dominates the convex combination of their distributions. A principal-agent setting satisfies CDFP if it holds for every action. 
\end{definition}


\section{Special Cases in Which Simple Contracts are Optimal}
\label{appx:simple-and-optimal}

In several special cases of interest, a simple contract with a single nonzero payment is optimal. 
In this appendix we highlight four such cases. 
An obvious one (given Lemma \ref{lem:pos-pay}, which bounds the number of non-negative payments by $n-1$) is that of $n=2$ actions. In Proposition \ref{pro:partition-2-acts} we state for completeness the optimal contract for $n=2$, and in Propositions~\ref{pro:cdfp}, \ref{pro:two-outcomes} and \ref{pro:mlrp-increasing} we identify three additional classes of principal-agent settings in which simple is optimal. 

\begin{proposition}
	\label{pro:partition-2-acts}
	For $n=2$ actions, if the optimal contract incentivizes the nonzero-cost action $a_2$ rather than $a_1$ with cost $0$, then there is an optimal contract that pays only for the outcome that maximizes the likelihood ratio $F_{2,j}/F_{1,j}$, and the payment is $\frac{c_2}{F_{2,j}-F_{1,j}}$.
\end{proposition}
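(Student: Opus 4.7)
The plan is to combine two pieces the paper has already established: Lemma \ref{lem:pos-pay}, which bounds the number of positive payments in an optimal contract by $n-1$, and the LP \eqref{LP:min-pay} for min-expected-payment implementation of a fixed action. With $n=2$, the first of these immediately forces the existence of an optimal contract that is nonzero on at most one outcome, so the proof reduces to picking that outcome optimally and verifying the stated payment.

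Concretely, I would first invoke Lemma \ref{lem:pos-pay} with $n=2$ to conclude that there is an optimal contract implementing $a_2$ whose payment scheme is supported on a single outcome, say $x_{j^\star}$. Next I would specialize LP \eqref{LP:min-pay} to $a=a_2$ and restrict to single-outcome contracts $t_{j}=0$ for $j\neq j^\star$. Since $c_1=0$ and $t_{j^\star}\ge 0$, the IR condition is automatic (this is exactly the role played by Assumption A3, as noted after the LP), so the only remaining constraint is
\[
(F_{2,j^\star}-F_{1,j^\star})\,t_{j^\star}\ \ge\ c_2,
\]
with objective $F_{2,j^\star}\,t_{j^\star}$. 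Feasibility requires $F_{2,j^\star}>F_{1,j^\star}$, and under the assumption that $a_2$ is the implemented action at least one such outcome exists. For every feasible $j^\star$, the IC constraint is binding at the optimum, yielding $t_{j^\star}=c_2/(F_{2,j^\star}-F_{1,j^\star})$, exactly as claimed.

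Finally, I would optimize over the choice of $j^\star$. Substituting the optimal payment, the expected payment becomes
\[
F_{2,j^\star}\cdot t_{j^\star}\ =\ \frac{c_2\,F_{2,j^\star}}{F_{2,j^\star}-F_{1,j^\star}}\ =\ \frac{c_2}{1-F_{1,j^\star}/F_{2,j^\star}},
\]
which is minimized precisely when the likelihood ratio $F_{2,j^\star}/F_{1,j^\star}$ is maximized (interpreting outcomes with $F_{1,j^\star}=0$ and $F_{2,j^\star}>0$ as the unambiguous maximizers, since there the ratio is $+\infty$ and the expected payment equals $c_2$). Hence the outcome maximizing the likelihood ratio, together with the payment $c_2/(F_{2,j^\star}-F_{1,j^\star})$, gives an optimal contract, completing the argument.

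There is no real obstacle here; the only small care needed is the case $F_{1,j^\star}=0$, which I would handle by the interpretation above, and the observation that by Assumption A3 the IR constraint can be dropped so that LP \eqref{LP:min-pay} indeed collapses to a scalar optimization once the single-payment structure is imposed.
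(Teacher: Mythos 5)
Your proposal is correct and follows essentially the same route as the paper's own proof: invoke Lemma \ref{lem:pos-pay} to reduce to a single positive payment, make the IC constraint tight to get $t_{j}=c_2/(F_{2,j}-F_{1,j})$, and observe that the resulting expected payment $c_2/(1-F_{1,j}/F_{2,j})$ is minimized by the outcome maximizing the likelihood ratio. Your treatment is in fact slightly more careful than the paper's (explicitly noting that IR is handled by Assumption A3 and addressing the $F_{1,j}=0$ case), but the argument is the same.
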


\begin{proof}
	Consider an optimal solution with a single positive payment (Lemma \ref{lem:pos-pay}).
	The constraint $\sum_j F_{1,j}p_j-c=\sum_j F_{0,j}p_j$ must be tight. For every $j$, if $p_j>0$ and the rest of the payments are zero, then $p_j=\frac{c}{F_{1,j}-F_{0,j}}$ and the expected payment is $F_{1,j}p_j=\frac{c}{1-F_{0,j}/F_{1,j}}$. To minimize this expected payment, $F_{0,j}/F_{1,j}$ must be minimized, completing the proof.
\end{proof}

\begin{proposition}\label{pro:cdfp}
	\label{pro:mlrp-and-cdfp}
	Consider a principal-agent setting $(A_n,\Omega_m)$ with $n$ actions and $m$ outcomes. Then there exists an optimal contract with a single nonzero payment for the highest outcome $x_m$ if the setting satisfies MLRP and there exists an action that is implementable by an optimal contract satisfying CDFP. 
\end{proposition}

To prove this proposition we need the following lemma.

\begin{lemma}
	\label{lem:highest-action}
	Consider a principal-agent setting $(A_n,\Omega_m)$ with $n$ actions and $m$ outcomes, for which MLRP holds. If the highest-cost action $a_n$ is implementable, then 
	there is an implementing contract with minimum expected payment that has a single nonzero-payment, which is rewarded for the highest outcome $x_m$.
\end{lemma}

\begin{proof}
	Recall the implementability primal LP from Appendix \ref{appx:lp-results}. Its variables are the payments $t_1,\dots,t_m$. We need to show that there is an optimal solution to this LP for action $a_n$ 
	with a single nonzero variable $t_m$. 
	We achieve this by creating a \emph{reduced} version of the primal LP with only one variable $t_m$, and showing that its optimal objective value is no worse (no larger) than that of the original LP. Our argument uses the dual LP, in which there is a constraint for every one of the $m$ outcomes. Our proof proceeds as follows:  
	\begin{itemize}
		\item[1.] Create a reduced dual by dropping all the constraints except for the one corresponding to the maximum outcome $m$.
		\item[2.] Solve the relaxed dual LP to optimality.
		\item[3.] Verify that the resulting solution is feasible (and hence optimal) for the original dual LP.
	\end{itemize}
	
	These steps are sufficient to complete the proof: The reduced dual has the same optimal objective value as the original one. Dualizing back, the optimal value of the reduced primal LP is the same as that of the original primal LP, as required. Hence there is an optimal primal solution that only uses~$t_m$.
	
	\paragraph{Step 1:}
	The reduced dual is:
	\begin{eqnarray}
	\max & \sum_{i<n} {\lambda_{i}(c_n-c_{i})} & \nonumber\\
	\text{s.t.} & \sum_{i<n} {\lambda_{i}(F_{n,m}-F_{i,m})} \le F_{n,m} &, \nonumber\\
	& \lambda_{i} \ge 0 & \forall i< n.\nonumber
	\end{eqnarray}
	
	\paragraph{Step 2:}
	To solve the reduced dual, we note that $(c_n-c_i) \ge 0$ for every $i < n$ (using that $a_n$ is the highest-cost action). Thus, all coefficients in our objective are nonnegative.   Also, since MLRP implies stochastic dominance, $(F_{n,m} - F_{i,m}) \ge 0$ for every $i < n$.  Thus, all coefficients in our (sole) dual constraint are nonnegative.  The optimal solution is then to ``max out'' on the action with the maximum ``bang-per-buck,'' meaning an action in $\arg\max_i (c_n-c_i)/(F_{n,m}-F_{i,m})$.  To make the dual constraint tight, we set $\lambda_i = F_{n,m}/(F_{n,m}-F_{i,m})$ (and other variables to 0).
	
	\paragraph{Step 3:}
	In this step we need to verify feasibility of the original dual constraints.  Pick an arbitrary outcome $j < m$.  With our choice of $\lambda_i$, feasibility becomes $F_{n,m}(F_{n,j}-F_{i,j})/(F_{n,m}-F_{i,m}) \le F_{n,j}$.  After clearing denominators and canceling terms, this reduces to MLRP (i.e., $F_{n,m}/F_{i,m} \ge F_{n,j}/F_{i,j}$), which holds by assumption.
\end{proof}

We are now ready to prove the proposition.

\begin{proof}[Proof of Proposition~\ref{pro:cdfp}]
	The proof follows that of Proposition 12 in \cite{CaillaudHermalin00}.
	Assume there exists an optimal contract with payment profile $t$ implementing a CDFP action $a_i$ at expected cost $T$. Denote the principal's expected payoff by $OPT$.
	We show there is simple contract of the required format implementing $a_i$ at the same expected cost (thus achieving $OPT$).
	
	Consider a new setting with actions $\{a_{k}\mid k\le i\}$. 
	Action $a_i$ is implementable in this setting (e.g., by transfer profile $t$) and MLRP holds. 
	Thus by Lemma \ref{lem:highest-action}, there is a contract implementing $a_i$ at minimum expected cost with nonzero payment only for $x_m$. Denote this payment by $t'_m$, and let $T'$ be its expected cost. Since removing actions could not have increased the cost of implementing $a_i$, $T'\le T$. Denote the principal's expected payoff by $OPT'$ and observe $OPT'\ge OPT$.
	Let $i'<i$ be such that the IC constraint is binding for action $a_{i'}$ (such an action must exist or $T'$ could have been lowered).

	Now add back the actions $\{a_{k}\mid k>i\}$.  We argue that the same simple contract still implements~$a_i$. 
	Indeed, assume for contradiction that the simple contract implements $a_{i''}$ rather than $a_i$, where $i''>i$. 
	We will use CDFP to show that in this case, the principal's payoff is $>OPT'$, in contradiction to the optimality of the contract achieving $OPT\le OPT'$. 
	
	Since $c_{i'} < c_i < c_{i''}$, we can write $c_i=\lambda c_{i'} + (1-\lambda)c_{i''}$ for $\lambda\in[0,1]$. By CDFP, the distribution of~$a_i$ first-order stochastically dominates that of the mixed action $\lambda a_{i'} + (1-\lambda) a_{i''}$. Since the simple contract only pays for the highest outcome, this means that the agent's utility from $a_i$ is at least his utility from $\lambda a_{i'} + (1-\lambda) a_{i''}$. By tightness of the IC constraint for $a_{i'}$ and by the agent's preference for $a_{i''}$, the IC constraint must be tight for $a_{i''}$ as well. We know that the agent breaks ties in favor of the principal, and so the principal's payoff from $a_{i''}$ must exceed her payoff of $OPT'$ from $a_i$. 
\end{proof}

\begin{proposition}\label{pro:two-outcomes}
Consider a principal-agent setting $(A_n,\Omega_m)$ with $n$ actions and $m$ outcomes. Then there exists an optimal contract with a single nonzero payment for the highest outcome $x_m$ if there are $m = 2$ outcomes.
\end{proposition}

\begin{proof}
	Assume there are 2 outcomes, low ($\ell$) and high ($h$). There are $n$ actions numbered in (strictly) increasing order of their expected outcome $R_i$. 
	Observe that the probability $F_{i,h}$ of action $i$ to achieve the high outcome is also strictly increasing. This implies that MLRP holds. 
	Let $a_i$ be an optimally-implementable action. If $i=1$ (so $a_i$ is the zero-cost action), the trivial contract with no payments is optimal; if $i=n$, the claim follows from Lemma \ref{lem:highest-action}. 
	We now show that if $1<i<n$, action $a_i$ must satisfy CDFP, and so the proof is complete by Proposition~\ref{pro:cdfp}.
	
	Assume for contradiction that $a_i$ is not CDFP. So there exist two actions $a_{i'},a_{i''}$ where $i' < i < i''$, and $\lambda\in[0,1]$, such that 
	\begin{eqnarray*}
	c_i &=& \lambda c_{i'} + (1-\lambda)c_{i''},\\
	F_{i,h} & < & \lambda F_{i',h} + (1-\lambda)F_{i'',h}.
	\end{eqnarray*}
	This implies existence of $\lambda < \lambda' < 1$ such that
	\begin{eqnarray*}
	c_i &>& \lambda' c_{i'} + (1-\lambda')c_{i''},\\
	F_{i,h} &=& \lambda' F_{i',h} + (1-\lambda')F_{i'',h}.
	\end{eqnarray*}
	But by Proposition \ref{pro:implementable} this means that action $a_i$ is not implementable, contradiction.
\end{proof}

\begin{proposition}
	\label{pro:mlrp-increasing}
	Consider a principal-agent setting $(A_n,\Omega_m)$ with $n$ actions and $m$ outcomes. Then there exists an optimal contract with a single nonzero payment for the highest outcome $x_m$ if the setting satisfies MLRP and the actions have strictly increasing welfare.  
\end{proposition}

\begin{proof}
	Assume there is an optimal contract with payment profile $t$ incentivizing $a_i$. Let $j<m$ be the lowest outcome such that $t_j>0$. 
	Due to MLRP we can move weight from $t_j$ to $t_m$ at such a ratio that it will weakly decrease (resp., increase) the utility of all actions below (resp., above) $a_i$, and not change the utility of $a_i$. If at any point the utilities of $a_i$ and $a_k$ where $k>i$ become the same, then we have found a payment profile that incentivizes $a_k$ with the same utility to the agent as in the optimal contract that incentivizes $a_i$. But since the welfare of $a_k$ is larger, the principal's payoff must be larger in contradiction to the optimality of the contract. Thus we can move weight until $t_j$ becomes 0. We conclude that with MLRP and increasing welfare, there is always an optimal contract that pays only for the highest outcome.
\end{proof}

\section{Appendix on Properties and Geometry of Linearly-Implementable Actions}
\label{appx:geometry}

In this appendix we provide additional details for the structural insights developed in Section \ref{sec:geometry}.

\subsection{Basic Properties of Linearly-Implementable Actions}

We provide additional details for Observation~\ref{obs:single-implemented} and Observation~\ref{obs:indifferent}.

\begin{proof}[Proof of Observation~\ref{obs:single-implemented}]
	A contract implements more than one action only if two actions have the same maximum expected utility for the agent, and the same expected payoff for the principal (since the agent tie-breaks in favor of the principal). 
	In a linear contract with parameter $\alpha<1$, the principal's expected payoff is
	$(1-\alpha) R_{a}$, and as $R_a\ne R_{a'}$ for any two actions by Assumption A1, the necessary condition cannot hold.
	In a linear contract with parameter $\alpha=1$, the agent's expected utility is $R_a-c_a$, and so by Assumption A2 there is a unique action that maximizes this, completing the proof.
\end{proof}

\begin{proof}[Proof of Observation \ref{obs:indifferent}]
	The agent's expected utility from action $a$ is $\alpha_{a,a'}R_{a}-c_{a}$, which is equal by definition of $\alpha_{a,a'}$ to action $a'$'s utility $\alpha_{a,a'}R_{a'}-c_{a'}$. 
	For an example in which $\alpha_{a',a_2}$ incentivizes neither $a'$ nor $a_2$, consider the following. If the actions are $a_1=(F_1,c_1) = ((1, 0, 0), ~0)$, 
	$a'=(F',c') = ((0, 1, 0), ~1)$, and $a_2=(F_2,c_2) = ((0, 0, 1), ~2)$ and the outcomes are
	$(x_1,x_2,x_3)=(1,3,6)$, then only $a_1$ and $a_2$ can be implemented.
\end{proof}

\subsection{Structural Lemmas for Linear Implementability}

We provide proofs of Lemma \ref{lem:envelope-monotonicity} and Lemma \ref{lem:upper-envelope}. 

\begin{proof}[Proof of Lemma \ref{lem:envelope-monotonicity}]
	Notice that for every action $a$ and corresponding line $\alpha R_a-c_a$, the slope $R_a$ is non-negative. A key fact is that an upper envelope of affine functions with non-negative slopes is convex \cite{Rubinov08}. From convexity it follows that 
	the upper envelope crosses the $x$-axis at most once, so $u(\alpha)\in A_n\implies u(\alpha')\in A_n$.
	Also from convexity, the slopes of the line segments forming the upper envelope are increasing in $\alpha$, 
	so $R_a<R_{a'}$. By Assumption A2 it follows that $c_a<c_{a'}$. Assume for contradiction that $R_a-c_a> R_{a'}-c_{a'}$. But this means that segment $\ell_a$ intersects $\alpha=0$ at a higher point than segment $\ell_{a'}$ (as $-c_a>-c_{a'}$), and also intersects $\alpha=1$ at a higher point (as $R_a-c_a> R_{a'}-c_{a'}$). Segment $\ell_a$ thus completely overshadows $\ell_{a'}$, in contradiction to the fact that $u(\alpha')=a'$ and so $\ell_{a'}$ is part of the upper envelope. 
\end{proof}


\begin{proof}[Proof of Lemma \ref{lem:upper-envelope}]
	Fix $\alpha\in[0,1]$ and consider the linear contract with parameter $\alpha$. Action $a$ is IR when $\alpha R_a-c_a\ge 0$, i.e., if and only if its corresponding segment $\ell_a$ is at or above the $x$-axis at $\alpha$. Action $a$ is IC when $\alpha R_a-c_a\ge \alpha R_{a'}-c_{a'}$ for every $a'$, i.e., if and only if its segment $\ell_a$ participates in the upper envelope at $\alpha$. Thus both $a(\alpha)$ and $u(\alpha)$ return $\varnothing$ when all segments at $\alpha$ are below the $x$-axis, equiv., no action is IR given the linear contract with parameter $\alpha$. Otherwise, both return the action $a$ whose segment $\ell_a$ forms the upper envelope at $\alpha$ above the $x$-axis, equiv., the IC and IR action given the linear contract with parameter $\alpha$. In case of a tie, both break the tie in favor of the action with the highest expected outcome $R_a$---mapping $u(\cdot)$ does so by definition and mapping $a(\cdot)$ since this is the action that maximizes the principal's expected payoff $(1-\alpha)R_a$. This completes the proof.
\end{proof}

\subsection{Implications of Structural Lemmas}

We conclude with additional details for Corollary \ref{cor:what-a-returns}, Corollary~\ref{cor:increasing-welfare}, Corollary~\ref{cor:alphas-equal-intersections}, and Corollary \ref{cor:linear-equivalence}.

\begin{proof}[Proof of Corollary \ref{cor:what-a-returns}]
	Recall that mapping $a(\cdot)$ is onto $I_N$. Lemma \ref{lem:upper-envelope} shows that $a(\cdot)$ is equivalent to the upper envelope mapping $u(\cdot)$. 
	In the upper envelope, every segment appears once.
	This means that $a(\cdot)$ maps to action $a_i\in I_N$ for a consecutive range of $\alpha$s, starting at $\alpha_i$ (by its definition as the smallest $\alpha$ such that $a(\alpha)=a_i$). In the upper envelope,
	the segments $\{\ell_a\}$ are ordered by their expected outcomes $\{R_a\}$ (Lemma \ref{lem:envelope-monotonicity}).
	Since the actions in $I_N$ are also ordered by their expected outcomes (i.e., $R_1<\dots<R_N$), the range of $\alpha$s mapping to $a_i$ is immediately followed by the range mapping to $a_{i+1}$, establishing \eqref{eq:up-to-max}. The final range of $\alpha$s ending at $1$ maps to action $a_N$, completing the proof. 
\end{proof}

\begin{proof}[Proof of Corollary~\ref{cor:increasing-welfare}]
	Follows directly from Corollary \ref{cor:what-a-returns} and Lemma \ref{lem:envelope-monotonicity}. 
\end{proof}

\begin{proof}[Proof of Corollary~\ref{cor:alphas-equal-intersections}]
	For every $i\in[N]$, denote by $\ell_i$ the segment corresponding to action $a_i$.
	By Lemma~\ref{lem:upper-envelope} and Corollary~\ref{cor:what-a-returns}, parameter $\alpha_i$ is precisely the intersection point between $\ell_{i-1}$ and $\ell_i$ for every $i\ge 2$.
	Observe that the intersection between $\alpha R_i-c_i$ and $\alpha R_{i-1}-c_{i-1}$ is at point $\alpha_{i-1,i}$. It remains to consider the case of $i=1$, and in this case $\alpha_1$ is the intersection point between $\ell_1$ and the $x$-axis, which occurs at $\alpha=c_i/R_i$.  
\end{proof}

\begin{proof}[Proof of Corollary \ref{cor:linear-equivalence}]
	The first part of the corollary follows from Lemma \ref{lem:upper-envelope}, which establishes equivalence between the linearly-implementability mapping and the upper envelope mapping, and from the fact that the upper envelope depends only on lines $\alpha R_a-c_a$ parameterized by $R_a,c_a$. 
	The second part of the corollary follows from the fact that the principal's expected payoffs are $(1-\alpha)R_{a(\alpha)}$ and $(1-\alpha)R_{a'(\alpha)}$, respectively, and we have that $R_{a'(\alpha)}=R_{b(a(\alpha))}=R_{a(\alpha)}$. 
\end{proof}

\section{Appendix on Robust Optimality of Linear Contracts}
\label{appx:robustness}

In this appendix we provide the proofs that we omitted from Section~\ref{sec:robustness}.


\begin{proof}[Proof of Claim \ref{cla:downward-slope}]	
	In this case, let $F_i=F'_i$ for every action~$a_i$. We argue that the linear contract with parameter $\alpha=0$ has expected payoff at least as high as that of contract $t$. Observe that since $t_1>t_m$, the expected payments for the actions are decreasing in the actions' expected outcomes: if $R_i<R_k$ then $F_{i,m}=F'_{i,m}<F'_{k,m}=F_{k,m}$ and so $F_{i,1}>F_{k,1}$; thus $F_{i,1}t_1+F_{i,m}t_m > F_{k,1}t_1+F_{k,m}t_m$. Consider the zero-cost action $a_1$ (which exists by Assumption~A3), and let $a_{i^*}$ be the action incentivized by contract $t$. The expected outcome of action $a_{i^*}$ must be (weakly) lower than that of action $a_1$---its cost is (weakly) higher so its expected payment must be (weakly) higher. Since the agent's choice of action $a_{i^*}$ is IR, its expected outcome is an upper bound on contract $t$'s expected payoff to the principal. But the linear contract with parameter $\alpha=0$ incentivizes an action with (weakly) higher expected outcome at no cost to the principal, thus guaranteeing (weakly) higher expected payoff to the principal. This completes the proof for the case of $t_1>t_m$.
\end{proof}

\begin{proof}[Proof of Claim \ref{cla:non-affine}]
	A characterization of affine mappings is that they map every $3$ collinear points to points that are themselves collinear. Thus there must exist $3$ points $(x,t(x)),(x',t(x'))$ and $(x'',t(x''))$ where (w.l.o.g.) $x<x'<x''$, such that these points are not collinear. Now consider the line between the 2 points $(x_1,t_1)$ and $(x_m,t_m)$. It cannot be the case that the $3$ points $(x,t(x)),(x',t(x')),(x'',t(x''))$ are all on this line. Thus we have shown the existence of an index $j$ as required.
\end{proof}

\begin{proof}[Proof of Claim \ref{cla:below}]
	We prove the claim by showing that the affine contract with parameters $\alpha_0,\alpha_1$ (corresponding to $l_1$) has as high expected payoff to the principal as that of contract $t$ for distributions $\{F_i\}_{i=1}^n$, which are defined as follows:
	Let $a_{i^*}$ be the action incentivized by the affine contract with parameters $\alpha_0,\alpha_1$. Set $F_{i^*}=F'_{i^*}$ and $F_{k}=F''_{k}$ for every $k\ne i^*$. 	
	A similar argument as in the proof of Claim \ref{cla:above} establishes that for distributions $\{F_i\}_{i=1}^n$ as defined, contract $t$ will also incentivize action $a_{i^*}$, and at the same expected payment to the agent as the affine contract. This is because the expected payment for $a_{i^*}$ is the same in contract $t$ as in the affine contract, while the expected payments for all other actions (weakly) decrease compared to the affine contract.
\end{proof}

\begin{proof}[Proof of Observation \ref{obs:affine-vs-linear}]
	Fix $\alpha_0\ge 0$, and consider the mapping from $\alpha_1$ to the action implemented by the affine contract with parameters $\alpha_0,\alpha_1$. This mapping is identical to the mapping from $\alpha$ to the action implemented by the linear contract with parameter $\alpha=\alpha_1$. This follows from the analysis in Section \ref{sec:geometry} and in particular Lemma \ref{lem:upper-envelope},%
	\footnote{Note that we are not limiting $\alpha_1$ to be $\le 1$; this is not an issue since everything in Section \ref{sec:geometry} technically holds for $\alpha_1>1$ (of course, $\alpha_1>1$ does not  make sense for the principal since it leaves her with negative expected payoff).}
	and by observing that the line segments forming the upper envelope for the affine contract are the same as those of the linear contract, with an additional vertical shift of magnitude $\alpha_0$. So for every $\alpha_1$, the linear contract with $\alpha=\alpha_1$ implements the same action as the affine contract; but its expected payment to the agent is lower by $\alpha_0\ge 0$ than that of the affine contract. Thus its expected payoff to the principal is higher by $\alpha_0\ge 0$, completing the proof.
\end{proof}

\section{Appendix on Approximation Guarantees of Linear Contracts}
\label{appx:approximation}


In this appendix we provide additional details for the approximation guarantees developed in Section~\ref{sec:linear-contracts}.

\subsection{Auxiliary Lemma used in the Proof of Theorem \ref{thm:lower-bound}}

	\begin{lemma}\label{lem:aux-implementable}
	Consider the principal-agent settings $(A^\epsilon_n, \Omega_n)$ defined in the proof of Theorem~\ref{thm:lower-bound}. Then, $\alpha_i=\alpha_{i-1,i}$.
	\end{lemma}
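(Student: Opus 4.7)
The plan is to show that every action $a_i \in A_n^\epsilon$ is linearly-implementable, i.e., $I_N = A_n^\epsilon$ with $N = n$. Once this is established, the identity $\alpha_i = \alpha_{i-1,i}$ is immediate from Corollary~\ref{cor:alphas-equal-intersections}, where $\alpha_{0,1}$ is interpreted as $c_1/R_1 = 0$ (consistent with the convention used in the proof of that corollary).

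First I would use~\eqref{eq:induction-vals1} to get a closed form for $\alpha_{i-1,i}$. A direct calculation yields $R_i - R_{i-1} = (1-\epsilon)/\epsilon^{i-1}$ and $c_i - c_{i-1} = (R_i - R_{i-1}) - (1-\epsilon)$, so
\[
\alpha_{i-1,i} \;=\; \frac{c_i - c_{i-1}}{R_i - R_{i-1}} \;=\; 1 - \epsilon^{i-1}
\]
for every $i \ge 2$, with $\alpha_{0,1} = 0 = 1 - \epsilon^0$ under the convention above. Since $\epsilon \in (0,1)$, the sequence $\alpha_{0,1} < \alpha_{1,2} < \cdots < \alpha_{n-1,n}$ is strictly increasing in $[0,1)$.

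Next I would invoke Lemma~\ref{lem:upper-envelope} to reduce linear-implementability of $a_i$ to showing that the segment $\ell_i : \alpha \mapsto \alpha R_i - c_i$ attains the upper envelope of $\{\ell_j\}_{j \in [n]}$ at some $\alpha^\star \in [0,1]$. Choose $\alpha^\star$ slightly greater than $\alpha_{i-1,i}$ (or $\alpha^\star = 1$ when $i = n$). For $j = i-1$, $\ell_i(\alpha^\star) \ge \ell_{i-1}(\alpha^\star)$ since $R_i > R_{i-1}$ and the two lines cross at $\alpha_{i-1,i} \le \alpha^\star$. For $j < i-1$, the inequality $\ell_i(\alpha^\star) \ge \ell_j(\alpha^\star)$ reduces to (a) $\alpha_{j,i} \le \alpha_{i-1,i}$; for $j > i$, it reduces to (b) $\alpha_{i,j} \ge \alpha_{i,i+1}$.

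The main technical step is verifying (a) and (b). After substitution via~\eqref{eq:induction-vals1}, with $k$ denoting the gap $|i - j|$, (a) reduces to $k \ge \sum_{s=0}^{k-1} \epsilon^s$ and (b) reduces to $k \epsilon^{k-1} \le \sum_{s=0}^{k-1} \epsilon^s$; both follow by term-by-term comparison since $\epsilon^{k-1} \le \epsilon^s \le 1$ for $0 \le s \le k-1$. I do not anticipate any real obstacle; the main care needed is in handling the boundary cases $i = 1$ and $i = n$, which the convention $\alpha_{0,1} = 0$ and the choice $\alpha^\star = 1$ take care of, respectively.
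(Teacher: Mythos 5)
Your proposal is correct and follows essentially the same route as the paper: both reduce, via Lemma~\ref{lem:upper-envelope}, to showing that the segment $\ell_i$ forms the upper envelope precisely for $\alpha$ just above $\alpha_{i-1,i}=1-\epsilon^{i-1}$, the paper handling non-adjacent actions by induction on consecutive comparisons (using that the $\alpha_{i-1,i}$ increase in $i$) where you verify the all-pairs crossing inequalities $\alpha_{j,i}\le\alpha_{i-1,i}$ and $\alpha_{i,j}\ge\alpha_{i,i+1}$ directly, and your reductions to $k \ge \sum_{s=0}^{k-1}\epsilon^s$ and $k\epsilon^{k-1}\le\sum_{s=0}^{k-1}\epsilon^s$ check out. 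The only detail left implicit is that the upper envelope is nonnegative at your chosen $\alpha^\star$ (the IR requirement in Lemma~\ref{lem:upper-envelope}), which is immediate here since $c_1=0$ gives $\ell_1(\alpha)=\alpha R_1\ge 0$ and hence a nonnegative envelope everywhere.
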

	\begin{proof}
	Recall from Lemma \ref{lem:upper-envelope} that a linear contract with parameter $\alpha$ implements the action whose segment forms the upper envelope at $\alpha$. The next claim shows that
	for every $\alpha>\alpha_{i-1,i}$, the segment of every action $a_{i'}$ such that $i'<i$ is (weakly) below that of action~$a_{i}$. 
	It can similarly be shown that for every $\alpha<\alpha_{i,i+1}$ (or $\alpha\le 1$ for $i=n$), the segment of every action $a_{i'}$ such that $i'>i$ is below that of action $a_{i}$. 
	
	\begin{claim}
		\label{cla:alpha-i}
		For every $\alpha\ge \alpha_{i-1,i}$, the agent's expected utility from action $a_{i}$ is at least his expected utility from any ``previous'' action $a_{i'}$ where $i'<i$.
	\end{claim} 
	
	\begin{myproof}[Proof of Claim \ref{cla:alpha-i}]
		The claim holds trivially for the base case $i=1$.
		Assuming the claim holds for $i-1$, to establish it for $i$ it is sufficient to show that for every $\alpha\ge \alpha_{i-1,i}$, the agent's expected utility from action $a_{i}$ is at least his expected utility from action $a_{i-1}$.
		For every $\alpha=1-\epsilon^{i-1}+\delta$ where $\delta\ge0$, action $a_{i}$ has expected utility $\alpha R_{i} - c_{i} = (i-1)(1-\epsilon) + \frac{\delta}{\epsilon^{i-1}}$ for the agent, whereas action $a_{i-1}$ has expected utility $\alpha R_{i-1} - c_{i-1} = (i-1)(1-\epsilon) + \frac{\delta}{\epsilon^{i-2}}$, which is lower as required to establish Claim \ref{cla:alpha-i}. 
	\end{myproof}

	We conclude that for every $\alpha\in[\alpha_{i-1,i}, \alpha_{i,i+1})$ (or $\alpha\in[\alpha_{n-1,n}, 1)$ for $i=n$), the linear contract with parameter $\alpha$ implements action $a_i$, and so $\alpha_i=\alpha_{i-1,i}$. This completes the proof of Lemma \ref{lem:aux-implementable}.
\end{proof}

\subsection{Stronger Version of Theorem~\ref{thm:lower-bound}}\label{appx:LB-strong}

We show how to strengthen Theorem~\ref{thm:lower-bound} so that it requires only three outcomes. The proof of this lower bound also demonstrates that many actions can be linearly-implementable even if there are just a few outcomes.

\begin{theorem}
	\label{thm:strong-LB}
	For every number of actions $n$, there is a principal-agent setting $(A''_{n},\Omega_3)$ with $m = 3$ outcomes for which MLRP holds, 	
	such that the multiplicative loss in the principal's expected payoff from using a linear contract rather than an arbitrary one is at least $n$.
\end{theorem}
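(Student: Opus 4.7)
The plan is to reuse the expected outcomes $R_i$ and costs $c_i$ from Theorem~\ref{thm:lower-bound}, but redistribute the probability mass so that the support consists of only three outcomes, while preserving MLRP. Fix small $\epsilon > 0$ and set $x_1 = 0$, $x_2 = 1/\epsilon^{n-2}$, $x_3 = 1/\epsilon^{n-1}$. For each action $a_i$, take the same cost $c_i = \frac{1}{\epsilon^{i-1}} - i + \epsilon(i-1)$ as before, and define the distribution as follows: $a_n$ puts probability $1$ on $x_3$; $a_{n-1}$ puts probability $1$ on $x_2$; and for $i \le n-2$, action $a_i$ puts probability $\epsilon^{n-1-i}$ on $x_2$ and probability $1-\epsilon^{n-1-i}$ on $x_1 = 0$. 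A direct calculation confirms that $R_i = 1/\epsilon^{i-1}$ as required.

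The next step is to establish that $ALG \le 1$. Because the resulting instance has the same $(R_i, c_i)$ pairs as the construction in the proof of Theorem~\ref{thm:lower-bound}, Corollary~\ref{cor:linear-equivalence} implies that it has the same linearly-implementable actions with the same implementing parameters, and the principal's expected payoff from each linear contract is the same. Thus Lemma~\ref{lem:aux-implementable} and equation~\eqref{eq:induction} apply verbatim, yielding $ALG \le 1$ in the new instance as well.

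For the lower bound on $OPT$, I would exhibit the explicit contract $t = (0, 0, c_n)$. Under this payment scheme, $a_n$ yields the agent expected utility $c_n - c_n = 0$, while every $a_i$ with $i < n$ yields expected utility $-c_i \le 0$ since $F_{i,3} = 0$. Hence $a_n$ is IC and IR; using the tie-breaking rule in favor of the principal to resolve the tie with $a_1$ (which also yields utility $0$ but delivers only $R_1 = 1 < R_n - c_n$ to the principal), action $a_n$ is implemented. This gives $OPT \ge R_n - c_n = n - \epsilon(n-1)$, and therefore $OPT/ALG \ge n - \epsilon(n-1)$, which approaches $n$ as $\epsilon \to 0$.

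The main obstacle is verifying MLRP cleanly in the presence of several zero probabilities. I would check this by cases using Definition~\ref{def:increasing-LR}, which skips coordinates where both $F_j$ and $G_j$ vanish. For two actions $a_i, a_j$ with $i < j \le n-2$ the nontrivial ratios are $F_{j,1}/F_{i,1} = (1-\epsilon^{n-1-j})/(1-\epsilon^{n-1-i}) < 1$ and $F_{j,2}/F_{i,2} = \epsilon^{i-j} > 1$, which is monotone. For a pair involving $a_{n-1}$ (and $i < n-1$) the relevant ratios are $0$ at coordinate~$1$ and $1/\epsilon^{n-1-i}$ at coordinate~$2$, with coordinate $3$ skipped. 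For a pair involving $a_n$ and any $a_i$ with $i < n$, coordinate~$3$ has ratio $+\infty$ while the lower coordinates have ratio $0$ (or are skipped), so monotonicity again holds. This completes the verification.
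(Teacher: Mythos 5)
Your construction is correct, and every step checks out: the expected rewards $R_i=1/\epsilon^{i-1}$ are preserved, so Corollary~\ref{cor:linear-equivalence} legitimately transfers $ALG\le 1$ from the instance of Theorem~\ref{thm:lower-bound}; the contract $(0,0,c_n)$ implements $a_n$ at expected payment exactly $c_n$ because $x_3$ is reached only by $a_n$ and with probability one; and your case analysis of the likelihood ratios under Definition~\ref{def:increasing-LR} is complete. The high-level strategy is the same as the paper's --- keep the $(R_i,c_i)$ profile of the Theorem~\ref{thm:lower-bound} instance so that linear contracts are pinned down by Corollary~\ref{cor:linear-equivalence}, and reserve a third outcome exclusively for $a_n$ so that the optimal contract can single it out and extract its full welfare --- but your instantiation is genuinely simpler. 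The paper proceeds in two stages: it first collapses everything onto two outcomes $\{0,R_n\}$ (which kills $OPT$), then perturbs $a_n$ to reach a new top outcome with vanishing probability $\delta$, and must then re-estimate $ALG''\le 1+\delta+\epsilon^{n-1}\delta(1+\delta)$ and take a double limit $\delta,\epsilon\to 0$. By making $a_n$ (and $a_{n-1}$) deterministic you get $OPT\ge R_n-c_n$ exactly and $ALG\le 1$ exactly, with no $\delta$ and no extra computation; your ratio $n-\epsilon(n-1)\to n$ matches what the paper itself actually proves. Two trivial caveats: for $n=2$ your outcome $x_1=0$ is reached by no action, so the setting degenerates to $m=2$ (harmless, since Theorem~\ref{thm:lower-bound} already covers $m=n\le 3$); and like the paper's own proof, you establish the bound $n$ only in the limit, i.e., $n-\varepsilon$ for every $\varepsilon>0$.
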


\begin{proof}
	Let $(A^{\epsilon}_n,\Omega_m)$ be the principal-agent setting defined in the proof of Theorem \ref{thm:lower-bound}, where $R_i=1/\epsilon^{i-1}$ and $c_i=R_i-i+\epsilon(i-1)$.
	Let $ALG$ be the best achievable expected payoff to the principal using a linear contract, and let $OPT$ be the same using an arbitrary contract. From the proof of Theorem~\ref{thm:lower-bound} we know that $ALG=1$ and $OPT\to n$ as $\epsilon\to 0$. To prove Proposition~\ref{thm:strong-LB}, we define two additional principal-agent settings, $(A'_n,\Omega_2)$ and $(A''_n,\Omega_3)$, with $ALG',OPT'$ and $ALG'',OPT''$ denoting their optimal linear and optimal expected payoffs, respectively. Our settings will be such that MLRP holds, and $ALG''\to 1$ while $OPT''\to n$, thus establishing the proposition.  
	
	%
	\paragraph{Auxiliary setting $(A'_n,\Omega_2)$.}
	Let $\Omega_2$ be an outcome space with $2$ outcomes $x_1=0,x_2=R_n$.
	For every action $a_i\in A^\epsilon_n$, we define a corresponding action $b(a_i)$ with the same cost $c_i$, which leads to outcome $x_2$ with probability $R_i/R_n$ and to outcome $x_1$ otherwise. The expected outcome of action $b(a_i)$ is thus $R_i$. 
	Let $A'_n=\{b(a_i)\}_{i\in[n]}$ be the collection of all actions corresponding to those in $A^\epsilon_n$. Observe that MLRP holds for $A'_n$, since for every $i'>i$, the likelihood ratio of outcome $x_2$ is $R_{i'}/R_i>1$, and the likelihood ratio of outcome $x_1$ is $(R_n-R_{i'})/(R_n-R_i)<1$. 
	
	Invoking Corollary~\ref{cor:linear-equivalence} for principal-agent settings $(A^{\epsilon}_n,\Omega_m)$ and $(A'_n,\Omega_2)$, we get that the principal's expected payoff from any linear contract is the same in both settings, so $ALG'=ALG= 1$. While action $b(a_n)$ has welfare approaching $n$ as $\epsilon\to0$, we are not in a full information setting and thus $OPT'$ may be much lower.
	
	\paragraph{Setting $(A''_n,\Omega_3)$.} Our goal now is to define a principal-agent setting $(A''_n,\Omega_3)$ for which $ALG''\approx ALG'$, $OPT''\approx n$, and MLRP still holds. We start from $(A'_n,\Omega_2)$ and add an outcome $x_3 = x_2+1=R_n+1$ to get the new outcome set $\Omega_3$. We change action $b(a_n)$ such that it leads to outcome $x_3$ with some small probability~$\delta$ (to be determined below); the probabilities over the other outcomes are renormalized by factor $(1-\delta)$. We denote the resulting action by $a''_n$, and its expected outcome by $R''_n=(1-\delta)R_n+\delta x_3=R_n+\delta$. We change every other action $b(a_i)$ only by adding zero probability that it leads to outcome $x_3$, and denote the new action by $a''_i$. The new action set $A''$ is $\{a''_i\}_{i\in[n]}$.
	Observe that MLRP still holds, since the likelihood ratio of action $a''_n$ and any other action $a''_i$ for outcome $x_3$ is $\infty$.
	
	In the new setting, $OPT''= R''_n - c_n=R_n+\delta-c_n=OPT+\delta$, by paying $c_n/\delta$ for outcome $x_3$ and zero for any other outcome, thus incentivizing the agent to choose action $a''_n$ while paying $c_n$ in expectation. As for $ALG''$, the only change relative to the original setting $(A^\epsilon_n,\Omega_m)$ and the auxiliary setting $(A'_n,\Omega_2)$ is that action $a''_n$ becomes linearly-implementable by a contract with a smaller parameter $\alpha$ than the original action $a_n$. Denoting this parameter by $\alpha''_n$, we have that $ALG''=(1-\alpha''_n)R''_n$, since linearly-implementing any other action has inferior expected payoff of~1. 
	By Corollary \ref{cor:alphas-equal-intersections}, $\alpha''_n=(c_n-c_{n-1})/(R''_n-R_{n-1})$.
	So
	\begin{eqnarray*}
		ALG'' &=& (1-\alpha''_n)R''_n\\
		&=& (1-\frac{c_n-c_{n-1}}{R_n+\delta-R_{n-1}})(R_n+\delta)\\
		&=&\frac{\epsilon^{n-1}-\epsilon^n+\delta\epsilon^{n-1}}{1-\epsilon+\delta\epsilon^{n-1}}(\epsilon^{-(n-1)}+\delta)\\
		&\le& (\epsilon^{n-1}+\delta\epsilon^{n-1})(\epsilon^{-(n-1)}+\delta)\\
		&=& 1 + \delta + \epsilon^{n-1}\delta(1+\delta).
	\end{eqnarray*}
	By letting $\delta\to0$ and $\epsilon\to0$ we get $OPT''/ALG''\to n$, completing the proof.
\end{proof}

\subsection{Proof of Theorem \ref{thm:upper-bound-C} and Corollary \ref{cor:lower-bound-C}}

\begin{proof}[Proof of Theorem \ref{thm:upper-bound-C}]
Consider the set $I_N$ of linearly implementable actions. Denote by $C = \max_{a \in I_N} c_a$ the highest cost of any of the linearly implementable actions. We will bucket the set of linearly implementable actions into $L = \lceil \log_2(C) \rceil$ buckets $B_1, \dots, B_L$ such that 
\[
B_i = \{a \mid 2^{i-1} \leq c_a < 2^i\}.
\]

Note that this bucketing ensures that every implementable actions is in some bucket. By Lemma \ref{lem:envelope-monotonicity} within each bucket actions are sorted simultaneously by expected outcome, cost, and welfare.

As in the proof of Theorem \ref{thm:upper-bound-H} let $h(k)$ denote the action $a \in B_k$ with the highest $R_a$ (and hence highest $c_a$), and let $l(k)$ denote the action $a \in B_k$ with the lowest $R_a$ (and hence lowest $c_a$). 
Now by the same argument as in Theorem \ref{thm:upper-bound-H}
\begin{align}
OPT \leq R_{h(k)} - c_{h(k)} \leq \sum_{k \leq L} (1-\alpha_{h(k),h(k)-1})R_{h(k)}. \label{eq:upx}
\end{align}

\begin{claim}
\label{cla:C1-or-C2}
For every bucket $B_k$ either (C1) $R_{l(k)} \geq R_{h(k)}/4$ or (C2) there exist an action $a_i \in B_k$ such that $R_{i} \geq R_{h(k)}/2$ and $\alpha_{i} \leq 1/2$.
\end{claim}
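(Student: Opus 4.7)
The plan is to prove the claim by showing its contrapositive: assume (C1) fails, so $R_{l(k)} < R_{h(k)}/4$, and derive (C2). Two ingredients will be essential. First, every linearly-implementable action $a$ satisfies $R_a \geq c_a$: the linear contract implementing $a$ has parameter $\alpha \leq 1$, and incentive compatibility against the zero-cost action guaranteed by Assumption A3 gives $\alpha R_a - c_a \geq \alpha R_{a_1} \geq 0$, so $c_a \leq \alpha R_a \leq R_a$. Second, Corollary~\ref{cor:alphas-equal-intersections} rewrites the threshold $\alpha_i \leq 1/2$ as the arithmetic condition $c_i - c_{i-1} \leq (R_i - R_{i-1})/2$ on consecutive linearly-implementable actions, which telescopes cleanly.

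Given this, I would first extract a lower bound on $R_{h(k)}$. Since $a_{l(k)} \in B_k$, we have $c_{l(k)} \geq 2^{k-1}$, and welfare non-negativity gives $R_{l(k)} \geq c_{l(k)} \geq 2^{k-1}$. Combined with the failure of (C1), this forces
\[
R_{h(k)} \;>\; 4\, R_{l(k)} \;\geq\; 2^{k+1}.
\]

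Next, I would consider the index set $I = \{i : a_i \in B_k \text{ and } R_i \geq R_{h(k)}/2\}$. Since actions in $I_N$ are sorted simultaneously by cost and by $R$ (Lemma~\ref{lem:envelope-monotonicity}), $I$ is a contiguous range $\{i_1, \dots, h(k)\}$ containing $h(k)$. The failure of (C1) implies $R_{l(k)} < R_{h(k)}/2$, so $l(k) \notin I$, hence $i_1 > l(k)$ and $a_{i_1 - 1} \in B_k$; by minimality of $i_1$, $R_{i_1 - 1} < R_{h(k)}/2$. Now suppose for contradiction that $\alpha_i > 1/2$ for every $i \in I$. By Corollary~\ref{cor:alphas-equal-intersections} this means $c_i - c_{i-1} > (R_i - R_{i-1})/2$, and telescoping across $I$ yields
\[
c_{h(k)} - c_{i_1-1} \;>\; \tfrac{1}{2}\bigl(R_{h(k)} - R_{i_1-1}\bigr) \;>\; R_{h(k)}/4.
\]
Because both $c_{i_1-1}$ and $c_{h(k)}$ lie in $[2^{k-1}, 2^k)$, the left-hand side is strictly less than $2^{k-1}$, forcing $R_{h(k)} < 2^{k+1}$ and contradicting the bound established above. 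Hence some $i \in I$ has $\alpha_i \leq 1/2$, establishing (C2).

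The main obstacle is ensuring that $a_{i_1 - 1}$ lies in $B_k$, since otherwise the telescoped cost sum is not bounded by the bucket width $2^{k-1}$ and the contradiction collapses. This is exactly where welfare non-negativity and the assumed failure of (C1) work in tandem: the lower bound $R_{l(k)} \geq 2^{k-1}$ pushes $R_{h(k)}$ above $2^{k+1}$, and simultaneously forces $R_{l(k)} < R_{h(k)}/2$ so that the smallest index $i_1$ in $I$ is strictly inside the bucket rather than at its boundary.
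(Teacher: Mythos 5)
Your proof is correct and takes essentially the same approach as the paper: both arguments hinge on rewriting $\alpha_i \le 1/2$ via Corollary~\ref{cor:alphas-equal-intersections} as $c_i - c_{i-1} \le (R_i - R_{i-1})/2$, telescoping this over a range of bucket indices, and playing the resulting cost gap off against the factor-2 cost spread of a bucket together with $c_a \le R_a$. The only difference is organizational --- you telescope once over the suffix $\{i : R_i \ge R_{h(k)}/2\}$ and derive a single contradiction from the absolute bounds $c \in [2^{k-1}, 2^k)$ and $R_{h(k)} > 2^{k+1}$, whereas the paper telescopes twice (first over the whole bucket to show some $\alpha_i \le 1/2$ exists, then over the suffix above the largest such $i$ to show $R_i \ge R_{h(k)}/2$) --- a mild streamlining rather than a genuinely different route.
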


\begin{myproof}[Proof of Claim \ref{cla:C1-or-C2}]

Fix a bucket $B_k$. Consider the actions $l(k), \dots, h(k)$ in $B_k$.  Note that $\alpha_{l(k)} \leq \alpha_{l(k)+1} \leq \dots \leq \alpha_{h(k)}$. If Condition (C1) is met, we are done. So assume Condition (C1) is \emph{not} met. That is, assume that $R_{l(k)} < R_{h(k)}/4$. Note that this is possible only if $l(k) \neq h(k)$ and, thus, $l(k) < h(k)$. Further note that if $\alpha_{h(k)} \leq 1/2$ then Condition (C2) would be met by $i = h(k)$. So the only cases left are those where for a non-empty suffix of the indices $l(k), \dots, h(k)$ it holds that $\alpha_j > 1/2$.

We claim that it can't be that $\alpha_j > 1/2$ for all $j \in \{l(k), \dots, h(k)\}$. Indeed, if this was the case, we could use that for all $j' = l(k) + 1, \dots, h(k)$ by the definition of $\alpha_{j'}$
\[
{R_{j'}-R_{j'-1}} = \frac{1}{\alpha_{j'}} (c_{j'}-c_{j'-1}) \leq 2 \cdot (c_{j'}-c_{j'-1}).
\]
Summing this inequality over all $j' = l(k) + 1, \dots, h(k)$ would give us
\begin{align*}
R_{h(k)} - R_{l(k)} = \sum_{j' = l(k)+1}^{h(k)} (R_{j'}-R_{j'-1}) \leq \sum_{j' = l(k)+1}^{h(k)} 2 \cdot (c_{j'}-c_{j'-1}) = 2 \cdot(c_{h(k)} - c_{l(k)}).
\end{align*}
Since $c_{h(k)} < 2 c_{l(k)}$ and $c_{l(k)} \leq R_{l(k)}$ this would show
\[
R_{l(k)} \geq \frac{1}{3} \cdot R_{h(k)},
\]
but this would contradict our assumption that Condition (C1) is not met.

So there must be a largest index $i$ with $l(k) \leq i < h(k)$ for which it holds that $a_i \leq 1/2$ and $\alpha_{i'} > 1/2$ for all $i' > i$. We claim that this $i$ satisfies Condition (C2). It certainly has $a_i \leq 1/2$. For the expected outcome we can use the same argument that we used when we assumed that all the $\alpha_j$'s are strictly positive to conclude that
\[
R_{h(k)} - R_i = R_{h(k)} - R_{i'-1} \leq 2 \cdot (c_{h(k)} - c_{i'-1}) = 2 \cdot (c_{h(k)} - c_{i}).
\]
Because the actions in each bucket are sorted by costs, $c_i \geq c_{l(k)}$. Also, as we have argued before, $c_{h(k)} < 2 c_{l(k)}$ and $c_{l(k)} \leq R_{l(k)}$. So,
\[
R_{h(k)} - R_i \leq 2 \cdot (c_{h(k)} - c_{i}) \leq 2 c_{l(k)} \leq 2 R_{l(k)}.
\]
But now because Condition (C1) is not met
\[
R_{h(k)} - R_i \leq 2 R_{l(k)} \leq \frac{1}{2} R_{h(k)},
\]
which shows that $R_i \geq R_{h(k)}/2$ as claimed. This established Claim \ref{cla:C1-or-C2}.
\end{myproof}

The high level idea now is that in each bucket $B_k$ we will identify a linearly implementable action $\tau(k)$ whose expected payoff to the principal is at least one quarter of that bucket's contribution to the sum in on the RHS of inequality (\ref{eq:upx}).

Consider a fixed bucket $B_k$. We say that bucket $B_k$ is of Type 1 if it meets condition (C1) and of Type 2 if it meets condition (C2). For Type 1 buckets we choose $\tau(k) = l(k)$, and for Type 2 buckets we choose $\tau(k) = i$.

Then for Type 1 buckets:
\[
(1-\alpha_{h(k),h(k)-1}) R_{h(k)} \leq (1-\alpha_{l(k)}) R_{h(k)} \leq 4 \cdot (1-\alpha_{l(k)}) R_{l(k)} = 4 (1-\alpha_{\tau(k)}) R_{\tau(k)}
\]

And for Type 2 buckets:
\[
(1-\alpha_{h(k),h(k)-1}) R_{h(k)} \leq R_{h(k)} \leq 4 (1-\alpha_i) R_{i} = 4 (1-\alpha_{\tau(k)}) R_{\tau(k)}
\]

Where for the derivation of the inequalities for Type 1 buckets we used Claim 1 in the proof of Theorem \ref{thm:upper-bound-H}, which shows that $\alpha_{h(k),h(k)-1} \geq \alpha_{l(k)}$, and for Type 2 buckets we used that $\alpha_{h(k),h(k)-1} \geq 0$.

We thus get,
\begin{align*}
OPT &\leq \sum_{k \in [L]} (1-\alpha_{h(k),h(k)-1})R_{h(k)}\\
	&\leq L \cdot \max_{k \in [L]} \{(1-\alpha_{h(k),h(k)-1})R_{h(k)}\}\\
	&\leq 4 L \cdot \max_{k \in [L]} \{(1-\alpha_{\tau(k)}) R_{\tau(k)}\}\\
	&\leq 4 L \cdot ALG. \qedhere
\end{align*}
\end{proof}

\begin{proof}[Proof of Corollary \ref{cor:lower-bound-C}]
	Consider the construction of the lower bound in Theorem 3.3 (Workshop version) with $\epsilon = 1/2$. Then $c_i = \frac{1}{2}(2^i-i-1)$ for all $1 \leq i \leq n$. In particular, $c_{n} \in  (\frac{1}{2}\cdot2^{n-1},\frac{1}{2}\cdot2^n) = (\frac{C}{2},C)$ where we used $\log(2C) \geq 3$. With an arbitrary contract the principal can guarantee himself an expected payoff of $OPT = n-\epsilon(n-1) > \frac{n}{2}$, while with a linear contract he can achieve at most $ALG \leq 1.$ So $OPT/ALG \geq \frac{n}{2} = \Omega(\log C)$.
\end{proof}

\section{Appendix on Beyond Linear Contracts}
\label{appx:LB-monotone}

The instance we construct to prove Theorem \ref{thm:mon-lower-bound} is based upon the construction of Theorem~\ref{thm:strong-LB} for $n-1$ actions, with an additional $n$th high-cost action. Intuitively, the IC constraint with this extra action together with monotonicity enforce relatively homogeneous payments over the highest outcomes, whereas the optimal contract requires a single high payment for the second-highest outcome.

\begin{proof}[Proof of Theorem \ref{thm:mon-lower-bound}]

Let $\epsilon,\delta,\gamma>0$ be vanishingly small, and define an $m$-outcome vector and $n\times m$ distribution matrix as follows, where $m=4$:
\begin{eqnarray*}
	x &=& 
	\begin{pmatrix}
		0 & \frac{1}{\epsilon^{n-2}} & \frac{1}{\epsilon^{n-2}} + \gamma & \frac{1}{\epsilon^{n-2}} + 2\gamma
	\end{pmatrix},\\
	F &=& 
	\begin{pmatrix}
		1-\epsilon^{n-2} & \epsilon^{n-2} & 0 & 0\\			1-\epsilon^{n-1} & \epsilon^{n-1} & 0 & 0\\
		\vdots & \vdots & \vdots & \vdots \\
		1-\epsilon^{n-i-1} & \epsilon^{n-i-1} & 0 & 0\\		
		\vdots & \vdots & \vdots & \vdots \\
		1-\epsilon & \epsilon & 0 & 0\\
		0 & 1-\delta & \delta & 0\\
		0 & 0 & 0 & 1
	\end{pmatrix}.
\end{eqnarray*}
The costs are $c_i=\frac{1}{\epsilon^{i-1}} - i + \epsilon(i-1)$ for $i\le n-1$, and $c_n=\frac{1}{\epsilon^{n-2}}$.

Observe that in the above setting, MLRP holds; 
the expected outcomes are $R_i=\frac{1}{\epsilon^{i-1}}$ for $i\le n-2$,
$R_{n-1}=\frac{1}{\epsilon^{n-2}}+\delta\gamma$, and
$R_n=\frac{1}{\epsilon^{n-2}}+2\gamma$;
the expected welfares are $R_i-c_i=i-\epsilon(i-1)$ for $i\le n-2$, $R_{n-1}-c_{n-1}=n-1-\epsilon(n-2)+\delta\gamma$, 
and $R_n-c_n=2\gamma$. We now analyze several possible contracts to establish the theorem.

\paragraph{Optimal contract.}

The optimal contract incentivizes action $n-1$ by setting $t_3=\frac{c_{n-1}}{\delta}$ (such that actions $a_1,a_{n-1}$ both have expected utility 0 for the agent, all other actions have negative expected utilities, and tie-breaking is in favor of the principal). The expected payoff to the principal is $R_{n-1}-\delta t_3\approx n-1$ (action $a_{n-1}$'s welfare). 

\paragraph{Optimal monotone contract incentivizing $a_i\ne a_{n-1}$.}

We claim that the payoff that the principal can achieve by incentivizing any action $a_i$ with $i \neq n-1$ is at most $1$, so that it suffices to consider action $a_{n-1}$.
For action $a_1$ the payoff to the principal is upper bounded by the welfare this action obtains, which is $1$. 
For every action $a_i$ where $2 \le i\le n-2$, dis-incentivizing deviation from $a_i$ to $a_{i-1}$ necessitates 
\begin{align*}
(1-\epsilon^{n-i-1})t_1 + \epsilon^{n-i-1}t_2 - c_i &\geq (1-\epsilon^{n-i}) t_1 + \epsilon^{n-i}t_2 - c_{i-1}\\
\Leftrightarrow \hspace{78pt} (1-\epsilon)\epsilon^{n-i-1}t_2 &\geq \big(c_i - c_{i-1}\big) + (1-\epsilon) \epsilon^{n-i-1}t_1\\
\Rightarrow \hspace{133pt} t_2 &\geq \frac{1}{(1-\epsilon)\epsilon^{n-i-1}}\big(c_i - c_{i-1}\big),
\end{align*}
where
\begin{align*}
c_i - c_{i-1} = \left(\frac{1}{\epsilon^{i-1}}-i+\epsilon(i-1)\right) - \left(\frac{1}{\epsilon^{i-2}-(i-1)+\epsilon(i-2)}\right) = (1-\epsilon) \left(\frac{1}{\epsilon^{i-1}}-1\right).
\end{align*}
We get that
\begin{align*}
t_2 \geq \frac{1}{\epsilon^{n-i-1}}\left(\frac{1}{\epsilon^{i-1}}-1\right),
\end{align*}
and the payoff to the principal is at most
\begin{align*}
\frac{1}{\epsilon^{i-1}} - \epsilon^{n-i-1}\frac{1}{\epsilon^{n-i-1}}\left(\frac{1}{\epsilon^{i-1}-1}\right) = 1.
\end{align*}
The expected welfare of $a_n$ is almost zero. 

\paragraph{Optimal monotone contract incentivizing $a_{n-1}$.}

It is w.l.o.g.~to set $t_1=0$. 
To dis-incentivize deviations from $a_{n-1}$ to $a_{n-2}$ and to $a_n$:
\begin{eqnarray}
(1 - \delta)t_2 + \delta t_3 - c_{n-1} &\ge& \epsilon t_2 - c_{n-2},\label{eq:IC-down}\\
(1-\delta)t_2+\delta t_3-c_{n-1} &\ge& t_4-c_n.\label{eq:IC-up}
\end{eqnarray}
From \eqref{eq:IC-up} and monotonicity we have that 
$(1-\delta)t_2+\delta t_3 \ge t_4 - (c_n-c_{n-1}) \ge t_3 - (c_n-c_{n-1})$. So: 
$$
t_2 \ge t_3 - \frac{c_n-c_{n-1}}{1-\delta}.
$$ 
%
Combining this with \eqref{eq:IC-down} we get $(1-\delta-\epsilon)t_2\ge c_{n-1} - c_{n-2} -\delta t_3 \ge c_{n-1} - c_{n-2} -\delta t_2 - \frac{\delta(c_n-c_{n-1})}{1-\delta}$, and after rearranging,
$$
(1-\epsilon)t_2 \ge c_{n-1} - c_{n-2} - \frac{\delta(c_n-c_{n-1})}{1-\delta}.
$$
Since $c_{n-1} - c_{n-2}=\frac{1-\epsilon}{\epsilon^{n-2}} - (1-\epsilon)$,
$$
t_2\ge \frac{1}{\epsilon^{n-2}} - 1 - \frac{\delta(c_n-c_{n-1})}{(1-\delta)(1-\epsilon)}.
$$
So the expected payment in the optimal monotone contract incentivizing action $a_{n-1}$ is at least $\frac{1}{\epsilon^{n-2}}-1$ minus a term that is vanishing with $\delta$, leaving an expected payoff of $\approx 1$ for the principal.
This completes the proof of Theorem \ref{thm:mon-lower-bound}.
\end{proof}


%
%

\end{document}